\documentclass{article} 
\usepackage{iclr2025_conference,times}
 \usepackage{graphicx}

\usepackage{amsmath,amsfonts,bm}

\def\eqref#1{equation~\ref{#1}}

\def\1{\bm{1}}

\DeclareMathAlphabet{\mathsfit}{\encodingdefault}{\sfdefault}{m}{sl}
\SetMathAlphabet{\mathsfit}{bold}{\encodingdefault}{\sfdefault}{bx}{n}

\usepackage[table]{xcolor}
\usepackage{hyperref}
\usepackage{url}
\usepackage{siunitx}
\usepackage{subcaption}  %
\usepackage{booktabs}
\usepackage{multirow}
\usepackage{ragged2e}

\iclrfinalcopy 
\title{FinEvo: From Isolated Backtests to Ecological Market Games for Multi-Agent Financial Strategy Evolution}

\author{
Mingxi Zou\textsuperscript{1,2*} \quad
Jiaxiang Chen\textsuperscript{1,*,‡} \quad
Aotian Luo\textsuperscript{1} \quad
Jingyi Dai\textsuperscript{1} \quad
\textbf{Chi Zhang\textsuperscript{2}} \\
\textbf{Dongning Sun\textsuperscript{3,†}} \\
\textbf{Zenglin Xu\textsuperscript{1,†}} \quad
\textsuperscript{1}Fudan University \\
\textsuperscript{2}Tensorpacific \\
\textsuperscript{3}Peng Cheng Laboratory \\
\texttt{\{mxzou24, jiaxiangchen23\}@m.fudan.edu.cn}
}

\begin{document}

\maketitle

\footnotetext[1]{Equal contribution.}
\footnotetext[2]{Corresponding authors.}
\footnotetext[3]{Project lead.}

\begin{abstract}
Conventional financial strategy evaluation relies on isolated backtests in static environments. Such evaluations assess each policy independently, overlook correlations and interactions, and fail to explain why strategies ultimately persist or vanish in evolving markets. We shift to an ecological perspective, where trading strategies are modeled as adaptive agents that interact and learn within a shared market.
Instead of proposing a new strategy, we present \textbf{FinEvo}, an ecological game formalism for studying the evolutionary dynamics of multi-agent financial strategies.
At the individual level, heterogeneous ML-based traders—rule-based, deep learning, reinforcement learning, and large language model (LLM) agents—adapt using signals such as historical prices and external news. At the population level, strategy distributions evolve through three designed mechanisms—selection, innovation, and environmental perturbation—capturing the dynamic forces of real markets. Together, these two layers of adaptation link evolutionary game theory with modern learning dynamics, providing a principled environment for studying strategic behavior.
Experiments with external shocks and real-world news streams show that FinEvo is both stable for reproducibility and expressive in revealing context-dependent outcomes. Strategies may dominate, collapse, or form coalitions depending on their competitors—patterns invisible to static backtests.
By reframing strategy evaluation as an ecological game formalism, FinEvo provides a unified, mechanism-level protocol for analyzing robustness, adaptation, and emergent dynamics in multi-agent financial markets, and may offer a means to explore the potential impact of macroeconomic policies and financial regulations on price evolution and equilibrium.

\end{abstract}

\vspace{-0.2em}
\section{Introduction}
\vspace{-0.2em}
Financial markets are among the most complex and adaptive systems, where outcomes depend not only on external signals but also on the collective behavior of participants. 
Most evaluation paradigms abstract away such interactions. 
Deep learning emphasizes predictive accuracy, from attention-based architectures such as TimeMixer~\citep{wang2024timemixer} to foundation-style models like Chronos~\citep{ansari2024chronos}. 
Reinforcement learning extends to sequential decision-making, 
with recent advances in \emph{offline Reinforcement learning(RL)} for data-efficient policy learning~\citep{kumar2020conservative} 
and \emph{risk-sensitive RL} for robustness under uncertainty~\citep{duan2016benchmarking}. 
Meanwhile, LLM-based agents extend trading logic with reasoning and tool use: 
FinCon frames trading as structured financial reasoning with CVaR~\citep{yu2024fincon}, 
FinGPT adapts foundation models for forecasting and decision support\citep{liu2023fingpt}, 
and TradingAgents explore general-purpose planning and execution capabilities~\citep{xiao2024tradingagents}. 


Yet most existing approaches optimize in isolation, neglecting the ecological interdependencies that shape real-world market performance. Rather than proposing new trading strategies or predictive models, our goal is to develop a principled ecological game formalism—serving as a “financial wind tunnel” to stress-test systemic risks and market shocks within a controlled environment(see Appendix~\ref{app:scientific}).


Simulation infrastructures such as ABIDES~\citep{byrd2020abides} and Mesa~\citep{masad2015mesa} support market microstructure and agent-based experimentation, yet they operate primarily as execution environments without modeling population-level adaptation or long-horizon strategic dynamics. PyMarketSim~\citep{mascioli2024financial} further offers high-fidelity order-book execution for deep RL agents, but remains centered on single-agent optimization rather than collective ecological behavior. TraderTalk~\citep{vidler2024tradertalk} explores LLM-driven behavioral trading agents, yet focuses on bilateral interactions without a framework for systemic market evaluation.

Beyond these platforms, GPU-accelerated limit order book simulators such as JAX-LOB~\citep{frey2023jax} and fast agent-based frameworks for trading latency and RL~\citep{belcak2021fast} advance high-fidelity training on historical order-flow data, while ecosystem-style models such as HFTE~\citep{mahdavi2017introducing} examine multi-species interactions among high-frequency strategies. However, none of these lines of work offer \textbf{a unified evolutionary game-theoretic formulation} for characterizing adaptation, population flows, and systemic impact in heterogeneous trading ecologies—a gap that FinEvo focuses on (see Table~\ref{tab:comparison}).

To realize this vision, we introduce \textbf{FinEvo}, an ecological game formalism for studying the evolutionary dynamics of multi-agent financial strategies. 
The agent population spans rule-based, deep learning, reinforcement learning, and large language model (LLM) traders, each operating under parameterized policies and capital constraints, with prices endogenously determined by continuous double-auction clearing. 
Inspired by the replicator–mutator equation~\citep{nowak2006evolutionary}, FinEvo formalizes population dynamics through selection, innovation, and environmental perturbation, capturing adaptation, extinction, and ecosystem-level phenomena such as dominance cycles and regime shifts, enabling principled evaluation of robustness and adaptation under realistic market interaction. We provide a more detailed review of related work in Appendix~\ref{relatework}.

In summary, our contributions are:

1) We introduce \textbf{FinEvo},an ecological game formalism that integrates heterogeneous traders with evolutionary population dynamics, formalized by the \textbf{FinEvo SDE}. Its mechanism decomposition unifies selection, innovation, and environmental perturbation, elevating evaluation from instance-level returns to \textit{mechanism-level} behavior.
2) Under mild conditions, we establish minimal theoretical guarantees for the FinEvo SDE—simplex invariance, positivity, and existence/uniqueness—and derive a macro-level variance decomposition that attributes system volatility to selection, innovation, and perturbation.
3) We demonstrate through experiments with shocks and real-world signals that FinEvo reveals context-dependent performance, robustness, and emergent coalition dynamics that static backtests cannot capture.

Together, these contributions establish FinEvo as a principled basis for studying robustness and adaptation in multi-agent financial markets.


\vspace{-0.2em}
\section{Ecological Games: Problem Formulation and Evolutionary Dynamics}
\vspace{-0.2em}
We model financial strategy evaluation as an \emph{ecological game}, where a population of $N$ heterogeneous trader agents interact in a shared, dynamically evolving market. Each agent follows a distinct policy and adapts through feedback between market outcomes and external signals. This formulation allows us to study both the performance of individual strategies and the collective dynamics of selection, innovation, and perturbation.
\vspace{-0.2em}
\paragraph{Agent behavior and market feedback.}
At each time step $t$, agent $i$ selects an action 
$
a_{i,t} \sim \pi_{k(i)}(\mathcal{I}_t),
$
where $\pi_{k(i)}$ is the policy of type $k$, conditioned on recent prices 
$\{P_{t-\tau},\dots,P_t\}$ and exogenous signals $E_t$. And the action space includes \emph{market buy, market sell, limit buy, limit sell, and hold}.

The market aggregates all actions and shocks $\xi_t$ into the next price $P_{t+1}$, with the detailed matching mechanism given in Appendix~\ref{app:matching},
\[
P_{t+1} = F\big(P_t, \{a_{i,t}\}_{i=1}^N, \xi_t\big).
\]

Executed trades then update portfolio holdings. Specifically, let $C_{i,t}$ denote the cash holdings and $S_{i,t}$ the assets holdings of agent $i$ at time $t$, while $P_t$ denotes the market-clearing price.
$
C_{i,t+1} = C_{i,t} - a_{i,t}^{\text{buy}} P_t + a_{i,t}^{\text{sell}} P_t,
$ and $
S_{i,t+1} = S_{i,t} + a_{i,t}^{\text{buy}} - a_{i,t}^{\text{sell}}.
$
For strategy $\pi_k$, the realized payoff over $[t, t{+}1]$ is
$
\hat f_{k,t} = \frac{1}{N_{k,t}} 
\sum_{i:\pi_{k(i)}=\pi_k}
\Big[C_{i,t+1}+S_{i,t+1}P_{t+1}-(C_{i,t}+S_{i,t}P_t)\Big],
$
representing the average incremental wealth change of its adopters.

\vspace{-0.2em}
\paragraph{Individual adaptation.}
Each strategy maintains an internal parameter state $\Psi_{k,t}$ that governs how it maps information $\mathcal{I}_t$ into actions. This state evolves through a generic adaptation operator
\[
\Psi_{k,t+1} = \mathcal{A}\big(\Psi_{k,t}, \hat f_{k,t}, E_t\big),
\]
where $\mathcal{A}$ abstracts diverse mechanisms of learning and adjustment.  
In practice, this update may correspond to temporal-difference or policy-gradient rules in reinforcement learning, gradient-based parameter tuning in deep learning models, in-context adaptation in large language models, or heuristic adjustment in rule-based strategies.  

Beyond reacting to immediate rewards, traders typically attempt to estimate a \emph{long-term value} of their strategies. To capture this, we associate with each $\pi_k$ a forward-looking value process $V_k(t)$, modeled as
$dV_k(t) = \lambda_k\big(f_k(E_t) - V_k(t)\big)\,dt + \nu_k\,dB_k(t)$,
where $f_k(E_t)$ denotes an environment-conditioned target payoff, $\lambda_k>0$ controls the adjustment speed, and $\nu_k dB_k(t)$ introduces stochastic exploration.  
This Ornstein--Uhlenbeck process represents convergence of long-term value estimates toward external signals while retaining variability due to uncertainty, serving as a forward-looking anchor that guides subsequent population-level dynamics.

\vspace{-0.2em}
\paragraph{Population evolution (FinEvo SDE).}
Let $X_t=(x_1(t),\ldots,x_K(t))^\top \in \Delta^{K-1}$ denote the population distribution over $K$ strategies, with $\sum_k x_k(t)=1$.  
The evolution of $X_t$ can be abstractly represented by an operator
\[
x_{t+1} = \mathcal{G}\big(x_t, V(t), m_t, \xi_t\big),
\]
where $V(t)=(V_1(t),\ldots,V_K(t))^\top$ collects the forward-looking values estimated at the individual level, $m_t$ denotes the innovation distribution, and $\xi_t$ represents stochastic perturbations.

Concretely, we formulate $\mathcal{G}$ as a continuous-time stochastic differential equation, the \textbf{FinEvo SDE}, which integrates the three designed mechanisms—selection, innovation, and perturbation—into a unified dynamic:
\[
\begin{aligned}
dX_t
&= b(X_t,t)\,dt + \Sigma(X_t,t)\,dW_t \\[4pt]
&=\underbrace{\beta\mathrm{diag}(X_t)\,\big(V(t)-\bar V(t)\mathbf{1}\big)}_{\textbf{Selection}}\,dt
+\underbrace{\mu\,(m_t - X_t)}_{\textbf{Innovation}}\,dt
+\underbrace{\gamma\,\mathrm{diag}(X_t)\,\sigma\,P(X_t)\,dW_t}_{\textbf{Perturbation}},
\end{aligned}
\]
where $\bar V(t)=X_t^\top V(t)$ is the population mean payoff, 
$P(X_t)=I-\mathbf{1}X_t^\top$ projects perturbations onto the simplex tangent space, $\sigma=\mathrm{diag}(\sigma_1,\ldots,\sigma_K)$,$\gamma$ denotes the global perturbation scaling factor and $W_t$ is a $K$-dimensional Brownian motion.  
The target distribution $m^t \sim \mathrm{Dir}(\alpha_1,\ldots,\alpha_K)$ models social influence and random experimentation, ensuring $m^t \in \Delta^{K-1}$ and maintaining diversity across strategies.  

This formulation preserves the simplex, ensures positivity, and admits invariant measures under mild conditions (Appendix~\ref{app:theory-min}). While related to replicator–mutator equations, FinEvo grounds selection in payoff signals $V(t)$ and explicitly incorporates innovation and perturbations, yielding a more flexible and robust model of population evolution.
\vspace{-0.2em}
\paragraph{Independence Assumptions.}

The independence assumptions (A1--A3) used in our variance decomposition serve only to obtain a closed-form expression.  
They are \emph{not} required for the FinEvo SDE itself: the dynamics are unchanged when payoff and environmental shocks are correlated.  
In the correlated case, the decomposition simply acquires additional covariance terms (see Appendix~\ref{app:correlate pay shock}).

\vspace{-0.2em}
\paragraph{Mechanism decomposition.}
In discrete approximation, the short-run change in share $\Delta x_k$ is shaped by three forces:  
$\Delta x_k \approx 
\beta x_k^t\big(V_k^t-\bar V^t\big)
+\mu\,(m_k^t-x_k^t)
+x_k^t\Big(\gamma\sigma_k \eta_k^t - \gamma\sum_j x_j^t\sigma_j \eta_j^t\Big),
\quad \eta_k^t \sim \mathcal{N}(0,1).$

Correspondingly, their contributions to short-run volatility are
\[
\mathrm{Var}(\Delta x_k) 
\approx 
\underbrace{\beta^2 (x_k^t)^2 \tfrac{\nu_k^2}{2\lambda_k}}_{\text{Selection}}
+\underbrace{\mu^2 \tfrac{m_k(1-m_k)}{\alpha_0+1}}_{\text{Innovation}}
+\underbrace{\gamma^2\Big((x_k^t)^2\sigma_k^2 
- 2x_k^t\sigma_k \sum_j x_j^t\sigma_j 
+ \Big(\sum_j x_j^t\sigma_j\Big)^2\Big)}_{\text{Perturbation}} .
\]

Together, these decompositions show how selection, innovation, and perturbation jointly shape both the direction and volatility of evolutionary adaptation in financial markets.Detailed proof can be seen in the Appendix~\ref{app:v-decomposition}.

\vspace{-0.2em}
\section{Experiments and Analysis of Evolutionary Dynamics}
\label{sec:experiments}
\vspace{-0.2em}
\subsection{Experimental Setup}
\label{sec:setup}

To study how financial strategies evolve in an ecological setting, we build a multi-agent simulation environment. Within this environment, we test two scenarios: one with artificial shocks and another with real-world news events.We simulate a synthetic market with news shocks, multi-indicator redistribution, and eco-evolutionary updates (see Appendix~\ref{app:exp_setup}). A full discussion of the motivation for these scenarios is provided in Appendix~\ref{app:scenario-rationale}.

\vspace{-0.2em}
\paragraph{Agent Population}

The market is composed of 20 trader agent archetypes, each with distinct behaviors, information sources, and decision-making logic. For clarity, we group these agents into four broad categories: 
(1) \textbf{Rule-based Agents}, which include traditional strategies such as trend-following, mean-reversion~\citep{jegadeesh1993returns,de1985does}, and noise trading, along with fundamental (informational) traders; we further endow them with heterogeneous trading styles (e.g., \textit{aggressive} and \textit{conservative}) to reflect real-world diversity. 
(2) \textbf{Deep Learning Agents}, supervised models such as \textit{Informer} and \textit{TimeMixer}~\citep{zhou2021informer,wang2024timemixer}, which use neural networks to forecast price direction from market indicators. 
(3) \textbf{Reinforcement Learning (RL) Agents}, adaptive strategies (e.g., \textit{DoubleDQN}, \textit{PPO})~\citep{van2016deep,schulman2017proximal} that learn policies online by interacting with the market to maximize cumulative returns. 
(4) \textbf{Large Language Model (LLM) Agents}, which parse textual information such as news and announcements to infer market sentiment and generate trading signals (e.g., \textit{FinCon}, \textit{TradingAgents})~\citep{yu2024fincon,xiao2024tradingagents},with GPT-4o-mini\citep{hurst2024gpt} serving as the backbone language model. 
Details of each agent’s design are presented in Appendix~\ref{app:agent_libarary}. Abbreviations are adopted throughout the main text, with the full mapping summarized in Appendix~\ref{app:name_map}, and complete experimental parameter configurations can be found in Appendix~\ref{app:parameter}.

\vspace{-0.2em}
\paragraph{Simulation Scenarios}
Each scenario is evaluated through Monte Carlo experiments, with \textbf{128 independent runs} per configuration. We consider two settings: (1) \textbf{Artificial Shocks}, where predefined perturbations are injected into the market to examine system stability and resilience (see Appendix~\ref{app:shock}); and (2) \textbf{Real-World News}, simulating July 2023–July 2025 with actual historical news streams(e.g., GDELT and Reuters Markets; details in Appendix~\ref{app:data}). The market operates as a continuous double auction from 09:00 to 17:00 (slightly longer than real trading hours to allow market states to converge) each day, enabling event-driven and LLM agents to process real information and interact under realistic intraday constraints.
\vspace{-0.2em}
\paragraph{Analysis Metrics}
Our evaluation framework follows a micro–meso–macro structure (detailed in Appendix~\ref{app:metrics}). 
At the \textbf{micro level}, we assess individual performance and diversity through population proportion (market shares of each agent), agent win rate (frequency with which an agent’s return is ranked within the top three performers across evaluation trials), and financial metrics including return with confidence interval, Sharpe ratio, maximum drawdown, and turnover. 
At the \textbf{meso level}, we capture population diversity and interaction structures via concentration (HHI), strategy entropy, modularity (community strength), synergy vs.\ antagonism (correlation heatmaps), co-occurrence frequency, and mutual-information networks reflecting dependency and alliance reconfiguration. 
At the \textbf{macro level}, we evaluate system volatility and regime shifts by decomposing volatility into performance pressure ($V_{\text{select}}$), innovation/social mixing ($V_{\text{innovation}}$), and environmental Perturbation ($V_{\text{perturbation}}$),as well as measuring phase changes  , which captures shifts in the dominant strategy cluster. A formal definition is provided in Appendix~\ref{app:phase-change}. For visualization, raw time-series data (e.g., population shares and volatility metrics) are smoothed using a Savitzky–Golay filter. In addition, following standard practice in empirical finance, we report \textit{Excess Kurtosis}, \textit{Skewness}, \textit{Sharpe Dispersion}, and \textit{Vol-of-Vol} to quantify fat tails, return asymmetry, cross-strategy performance dispersion, and volatility clustering. These metrics, widely used to characterize Stylized Facts of real markets, provide an external validation of the realism of the endogenous dynamics generated by FinEvo.

\vspace{-0.2em}
\subsection{Simulation Under Artificial Shocks}
\vspace{-0.2em}
We test how the market ecology adapts under large external artificial disturbances. In our synthetic market, we inject news shocks between 12:00 and 13:30, considering both a \textit{strongly positive} and a \textit{strongly negative} scenario. This setting allows us to examine adaptation at the micro, meso, and macro levels.  
\vspace{-0.2em}
\paragraph{Micro: Market Share Dynamics.}
\begin{figure}[h]
    \centering
    \includegraphics[width=0.9\textwidth]{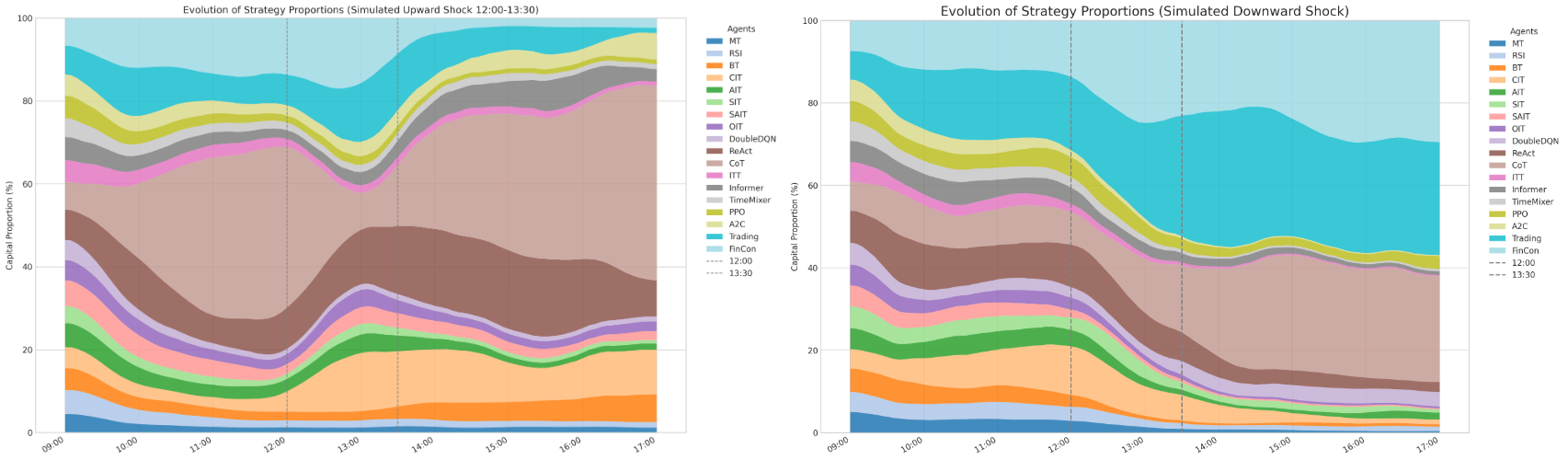}
    \caption{Agent population shares over time. Left: positive shock; Right: negative shock.}
    \label{fig:shock_results}
\end{figure}

Figure~\ref{fig:shock_results} shows that responses are strongly asymmetric.  
Positive shocks increase diversity by reducing CoT’s dominance and redistributing market share across multiple agents (e.g., CIT, ReAct), whereas negative shocks reinforce concentration, leading to LLM-dominated equilibria (FinCon, ReAct). Thus, shocks drive regime-dependent transitions in dominance concentration, highlighting dynamics not captured by static benchmarks.
\vspace{-0.2em}
\paragraph{Meso: Interaction Structures and Stability.}
\begin{figure}[h]
    \centering
    \includegraphics[width=0.9\textwidth]{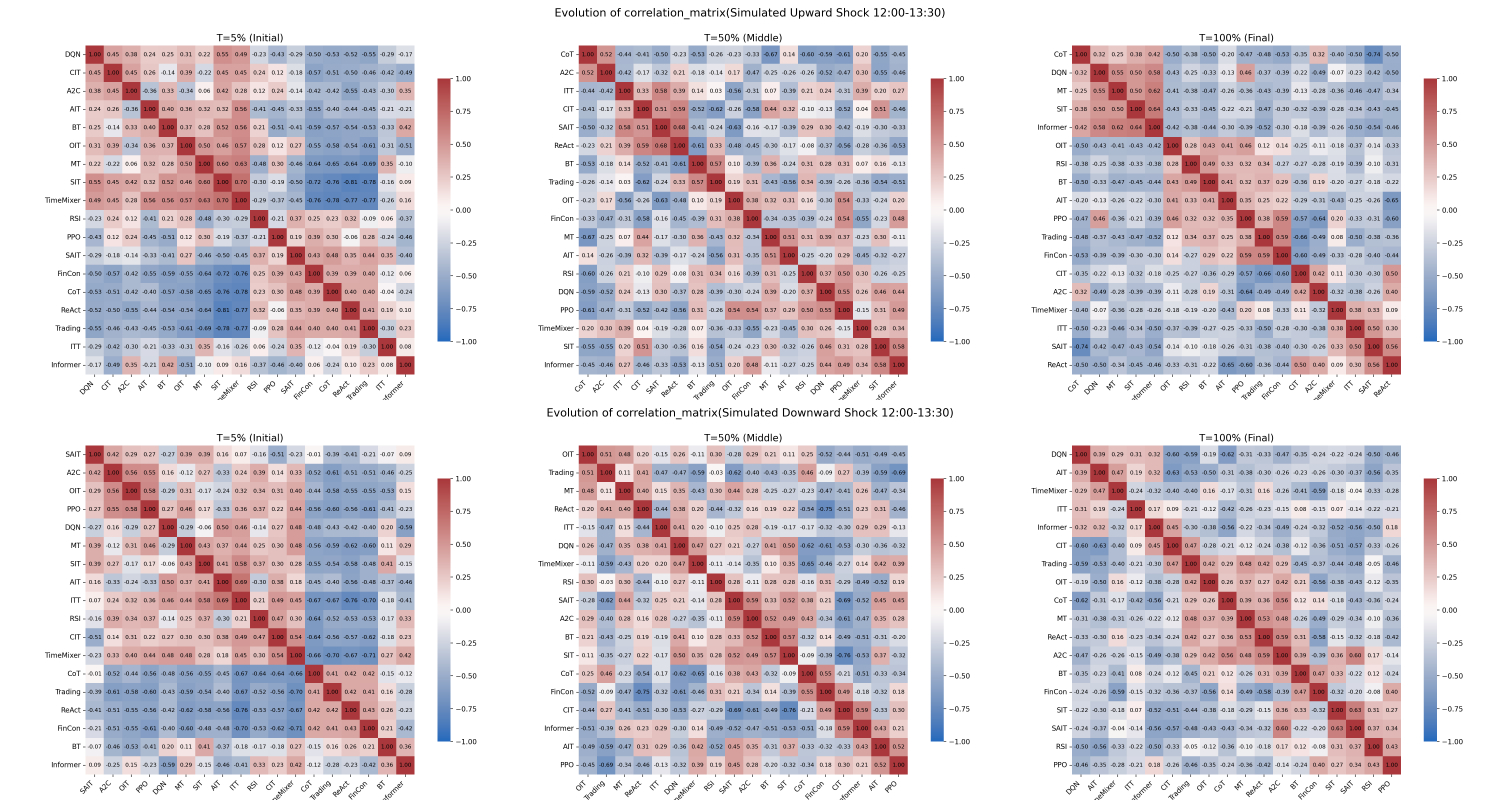}
    \caption{Correlation matrices over time under positive (top) and negative (bottom) shocks. 
    Red = cooperation, Blue = competition.}
    \label{fig:correlation_evolution}
\end{figure}

Figure~\ref{fig:correlation_evolution} tracks the reorganization of alliances.  
Before the shock, large red clusters indicate coordination groups where many strategies move together.  
During the shock, these clusters fragment as blue bands appear, reflecting stronger competition and divergence.  
After the shock, new cooperative clusters form, but their structures diverge: pluralistic in the positive case, concentrated around LLMs in the negative case.  
This contrast highlights the path-dependent nature of adaptation.  

\begin{figure}[htbp]
    \centering
    \begin{subfigure}{0.48\textwidth}
        \includegraphics[width=\linewidth]{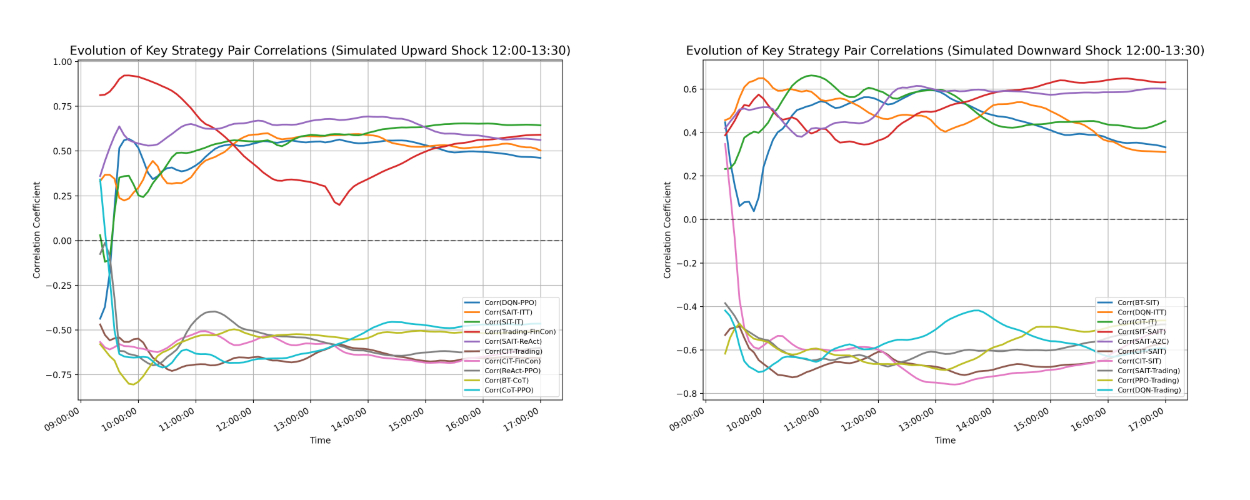}
        \caption{Strategy-pair correlations under shocks. }
        \label{fig:key_pair_correlation}
    \end{subfigure}
    \hfill
    \begin{subfigure}{0.47\textwidth}
        \includegraphics[width=\linewidth]{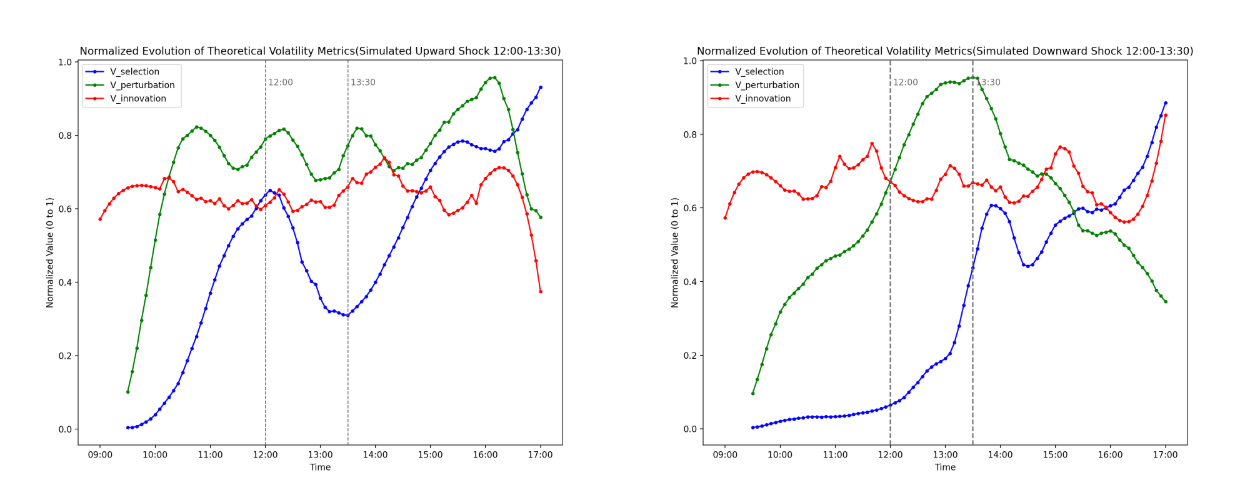}
        \caption{Volatility decomposition under shocks. }
        \label{fig:volatility_metrics}
    \end{subfigure}
    \caption{Responses under external shocks: 
    Left = positive shock, Right = negative shock. 
    In (b), Blue = selection, Red = innovation, Green = perturbation.}
    \label{fig:shock_meso_macro}
\end{figure}

\vspace{-0.2em}
Zooming into individual pairs (Figure~\ref{fig:key_pair_correlation}), we observe three consistent patterns:  
(1) some pairs maintain persistent cooperation,  
(2) others maintain persistent competition, and  
(3) shocks induce only localized reorganization.  
Thus, while alliances shift, the broader cooperation–competition backbone remains intact, providing structural stability.  
\vspace{-0.2em}
\paragraph{Macro: Volatility Decomposition.}
Figure~\ref{fig:volatility_metrics} decomposes volatility into three components:  
selection ($\mathbf{V_{\mathrm{selection}}}$), innovation ($\mathbf{V_{\mathrm{innovation}}}$), and perturbation ($\mathbf{V_{\mathrm{perturbation}}}$).  
Across both scenarios, shocks trigger a spike in perturbation, destabilizing alliances.  
Afterward, selection dominates: in the positive case, it supports a pluralistic equilibrium, while in the negative case, it drives convergence toward LLM dominance.  
Innovation remains steady, acting as a background force that preserves diversity rather than driving short-term change.  
\vspace{-0.2em}
\paragraph{Summary.}  
External shocks reshape the market at all levels: they break monopolies or reinforce concentration (micro), fragment and reorganize alliances (meso), and amplify volatility before selection restores order (macro).  
Overall, adaptation is governed by the interplay of selection, innovation, and perturbation. Shocks amplify perturbation in the short run, but long-run equilibria are primarily shaped by selection, with innovation maintaining diversity.  

For completeness, we also analyze the evolution of co-occurrence matrices and mutual-information 
networks under shocks, which provide complementary evidence of robustness from the perspective of 
networked interactions. The detailed results are reported in Appendix~\ref{appendix:shock_networks}.
\vspace{-0.2em}
\subsection{Empirical Analysis of Intra-Day Evolutionary Dynamics} 
\vspace{-0.2em}
After examining the adaptive dynamics under simulated shocks,  
we next turn to \emph{real-world} data and analyze the  
\textbf{intra-day evolutionary dynamics} within a single trading day.  
This allows us to validate whether the eco-evolutionary patterns  
observed in controlled experiments persist under actual market conditions. 

\begin{figure}[h]
    \centering
    \includegraphics[width=1.0\textwidth]{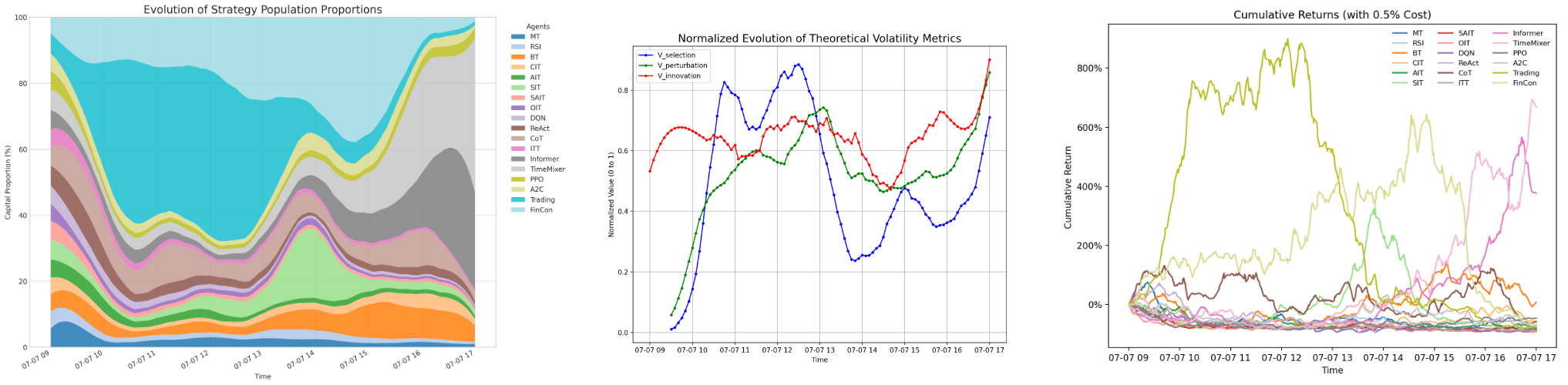}
    \caption{Intra-day evolution on July 7, 2025: (left) agent population proportions, 
    (middle) theoretical volatility metrics, and (right) cumulative returns under a 
    0.5\% transaction cost (including slippage).}
    \label{fig:real_intraday}
\end{figure}

\vspace{-0.2em}
\paragraph{Intra-Day Adaptation and Profitability.}

Figure~\ref{fig:real_intraday} shows that intra-day dynamics follow a multi-phase cycle: strong selection initially concentrates capital in dominant strategies, sustained innovation later restores diversity and balance, and heightened perturbation near the close disrupts alliances before a new equilibrium emerges. Informer and TimeMixer ultimately prevail, highlighting how evolutionary pressures jointly shape regime transitions within a single trading day. Beyond survival, a natural question is which agents convert adaptation into profitability under market frictions. Table~\ref{tab:agent-metrics-compact-3col} reports averages over the same rounds (Return with 95\% CI, Sharpe, MaxDrawdown, Turnover). Most agents lose, but a few achieve positive returns (shaded) despite large drawdowns, enabled by diversity-preserving innovation. Higher turnover further indicates that profitable agents trade more actively, highlighting a market ecology that is both selective and tolerant.

\vspace{-0.2em}
\begin{table}[t]
\centering
\caption{Performance metrics by agent (Return with 95\% CI, Sharpe, MaxDrawdown, Turnover;2025-07-07, N=128). 
Positive returns are shaded in gray.}
\label{tab:agent-metrics-compact-3col}
\setlength{\tabcolsep}{3pt} 
\resizebox{\textwidth}{!}{ 
\begin{tabular}{lrrrr lrrrr lrrrr}
\toprule
\textbf{Agent} & \textbf{Return} & \textbf{Sharpe} & \textbf{MaxDD} & \textbf{Turn.} &
\textbf{Agent} & \textbf{Return} & \textbf{Sharpe} & \textbf{MaxDD} & \textbf{Turn.} &
\textbf{Agent} & \textbf{Return} & \textbf{Sharpe} & \textbf{MaxDD} & \textbf{Turn.} \\
\midrule
MT        & -0.832 ± 0.064 &  -2.016 & -0.908 & 0.157 & ReAct     & -0.685 ± 0.025 &  15.972 & -0.978 & 0.207 & \cellcolor{gray!20} Informer  &  4.333 ± 0.022 &  26.928 & -0.771 & 0.362 \\
RSI       & -0.849 ± 0.023 &  -3.546 & -0.888 & 0.122 & CoT       & -0.560 ± 0.046 &   5.118 & -0.815 & 0.434 & TimeMixer \cellcolor{gray!20} &  6.738 ± 0.031 &  30.392 & -0.790 & 0.463 \\
BT \cellcolor{gray!20} &  0.034 ± 0.010 &  10.804 & -0.802 & 0.309 & ITT       & -0.896 ± 0.066 &  -1.228 & -0.910 & 0.119 & PPO       & -0.431 ± 0.033 &   6.641 & -0.859 & 0.123 \\
CIT       & -0.544 ± 0.032 &   6.265 & -0.844 & 0.163 & Trading   & -0.746 ± 0.048 &  -2.244 & -0.979 & 0.890 & A2C       & -0.733 ± 0.004 &   0.847 & -0.837 & 0.197 \\
AIT       & -0.807 ± 0.026 &  -3.121 & -0.838 & 0.126 & FinCon    & -0.685 ± 0.045 &  -2.605 & -0.958 & 0.845 & DQN       & -0.885 ± 0.034 &  13.436 & -0.955 & 0.139 \\
SIT       & -0.673 ± 0.057 &   0.624 & -0.948 & 0.334 & SAIT      & -0.866 ± 0.010 &  -0.933 & -0.909 & 0.099 & OIT       & -0.915 ± 0.032 &   2.265 & -0.942 & 0.130 \\
\bottomrule
\end{tabular}
}
\end{table}

\vspace{-0.2em}
\paragraph{Dynamic Alliance Reconfiguration.}
Beyond individual outcomes, we next examine the structural layer of interactions. 
Figure~\ref{fig:intraday_correlation} illustrates the intra-day evolution of 
agent-agent correlations, showing a dynamic interplay between 
\emph{synergistic alliances} and \emph{strategic antagonisms}. 
The market initially features multiple competing clusters with strong internal 
coordination, which are subsequently fragmented and reorganized under rising 
environmental perturbation. Toward the close, these transient shifts consolidate into 
a new regime characterized by restructured alliances and altered power balances. 
We further report the intra-day evolution of co-occurrence matrices and 
mutual-information networks in Appendix~\ref{app:intraday_0707}.  
\vspace{-0.2em}
\begin{figure}[htbp]
    \centering
    \includegraphics[width=1.0\textwidth]{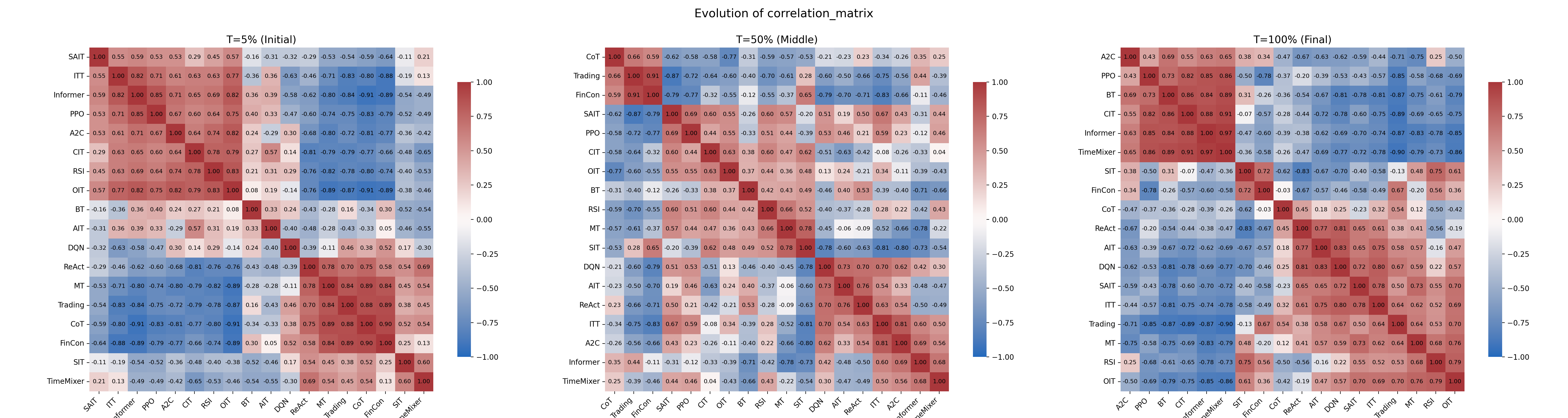}
    \caption{Dynamic evolution of intra-day correlation matrices on July 7, 2025. 
    Red blocks represent \emph{synergistic alliances}, while blue blocks indicate 
    \emph{strategic antagonism}.}
    \label{fig:intraday_correlation}
\end{figure}
\vspace{-0.2em}
\paragraph{Performance Rankings in Bull vs. Bear Markets.}
Finally, moving from intra-day fluctuations to market cycles, we evaluate whether 
these patterns persist across long-term bull and bear regimes. 
We approximate January--October 2022 as a \emph{bear market phase}, 
objectively defined by persistent downward dynamics in the synthetic environment, 
and October 2024--July 2025 as a \emph{bull market phase}, characterized by 
sustained recoveries and new highs. 
Figure~\ref{fig:winrate_bull_bear} compares podium finishes and 
top-3 win rates across these regimes.

\begin{figure}[htbp]
    \centering
    \includegraphics[width=1.0\textwidth]{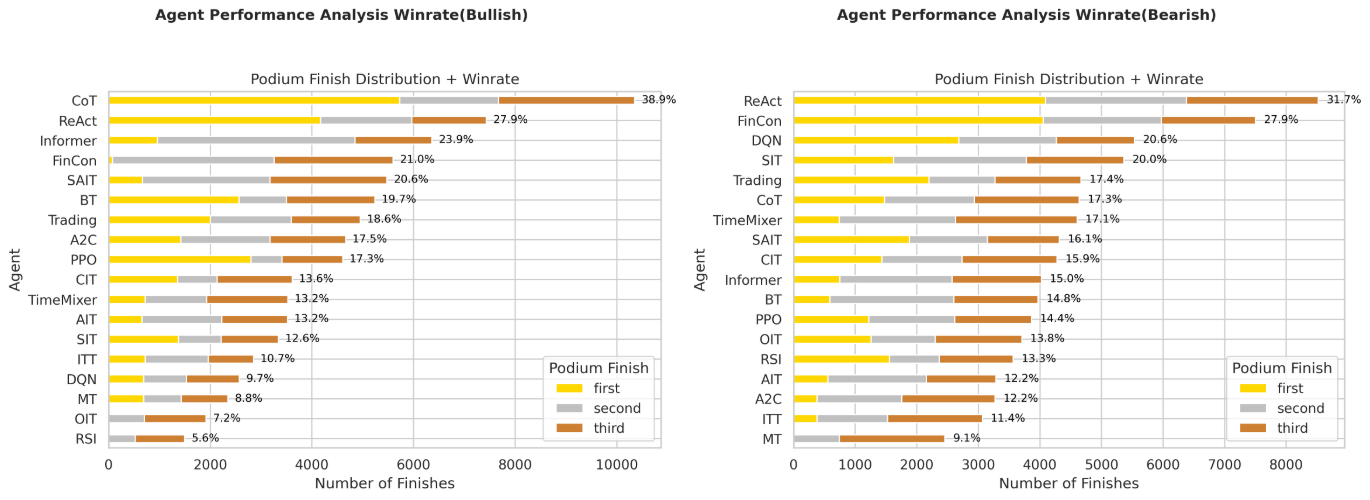}
    \caption{Agent performance rankings in bull vs. bear markets. }
    \label{fig:winrate_bull_bear}
\end{figure}

Our analysis shows that agent performance is highly \textbf{regime-dependent}: 
bull markets produce a concentrated dominance by a few strategies, 
whereas bear markets foster a more dispersed competitive landscape. 
Notably, \textbf{LLM-based agents consistently rank higher across both regimes}, 
suggesting that their ability to integrate diverse informational cues and adapt flexibly 
confers a significant robustness advantage over strategies based on fixed rules 
or purely quantitative signals.
\vspace{-0.2em}
\paragraph{Evolutionary Game Modeling vs. Agent-Based Simulation in Market Environments}
\begin{table}[]
\centering
\caption{Comparison of Ecological Metrics Across Evolutionary Game Modeling and Agent-Based Simulation}
\label{comparsion}
\resizebox{\textwidth}{!}{
\begin{tabular}{lcccccc}
\toprule
Platform \& Phase & \textbf{Entropy$_{\text{mean}}$} & \textbf{HHI$_{\text{mean}}$} & \textbf{Top1$_{\text{mean}}$} & \textbf{AvgAbsCorr$_{\text{mean}}$} & \textbf{Modularity$_{\text{mean}}$} & \textbf{PhaseChanges$_{\text{mean}}$} \\
\midrule
FinEvo (bull)     & 0.732 $\pm$ 0.016 & 0.226 $\pm$ 0.032 & 0.453 $\pm$ 0.024 & 0.441 $\pm$ 0.019 & 0.091 $\pm$ 0.028 & 3 $\pm$ 0.437 \\
FinEvo (bear)     & 1.167 $\pm$ 0.036 & 0.175 $\pm$ 0.029 & 0.241 $\pm$ 0.067 & 0.376 $\pm$ 0.039 & 0.063 $\pm$ 0.018 & 4 $\pm$ 0.348 \\
ABIDES (bull)     & 0.516 $\pm$ 0.045 & 0.539 $\pm$ 0.011 & 0.746 $\pm$ 0.031 & 0.653 $\pm$ 0.046 & 0.153 $\pm$ 0.023 & 1 $\pm$ 0.439 \\
ABIDES (bear)     & 0.623 $\pm$ 0.011 & 0.481 $\pm$ 0.012 & 0.615 $\pm$ 0.045 & 0.541 $\pm$ 0.022 & 0.132 $\pm$ 0.032 & 1 $\pm$ 0.344 \\
PyMarketSim (bull) & 0.539 $\pm$ 0.005 & 0.564 $\pm$ 0.019 & 0.773 $\pm$ 0.046 & 0.662 $\pm$ 0.039 & 0.159 $\pm$ 0.034 & 1 $\pm$ 0.332 \\
PyMarketSim (bear) & 0.662 $\pm$ 0.037 & 0.461 $\pm$ 0.015 & 0.719 $\pm$ 0.023 & 0.575 $\pm$ 0.007 & 0.144 $\pm$ 0.019 & 1 $\pm$ 0.329 \\
\bottomrule
\end{tabular}
}
\end{table}
Table~\ref{comparsion} compares the performance of FinEvo, ABIDES, and PyMarketSim during bull and bear markets. FinEvo exhibits a dynamic, adaptive environment where agents continuously adjust strategies, promoting greater diversity and competition. In contrast, ABIDES and PyMarketSim are more rigid, with fewer strategies dominating over time, limiting adaptability and diversity. This demonstrates FinEvo's strength in capturing market evolution, while the other platforms show more static behavior.

\vspace{-0.2em}
\subsection{Ablation Study }
\vspace{-0.2em}
\paragraph{Ablation on Evolutionary Pressures}

To disentangle the roles of the three evolutionary pressures 
(selection, innovation, perturbation), we remove each component in turn 
and compare the resulting population dynamics 
(Figure~\ref{fig:ablation}). 
Without $\mathbf{V_{\mathrm{selection}}}$ (left), competitive reinforcement vanishes 
and all agents remain at comparable proportions, producing a static, 
inefficient ecology. 
Without $\mathbf{V_{\mathrm{innovation}}}$ (middle), diversity collapses as FinCon and Trading 
rapidly dominate, leading to an oligopolistic lock-in. 
Without $\mathbf{V_{\mathrm{perturbation}}}$ (right), dynamics become smoother but rigid: 
FinCon and Trading still dominate, while adaptive reorganization is muted. 
Overall, selection drives efficiency, innovation sustains diversity, 
and disturbances from the environment can amplify shocks within the system, preventing premature lock-in and enabling regime shifts..

\begin{figure}[htbp]
    \centering
    \includegraphics[width=1.0\textwidth]{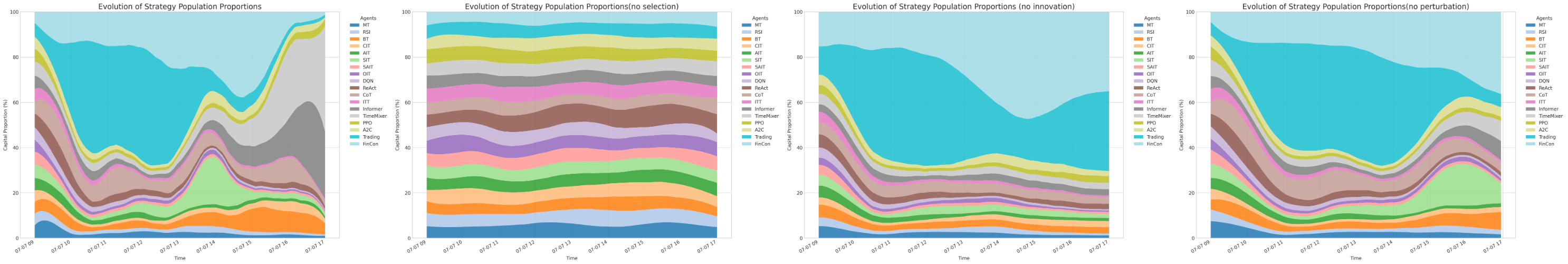}
    \caption{Ablation study of evolutionary pressures. 
    From left to right: removing \emph{selection}, \emph{innovation}, and \emph{perturbation}.}
    \label{fig:ablation}
\end{figure}

\vspace{-0.2em}

\begin{table}[h]
\centering
\scriptsize
\caption{Ablation study results with global ecological metrics (mean $\pm$ 95\% CI).}
\label{tab:ablation-metrics}
\renewcommand{\arraystretch}{1.05}
\resizebox{\textwidth}{!}{%
\begin{tabular}{lcccccc}
\toprule
\textbf{Setting} & \textbf{Entropy$_{\text{mean}}$} & \textbf{HHI$_{\text{mean}}$} & \textbf{Top1$_{\text{mean}}$} & \textbf{AvgAbsCorr$_{\text{mean}}$} & \textbf{Modularity$_{\text{mean}}$} & \textbf{PhaseChanges$_{\text{mean}}$} \\
\midrule
FinEvo                 & $0.828 \pm 0.14$ & $0.189 \pm 0.012$ & $0.344 \pm 0.015$ & $0.384 \pm 0.021$ & $0.074 \pm 0.009$ & $4 \pm 0.663$  \\
FinEvo (w/o selection) & $1.793 \pm 0.25$ & $0.081 \pm 0.108$ & $0.074 \pm 0.009$ & $0.139 \pm 0.018$ & $0.121 \pm 0.012$ & $14 \pm 0.767$ \\
FinEvo (w/o innovation) & $0.535 \pm 0.13$ & $0.477 \pm 0.014$ & $0.531 \pm 0.033$ & $0.553 \pm 0.024$ & $0.023 \pm 0.007$ & $1 \pm 0.483$  \\
FinEvo (w/o perturbation) & $0.719 \pm 0.04$ & $0.245 \pm 0.011$ & $0.376 \pm 0.017$ & $0.474 \pm 0.019$ & $0.036 \pm 0.008$ & $3 \pm 0.591$  \\
\bottomrule
\end{tabular}
}
\end{table}

We complement the single-day dynamics in Figure~\ref{fig:ablation} with aggregated statistics over the full evaluation period (Table~\ref{tab:ablation-metrics}). These global metrics provide a more robust picture: removing \textbf{selection} yields inflated entropy and frequent phase transitions, indicating unstable fragmentation; removing \textbf{innovation} collapses diversity and locks the ecology into an oligopoly; removing \textbf{perturbation} leads to higher concentration but fewer regime shifts, reflecting rigidity. Together, these results substantiate the earlier case study: selection promotes efficiency, innovation preserves diversity, and perturbation enables adaptive reorganization.

\begin{figure}[h]
    \centering
    \includegraphics[width=1.0\linewidth]{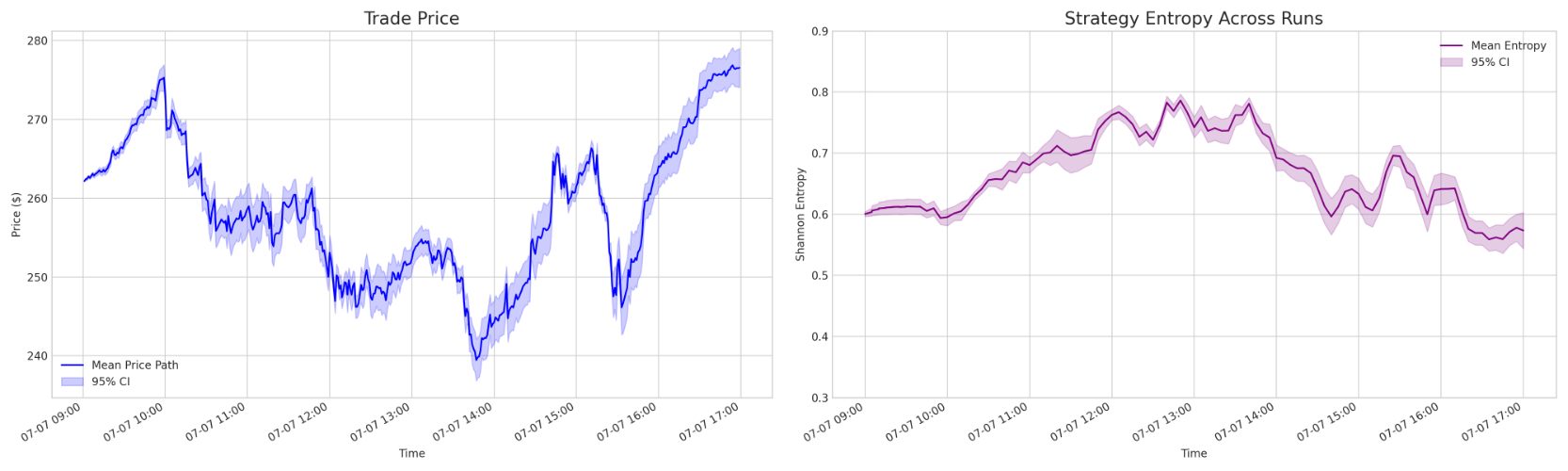}
    \caption{Robustness to random seeds (July 7, 2025).
Left: trade price trajectories; Right: strategy entropy.
Curves show means; shaded bands indicate 95\% confidence intervals across 128 runs.}
    \label{fig:ablation_robust}
\end{figure}

\vspace{-0.2em}
\paragraph{Ablation on Robustness(Random seeds).} 
We conducted an ablation study based on the simulation carried out on July 7, 2025, 
with 128 rounds, in order to evaluate the robustness of our framework under different random seeds. 
As shown in Figure~\ref{fig:ablation_robust}, we report the cross-run mean of trade price trajectories 
and strategy entropy, together with the shaded bands representing 95\% confidence intervals across runs. 

Despite minor run-to-run variations, the core patterns remain consistent: 
(i) on the price side, the market exhibits a stable upward trend following external shocks with bounded fluctuations; 
(ii) on the strategy side, the entropy remains persistently high, indicating sustained diversity and heterogeneity. 
The relatively narrow shaded regions suggest that different random seeds lead to similar dynamics, 
thereby demonstrating the stability and robustness of our framework under stochastic perturbations.
In addition, we report trade price trajectories under simulated exogenous shocks, 
which further demonstrate the robustness of our framework(see Appendix~\ref{app:impact_price})
for detailed results.

\vspace{-0.2em}
\paragraph{Ablation on Parameter Sensitivity.}
\begin{figure}[h]
    \centering
    \includegraphics[width=1.0\linewidth]{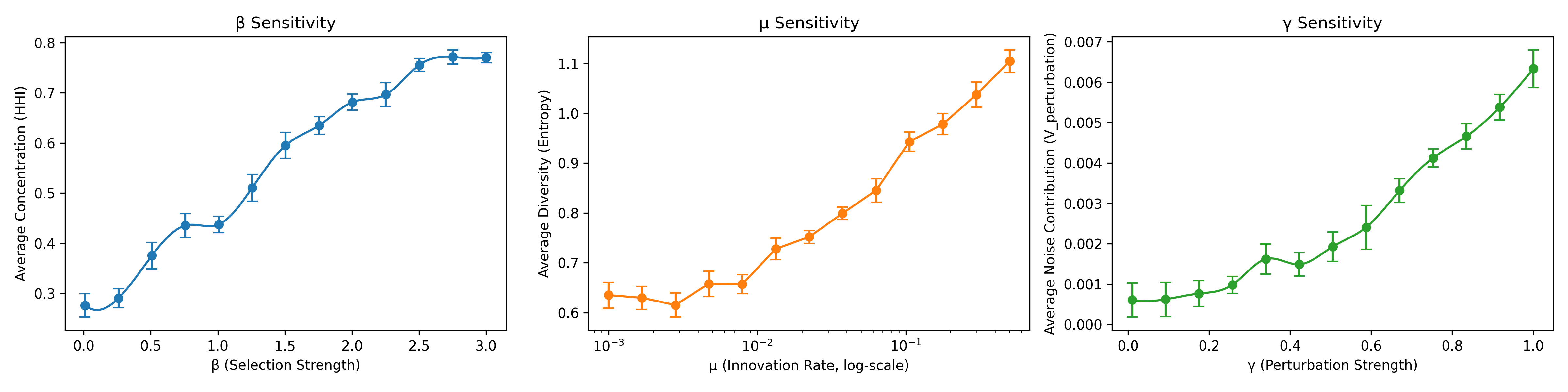}
    \caption{Sensitivity analysis on evolutionary parameters (simulation of July 7, 2025). Left: effect of selection strength $\beta$ on market concentration . Middle: effect of innovation rate $\mu$ on strategy diversity . Right: effect of perturbation intensity $\gamma$ on volatility contribution .Error bars indicate $95\%$ CI across 128 runs.}
    \label{fig:ablation_sensity}
\end{figure}
We further conduct sensitivity experiments on the three key parameters of our evolutionary framework: 
the selection strength $\beta$, the mutation rate $\mu$, and the environmental perturbation intensity $\gamma$. 
We report system-level outcomes using three complementary 
metrics: (i) average concentration (Herfindahl–Hirschman Index, HHI), (ii) average diversity (Shannon entropy), 
and (iii) average perturbation contribution $V_{\text{perturbation}}$. 
Figure~\ref{fig:ablation_sensity} illustrates the results. 
As $\beta$ increases, the average concentration rises significantly, suggesting that stronger selection 
pressure accelerates the dominance of a few strategies. 
In contrast, higher $\mu$ leads to increased diversity, confirming that mutation and social mixing mechanisms 
help maintain heterogeneity. 
Meanwhile, increasing $\gamma$ monotonically elevates $V_{\text{perturbation}}$, 
demonstrating the independent contribution of exogenous perturbations to volatility. We additionally conducted a $\pm 20\%$ perturbed-parameter robustness test; 
the results (Appendix~\ref{app:perturbed-parameters}) show that all stylized-fact 
metrics remain stable, confirming that FinEvo’s dynamics are not sensitive to initial agent parameters.

\vspace{-0.2em}
\paragraph{Ablation on Market Size $N$.}
\label{sec:ablation-N}
We examine the robustness of FinEvo with respect to the market size $N$, i.e., the number of agents participating in the ecology.  
Table~\ref{tab:n-ablation} reports stylized-fact and performance metrics under four market sizes $N \in \{60, 80, 100, 120\}$, for both bull and bear regimes (128 runs per setting).

\begin{table}[h!]
\centering
\caption{Ablation on market size $N$ (128 runs). 
Stylized-fact metrics remain stable across $N$. Values are mean $\pm$ 95\% CI.}
\label{tab:n-ablation}

\scriptsize
\begin{tabular}{lcccccc}
\toprule
\textbf{Regime} & \textbf{$N$} & \textbf{Ex. Kurtosis$_{\text{mean}}$$\uparrow$} & \textbf{Skewness$_{\text{mean}}$$\downarrow$} 
& \textbf{Sharpe Disp.$_{\text{mean}}$$\uparrow$} & \textbf{Vol-of-Vol$_{\text{mean}}$$\uparrow$} \\
\midrule
Bullish & 120 & $3.143 \pm 0.141$ & $-1.539 \pm 0.132$ & $1.673 \pm 0.056$ & $0.009 \pm 0.0004$ \\
        & 100 & $3.163 \pm 0.138$ & $-1.532 \pm 0.129$ & $1.669 \pm 0.061$ & $0.008 \pm 0.0004$ \\
        & 80  & $3.178 \pm 0.165$ & $-1.541 \pm 0.158$ & $1.657 \pm 0.073$ & $0.008 \pm 0.0005$ \\
        & 60  & $3.159 \pm 0.178$ & $-1.516 \pm 0.162$ & $1.659 \pm 0.067$ & $0.007 \pm 0.0005$ \\
\midrule
Bearish & 120 & $4.326 \pm 0.169$ & $-1.805 \pm 0.152$ & $1.759 \pm 0.091$ & $0.012 \pm 0.0006$ \\
        & 100 & $4.259 \pm 0.172$ & $-1.793 \pm 0.149$ & $1.732 \pm 0.089$ & $0.013 \pm 0.0007$ \\
        & 80  & $4.115 \pm 0.175$ & $-1.765 \pm 0.161$ & $1.741 \pm 0.138$ & $0.013 \pm 0.0007$ \\
        & 60  & $4.251 \pm 0.197$ & $-1.772 \pm 0.183$ & $1.725 \pm 0.119$ & $0.011 \pm 0.0007$ \\
\bottomrule
\end{tabular}
\end{table}
Across all values of $N$, FinEvo consistently maintains heavy-tailed PnL (high excess kurtosis), strong downside asymmetry (negative skewness), and comparable Sharpe dispersion and Vol-of-Vol, with mean returns fluctuating only within a narrow band.  
\vspace{-0.2em}
\subsection{Stylized-Fact Validation}
\label{sec:stylized-facts}
\vspace{-0.2em}

A key question is whether FinEvo’s endogenous price formation---despite Gaussian exogenous shocks---can reproduce classical market stylized facts \citep{cont2001empirical}. To address this, we compare stylized-fact metrics across \textbf{FinEvo}, \textbf{ABIDES}, and \textbf{PyMarketSim}, running 128 simulations in both bull and bear regimes and computing standard econometric measures (excess kurtosis, skewness, Sharpe dispersion, and Vol-of-Vol). FinEvo matches the heavy tails, negative skewness, and elevated volatility-of-volatility more closely than the baselines(Table~\ref{tab:stylized-facts}), and this behavior is robust: a correlated, heavy-tailed shock variant yields similar stylized-fact metrics (Appendix~\ref{app:correlated_shocks}, Table~\ref{tab:correlated-shock}).We also provide a small cross-asset robustness check in Appendix~\ref{app:cross-asset}.

\vspace{-0.2em}
\begin{table}[h!]
\centering
\caption{Comparison of stylized-fact metrics across simulators (128 runs). 
Values report mean $\pm$ 95\% CI. }
\label{tab:stylized-facts}
\scriptsize
\begin{tabular}{lcccc}
\toprule
\textbf{Simulator / Regime} 
& \textbf{Excess Kurtosis$_{\text{mean}}$$\uparrow$} 
& \textbf{Skewness$_{\text{mean}}$$\downarrow$} 
& \textbf{Sharpe Dispersion$_{\text{mean}}$$\uparrow$} 
& \textbf{Vol-of-Vol$_{\text{mean}}$$\uparrow$} \\ 
\midrule
FinEvo (bull) 
& $3.163 \pm 0.138$ 
& $-1.532 \pm 0.129$ 
& $1.669 \pm 0.061$  
& $0.008 \pm 0.0004$ \\

FinEvo (bear) 
& $4.259 \pm 0.172$ 
& $-1.793 \pm 0.149$ 
& $1.732 \pm 0.089$  
& $0.013 \pm 0.0007$ \\
\midrule

ABIDES (bull) 
& $0.315 \pm 0.041$ 
& $-0.209 \pm 0.036$ 
& $0.675 \pm 0.028$  
& $0.004 \pm 0.0002$ \\

ABIDES (bear) 
& $0.562 \pm 0.053$ 
& $-0.474 \pm 0.047$ 
& $0.836 \pm 0.033$  
& $0.007 \pm 0.0003$ \\

PyMarketSim (bull) 
& $0.446 \pm 0.044$ 
& $-0.186 \pm 0.032$ 
& $0.512 \pm 0.026$  
& $0.003 \pm 0.0002$ \\

PyMarketSim (bear) 
& $0.591 \pm 0.058$ 
& $-0.438 \pm 0.041$ 
& $0.906 \pm 0.034$  
& $0.005 \pm 0.0003$ \\
\bottomrule
\end{tabular}
\end{table}


\vspace{-0.2em}

\section{Conclusion and Future Work}

\vspace{-0.2em}
In this work, we introduce an ecological game formalism for modeling market ecology, which unifies selection, innovation, and environmental perturbation into a single mathematical formulation.
We implement the framework in a multi-agent financial market environment and, through large-scale simulations, demonstrate its ability to capture the evolution of stability, concentration, and diversity across different market phases, thereby helping to narrow the gap between theoretical analysis and empirical dynamics. Overall, our study establishes a three-layer mapping from mechanisms to metrics to behaviors, highlighting both the interpretability and robustness of the proposed framework. FinEvo is designed with modular components that naturally support further extensions. A brief discussion of potential future directions is provided in Appendix~\ref{app:future}.



\section{Ethics statement}
FinEvo provides an ecological game formalism for studying the evolution of financial strategies. 
The datasets used consist of publicly available financial market data and synthetic news streams, 
which do not involve personal or sensitive information; thus, privacy concerns are minimal. 
To reduce potential risks, our framework is intended for scientific investigation of market dynamics 
and policy evaluation, rather than for the development of deployable trading algorithms.
\section{Reproducibility statement}
We are committed to ensuring the reproducibility of our results. 
All datasets used in this work are publicly available (see Appendix~\ref{app:data}), 
and we will release our source code, model implementations, and configuration files upon publication. 
To guarantee robustness and stability, each experiment was repeated with 128 Monte Carlo runs, 
and we report averaged results along with 95\% confidence intervals (see Appendix~\ref{app:exp_setup}). 
Detailed descriptions of model architectures~\ref{app:agent_libarary}, hyperparameters\ref{app:parameter}, and evaluation matrics~\ref{app:metrics} 
are provided in the main text and appendix to facilitate independent verification. 
\bibliography{iclr2025_conference}
\bibliographystyle{iclr2025_conference}

\appendix

\renewcommand{\thesection}{\Alph{section}}
\renewcommand{\thesubsection}{\thesection.\arabic{subsection}}

\providecommand{\sectionautorefname}{Appendix}
\renewcommand{\sectionautorefname}{Appendix}
\section{Related Work}
\label{relatework}
\subsection{Financial Agents}
The concept of \emph{financial agents} abstracts market participants as decision-making entities with heterogeneous rules, objectives, and information sets. 
This framing emphasizes (i) heterogeneity --- agents differ in behavioral assumptions such as trend-following, mean-reversion, or news sensitivity; and (ii) interaction --- outcomes depend not only on external data but also on the actions of other participants. 
By treating strategies as agents, prior research provides a unifying abstraction that accommodates diverse modeling approaches.

Within this perspective, several families have emerged. 
\textbf{Rule-based agents} include momentum, mean-reversion, breakout, and pairs trading rules~\citep{brock1992profitability,gatev2006pairs}, as well as extensions with overlays for volatility or stop-loss control~\citep{harvey2018impact}. 
\textbf{Data-driven agents} employ modern sequence models: TCN and N-BEATS for long-horizon forecasting~\citep{oreshkin2019n}, Informer and FEDformer for efficient attention~\citep{zhou2021informer,zhou2022fedformer}, and foundation-style models such as TimeGPT and Chronos for zero-shot transfer~\citep{garza2023timegpt,ansari2024chronos}. 
\textbf{News- and multimodal agents} integrate text, prices, and sentiment, from FinBERT for financial language~\citep{huang2023finbert} to multimodal benchmarks such as FinMME~\citep{luo2025finmme}. 
\textbf{LLM-based agents} extend trading logic with reasoning and tool use, exemplified by FinCon, FinMem, and FinGPT~\citep{yu2024fincon,yu2025finmem,liu2023fingpt}, as well as hybrid systems that combine forecasting models with LLM reasoning~\citep{gan2025master}. 

These developments illustrate how financial agents have evolved into adaptive, data-rich decision makers. 
Yet most evaluations still focus on individual performance rather than population-level adaptation, motivating our ecological perspective where heterogeneous agents interact and co-adapt within shared markets.

\subsection{Reinforcement Learning}
Reinforcement learning (RL) provides a complementary perspective to our ecological formulation in FinEvo. 
Both view markets as sequential decision processes under uncertainty: RL formalizes policy optimization over state--action trajectories~\citep{sutton1998reinforcement}, 
while FinEvo frames the dynamics of strategy populations as stochastic replicator--mutator processes. 
The parallel is clear: maximizing long-horizon return in RL~\citep{mnih2015human,schulman2017proximal} 
mirrors survival and dominance of strategies in evolutionary games, though RL emphasizes a single adaptive policy while FinEvo studies heterogeneous populations.

RL draws on several traditions. 
Dynamic programming and optimal control inspire value-based methods~\citep{bellman1966dynamic,puterman2014markov}, 
bandit and online learning emphasize regret minimization~\citep{auer2002finite,bubeck2012regret}, 
and control-as-inference links planning to probabilistic modeling~\citep{levine2018reinforcement}. 
These abstractions parallel FinEvo’s payoff-driven selection, innovation, and environmental perturbation modeling.

Methodological advances further connect the two domains. 
Scalable actor–critic algorithms such as A3C~\citep{mnih2016asynchronous} highlight the role of parallelism and sample efficiency in training adaptive policies. 
Offline RL and off-policy evaluation~\citep{kumar2020conservative,uehara2020minimax} echo our concern with reproducible evaluation under counterfactual conditions. 
Risk-sensitive RL~\citep{mihatsch2002risk} aligns with our treatment of shocks as ecological stressors. 
Preference- and feedback-driven RL~\citep{christiano2017deep} resembles our innovation mechanism, where external signals redirect evolutionary trajectories. 
Meanwhile, sequence-decision formulations such as decision transformers~\citep{chen2021decision} 
illustrate how supervised learning and RL converge, analogous to FinEvo’s integration of forecasting with adaptation.

From this vantage point, RL and FinEvo can be seen as two instantiations of a common paradigm: 
learning under uncertainty with delayed feedback, balancing exploitation, exploration, and robustness. 
Yet, RL typically optimizes an isolated policy, leaving open how heterogeneous strategies evolve collectively under market pressures. 
FinEvo addresses this gap by extending RL principles to agent populations, bridging individual-level optimization with ecological dynamics.

\subsection{Ecological Games}
Ecological game theory extends classical game theory by modeling the adaptation and survival of strategy populations rather than equilibrium among rational individuals~\citep{hofbauer1998evolutionary,sandholm2010population}. 
Originating in biology, dynamics such as the replicator equation~\citep{taylor1978evolutionary} capture how successful strategies propagate, while mutation and drift introduce continual innovation.

This perspective has influenced machine learning through \emph{evolutionary algorithms}~\citep{back1997handbook,deb2002fast}, 
\emph{multi-agent evolutionary games}~\citep{tuyls2006evolutionary}, and \emph{population-based training}~\citep{jaderberg2017population}. 
Recent work emphasizes open-endedness, as in AlphaStar’s league training~\citep{vinyals2019grandmaster}, autocurricula for exploration~\citep{leibo2019autocurricula}, and open-ended strategy ecology~\citep{grill2020bootstrap}. 
All highlight that robustness arises from continual adaptation within diverse populations.

In finance, ecological models explain heterogeneous traders’ interactions~\citep{lux1999scaling,hommes2006heterogeneous} and are instantiated in agent-based markets~\citep{lebaron2006agent,farmer2009economy}. 
More recently, high-fidelity platforms like ABIDES~\citep{byrd2020abides,amrouni2021abides} provide infrastructure for market microstructure research, though they lack formal guarantees on population dynamics. Closer in spirit, the HFTE model~\citep{mahdavi2017introducing} proposes a multi-species predator–prey ecosystem for high-frequency quantitative strategies, but relies on fixed topologies and genetic algorithms rather than a stochastic replicator–mutator dynamic.FinEvo complements this line of work by providing an explicit evolutionary game-theoretic SDE that links heterogeneous strategy populations with CDA microstructure. 
FinEvo advances beyond infrastructure by introducing a principled ecological formalism with stochastic replicator–mutator dynamics, enabling reproducible and theory-grounded evaluation of robustness and adaptation.

\newcolumntype{P}[1]{>{\RaggedRight\arraybackslash}p{#1}}

\begin{table}[h]
\centering
\small
\renewcommand{\arraystretch}{1.2}
\setlength{\tabcolsep}{2pt} 
\begin{tabular}{P{2.1cm} P{2.2cm} P{2.2cm} P{2.2cm} P{2.2cm} P{2.2cm}}
\toprule
\textbf{Dimension} & \textbf{ABIDES} & \textbf{Mesa} & \textbf{PyMarketSim} & \textbf{TraderTalk} & \textbf{FinEvo (ours)} \\
\midrule
\textbf{Goal / Target} 
& Market microstructure simulation
& General-purpose agent-based modeling
& RL benchmark for trading strategies
& LLM-driven bilateral trading behavior
& \textbf{Mechanism-level evaluation of evolving financial ecosystems} \\
\textbf{Theoretical Guarantees} 
& None 
& None 
& Limited (training stability only) 
& None 
& \textbf{Simplex invariance, positivity, OU-process convergence} \\
\textbf{Mechanism Layering} 
& Agent-level only 
& Agent-level only 
& Agent-policy interactions 
& Dialogue-driven actions 
& \textbf{Selection / Innovation / Perturbation decomposition} \\
\textbf{Research Use} 
& Stress-test specific microstructure 
& Generic agent dynamics demos 
& RL strategy benchmarking 
& LLM role-playing for traders 
& \textbf{Systematic evaluation, stress-test, mechanism attribution} \\
\textbf{Reproducibility} 
& Depends on configuration 
& High (but non-financial focus) 
& Medium (sensitive to seeds) 
& Low (LLM variability) 
& \textbf{High, with cross-seed robustness and theoretical consistency} \\
\bottomrule
\end{tabular}
\caption{Comparison of FinEvo with existing agent-based financial market platforms.}
\label{tab:comparison}
\end{table}

\section{From FinEvo SDE to Discrete Simulation Update}
\label{app:cont2disc}

This appendix derives the discrete-time update rule implemented in our simulator from the continuous-time \emph{FinEvo SDE}. The goal is to make explicit how theoretical dynamics map to a practical update.

\subsection{Continuous-time formulation}
Let $X_t=(x_1(t),\ldots,x_K(t))^\top \in \Delta^{K-1}$ be the population shares with $\sum_k x_k(t)=1$.  
The FinEvo dynamics are
\begin{equation}\label{eq:finevo-sde}
dX_t
=\underbrace{\beta\,\mathrm{diag}(X_t)\!\big(V(t)-\bar V(t)\mathbf{1}\big)}_{\text{Selection}}\,dt
+\underbrace{\mu\,(m_t - X_t)}_{\text{Innovation}}\,dt
+\underbrace{\gamma\,\mathrm{diag}(X_t)\,\sigma\,P(X_t)\,dW_t}_{\text{Perturbation}},
\end{equation}
where $\bar V(t)=X_t^\top V(t)$, $m_t\in\Delta^{K-1}$ is the innovation distribution, $\sigma=\mathrm{diag}(\sigma_1,\ldots,\sigma_K)$, and $P(X)=I-\mathbf{1}X^\top$ projects perturbation onto the simplex tangent space.

\subsection{Step 1: Selection--perturbation flow}
Consider first \eqref{eq:finevo-sde} without innovation.  
For each coordinate, define $y_k=\log x_k$.  
By Itô’s lemma, to leading order on a short interval with $V(t)$ frozen,
\[
dy_k \;\approx\; \beta\,(V_k-\bar V)\,dt + \gamma\,\sigma_k\,dB_k,
\]
where $B_k$ are independent Brownian motions (after projection).  
Integrating over $\Delta t$ gives
\begin{equation}\label{eq:mw-step}
y_k(t+\Delta t)-y_k(t)\;\approx\;\beta\,(V_k-\bar V)\,\Delta t + \gamma\,\sigma_k\,\sqrt{\Delta t}\,\xi_k,
\quad \xi_k\sim\mathcal N(0,1).
\end{equation}
Exponentiating \eqref{eq:mw-step} yields the multiplicative-weights update
\begin{equation}\label{eq:softmax}
\hat x_k(t+\Delta t)
=\frac{x_k(t)\,\exp\!\big(\beta V_k(t)\Delta t+\gamma\sigma_k\sqrt{\Delta t}\,\xi_k\big)}
{\sum_j x_j(t)\,\exp\!\big(\beta V_j(t)\Delta t+\gamma\sigma_j\sqrt{\Delta t}\,\xi_j\big)}.
\end{equation}
Normalization ensures $\hat X_{t+\Delta t}\in\Delta^{K-1}$.

\subsection{Step 2: Innovation injection}
Reintroducing the innovation term in \eqref{eq:finevo-sde}, a forward Euler step gives
\begin{equation}\label{eq:innovation}
X_{t+\Delta t}=(1-\mu\Delta t)\,\hat X_{t+\Delta t}+\mu\Delta t\,m_t.
\end{equation}
Thus the population is a convex combination of the selection–perturbation update and the innovation distribution.

\subsection{Final discrete update}
For simulation we take $\Delta t=1$, yielding
\begin{equation}\label{eq:final}
x_k^{t+1}
=(1-\mu)\,
\frac{x_k^t\,\exp\!\big(\beta V_k^t+\gamma\sigma_k\xi_k^t\big)}
{\sum_j x_j^t\,\exp\!\big(\beta V_j^t+\gamma\sigma_j\xi_j^t\big)}
+\mu\,m_k^t,\qquad \xi_k^t\sim\mathcal N(0,1).
\end{equation}
Equation \eqref{eq:final} is the update used in practice. It preserves positivity and the simplex by construction.

\subsection{Payoff dynamics}
Each payoff $V_k$ evolves as an Ornstein--Uhlenbeck (OU) process,
\begin{equation}\label{eq:ou}
dV_k(t)=\lambda_k(\bar f_k-V_k(t))\,dt+\nu_k\,dB_k(t),
\end{equation}
whose exact discretization is
\begin{equation}\label{eq:ou-disc}
V_k^{t+1}=V_k^t e^{-\lambda_k\Delta t}
+\bar f_k(1-e^{-\lambda_k\Delta t})
+\nu_k\sqrt{\tfrac{1-e^{-2\lambda_k\Delta t}}{2\lambda_k}}\,\zeta_k^t,
\quad \zeta_k^t\sim\mathcal N(0,1).
\end{equation}

\subsection{Summary}
- Selection $\to$ exponential weighting by payoff $V_k$,  
- Perturbation $\to$ log-normal shocks via $\sigma_k\xi_k^t$,  
- Innovation $\to$ convex mixing with $m_t$,  
- Payoffs $V_k$ follow OU discretization \eqref{eq:ou-disc}.  

Together, \eqref{eq:final} and \eqref{eq:ou-disc} provide the discrete-time simulator consistent with the continuous-time FinEvo SDE.

\section{Mechanism Decomposition}
\label{app:v-decomposition}

We aim to compute the second moment of the perturbation term under the
per-strategy perturbation intensities $\sigma_k$ and a global scale $\gamma$.
With the tangent (mean-subtracted) construction used in the main text,
the per-component discrete perturbation is
\[
\text{N}_k^t
\;=\;
\gamma\,x_k^t\Big(\sigma_k\,\eta_k^t \;-\; \sum_{j} x_j^t \sigma_j\,\eta_j^t\Big),
\qquad
\eta_k^t\sim\mathcal{N}(0,1)\ \text{i.i.d. across $k,t$}.
\]

We compute
\[
\mathbb{E}\!\left[(\text{N
}_k^t)^2\right]
=
\gamma^2 (x_k^t)^2\,
\mathbb{E}\!\left[
\Big(\sigma_k\,\eta_k^t \;-\; \sum_{j} x_j^t \sigma_j\,\eta_j^t\Big)^2
\right].
\]

Expanding the square and using independence with unit variance,
$\mathbb{E}[\eta_i^t\eta_j^t]=\delta_{ij}$, we obtain
\begin{align*}
\mathbb{E}\!\left[
\Big(\sigma_k\,\eta_k^t \;-\; \sum_{j} x_j^t \sigma_j\,\eta_j^t\Big)^2
\right]
&=
\sigma_k^2
\;-\;
2\,\sigma_k\sum_{j} x_j^t \sigma_j\,\mathbb{E}[\eta_k^t\eta_j^t]
\;+\;
\sum_{j,\ell} x_j^t x_\ell^t \sigma_j \sigma_\ell\,\mathbb{E}[\eta_j^t\eta_\ell^t]
\\
&=
\sigma_k^2 \;-\; 2\,x_k^t \sigma_k^2 \;+\; \sum_{j} (x_j^t)^2 \sigma_j^2.
\end{align*}

Therefore,
\begin{equation}\label{eq:noise-var}
\mathbb{E}\!\left[(\text{N}_k^t)^2\right]
=\gamma^2(x_k^t)^2\!\left(\sigma_k^2-2x_k^t\sigma_k^2+\sum_j (x_j^t)^2\sigma_j^2\right).
\end{equation}

This per-strategy expression replaces the uniform-$\sigma$ formula
$\sigma^2(x_k^t)^2\!\left(1 - 2x_k^t + \|x^t\|_2^2\right)$.
It reduces to the uniform case by setting all $\sigma_k\equiv\sigma$ and
$\gamma\equiv 1$, yielding
\(
\mathbb{E}\!\left[(\text{N}_k^t)^2\right]
=
\sigma^2(x_k^t)^2\!\left(1 - 2x_k^t + \|x^t\|_2^2\right).
\)

\subsection{Independence and variance additivity.}
\label{app:independence}
To justify the variance decomposition used in the mechanism analysis, we
separate the discrete update into three \emph{centered} (zero-mean) random
fluctuation components around a deterministic drift:
\[
\Delta x_k^t
=
\underbrace{S_k^t}_{\text{selection fluct.}}
+
\underbrace{I_k^t}_{\text{innovation fluct.}}
+
\underbrace{N_k^t}_{\text{perturbation fluct.}}
\;+\; \text{(deterministic drift)}.
\]
Only the centered fluctuations contribute to short-run variance. We work
\emph{conditional on $X_t$} and impose the following assumptions.

\medskip
\noindent\textbf{Assumptions.}
\begin{itemize}
\item[(A1)] (\emph{Selection fluctuations}) Write the payoff signal as
\(V_k^t=\bar V_k^t+\zeta_k^t\), where \(\bar V_k^t:=\mathbb{E}[V_k^t\mid X_t]\)
and \(\zeta_k^t\) is the centered fluctuation. Define
\[
S_k^t \;:=\; \beta x_k^t\big(\zeta_k^t-\bar\zeta^t\big),\qquad
\bar\zeta^t:=\sum_j x_j^t \zeta_j^t,
\]
so that \(\mathbb{E}[S_k^t\mid X_t]=0\).
We assume \(\zeta^t\) has diagonal conditional covariance
\(\mathrm{Cov}(\zeta_k^t,\zeta_j^t\mid X_t)=\delta_{kj}\,\nu_k^2/(2\lambda_k)\).
\item[(A2)] (\emph{Innovation fluctuations}) Let \(m_t\sim \mathrm{Dir}(\alpha)\)
be independent of \((\zeta^t,\eta^t)\) given \(X_t\), with mean
\(m:=\mathbb{E}[m_t]\) and centered fluctuation
\(\tilde m_t:=m_t-m\). Define
\[
I_k^t \;:=\; \mu\,\tilde m_k^t,
\]
so that \(\mathbb{E}[I_k^t\mid X_t]=0\) and
\(\mathrm{Var}(I_k^t\mid X_t)=\mu^2\,\tfrac{m_k(1-m_k)}{\alpha_0+1}\).
\item[(A3)] (\emph{Perturbation fluctuations}) The environmental shocks
\(\eta^t=(\eta_1^t,\ldots,\eta_K^t)^\top\) are i.i.d. standard normal and
independent of \((\zeta^t,m_t)\) given \(X_t\). Define
\[
N_k^t \;:=\; \gamma\,x_k^t\Big(\sigma_k \eta_k^t - \sum_j x_j^t \sigma_j \eta_j^t\Big),
\]
so that \(\mathbb{E}[N_k^t\mid X_t]=0\) and
\(\mathrm{Var}(N_k^t\mid X_t)\) is given in Eq.~(\ref{eq:noise-var}).
\end{itemize}

\medskip
\noindent\textbf{Lemma (variance additivity).}
Under \textup{(A1)--(A3)},
\[
\mathrm{Var}(\Delta x_k^t \mid X_t)
=
\mathrm{Var}(S_k^t\mid X_t)
+
\mathrm{Var}(I_k^t\mid X_t)
+
\mathrm{Var}(N_k^t\mid X_t).
\]

\emph{Proof.}
By construction, \(\mathbb{E}[S_k^t\mid X_t]=\mathbb{E}[I_k^t\mid X_t]=\mathbb{E}[N_k^t\mid X_t]=0\).
Independence in \textup{(A2)--(A3)} implies the cross-covariances vanish:
\(\mathrm{Cov}(S_k^t,I_k^t\mid X_t)=0\),
\(\mathrm{Cov}(S_k^t,N_k^t\mid X_t)=0\),
\(\mathrm{Cov}(I_k^t,N_k^t\mid X_t)=0\).
Therefore,
\[
\mathrm{Var}(\Delta x_k^t\mid X_t)
=
\mathrm{Var}(S_k^t+I_k^t+N_k^t\mid X_t)
=
\sum_{\bullet\in\{S,I,N\}}\mathrm{Var}(\bullet_k^t\mid X_t).
\qquad\square
\]

\medskip
\noindent\textbf{Remarks.}
(i) The deterministic drift (e.g., \(\beta x_k^t(\bar V_k^t-\bar V^t)\) and
\(\mu(m_k-x_k^t)\)) affects the conditional mean but not the centered variance.
(ii) If one wishes to allow correlated payoff and environment shocks, replace
(A1)–(A3) by a joint Gaussian with block covariance; the variance then includes
explicit cross terms \(\mathrm{Cov}(S_k^t,N_k^t\mid X_t)\), etc. Our empirical
results use (A1)–(A3), which are standard in evolutionary/replicator SDEs and
lead to the closed forms reported in the main text.

\subsection{Parameter summary.}
\begin{center}
\begin{tabular}{ccl}
Symbol & Description & Typical usage/role \\\hline
$x_k^t$    & Proportion of agents using $\pi_k$ at time $t$           & $\sum_k x_k^t = 1$ \\
$f_k^t$    & Empirical mean payoff of strategy $k$ at $t$             & Observed fitness   \\
$\beta$    & Selection intensity                                      & Tuning parameter   \\
$\gamma_{kj}$& Ecological effect of $j$ on $k$                       & Set to $0$ in baseline; $>0$ competition \\
$\sigma_k$ & perturbation intensity for strategy $k$                         & Calibrated or fixed\\
$\eta_k^t$ & Standard normal noise                                    & Simulation draw    \\
$\mu$      & Innovation probability                                   & Tuning parameter   \\
$m_k$      & Innovation distribution                                  & User-defined       \\
$\Delta t$ & Time step (usually 1)                                   & Simulation round   \\
\hline
\end{tabular}
\end{center}

\subsection{Conclusion.}
This derivation demonstrates that the agent-level elimination, selection, and replacement mechanism implemented in our simulation is a numerically consistent discretization of the generalized Lotka–Volterra stochastic ecology, augmented with innovation. The result is a unified framework with theoretical foundations in evolutionary dynamics and practical applicability for complex agent-based market studies.

\section{Variance Decomposition under Correlated Payoff and Environmental Shocks}
\label{app:correlate pay shock}
In this section, we extend the variance analysis in Appendix~C.1 to the case where payoff shocks and environmental shocks are correlated. We emphasize that correlation does \emph{not} alter the FinEvo SDE itself; only the decomposition gains additional covariance terms.

\paragraph{FinEvo SDE}
\begin{equation}
dX_t
=
\underbrace{\beta\, X_t \odot (V_t - \bar V_t)}_{\text{Selection}}\, dt
\;+\;
\underbrace{\mu(m_t - X_t)}_{\text{Innovation}}\, dt
\;+\;
\underbrace{\sigma\, P(X_t)\, dB_t}_{\text{Perturbation}} ,
\tag{C.1}
\end{equation}
where $P(X_t)$ is the tangent-space projection ensuring simplex invariance, and $B_t \in \mathbb{R}^K$ is a vector-valued Brownian motion.

\subsection{ Allowing Correlated Shocks}
We now assume the Brownian increments satisfy
\begin{equation}
\mathrm{Cov}(dB_t) \;=\; \Sigma\, dt, 
\end{equation}
where $\Sigma$ is a symmetric positive semidefinite correlation matrix:
\[
\Sigma =
\begin{pmatrix}
1 & \rho_{12} & \cdots & \rho_{1K} \\
\rho_{21} & 1 & \cdots & \rho_{2K} \\
\vdots & & \ddots & \vdots \\
\rho_{K1} & \cdots & \rho_{K(K-1)} & 1
\end{pmatrix}.
\]

This relaxation affects only the variance decomposition, not the SDE dynamics.

\subsection{Variance of a Single Component}
For any strategy population component $X_{k,t}$, Itô's lemma yields:
\begin{align}
\mathrm{Var}(dX_{k,t})
&=
\underbrace{\mathrm{Var}\!\left[\beta X_{k,t}(V_{k,t} - \bar V_t)\right]}_{\text{Selection}}
\;+\;
\underbrace{\mathrm{Var}\!\left[\mu(m_{k,t} - X_{k,t})\right]}_{\text{Innovation}}
\notag \\
&\quad+\;
\underbrace{\sigma^2 \left(P \Sigma P^\top\right)_{kk}}_{\text{Perturbation}}
\;+\;
\underbrace{
2\sigma\, \mathrm{Cov}\!\left(
\beta X_{k,t}(V_{k,t} - \bar V_t),
\; (P\, dB_t)_k
\right)
}_{\text{Selection--Perturbation Cross Term}} .
\tag{C.2}
\end{align}

\subsection{Full Matrix Variance}
For the full vector process $X_t$, we have:
\begin{align}
\mathrm{Var}(dX_t)
&=
\beta^2\, \mathrm{Var}\!\left[X_t \odot (V_t - \bar V_t)\right]
\;+\;
\mu^2\, \mathrm{Var}(m_t - X_t)
\notag \\
&\quad+\;
\sigma^2\, P \Sigma P^\top
\;+\;
2 \beta \sigma\;
\mathrm{Cov}\!\left( X_t \odot (V_t - \bar V_t),\; P\, dB_t \right).
\tag{C.3}
\end{align}

\subsection{Relation to the Uncorrelated Case}

Under independence assumptions (A1--A3) in Appendix~C.1, we have
\[
\Sigma = I, 
\qquad
\mathrm{Cov}\big(\text{selection},\; dB_t\big) = 0,
\]

and Equations (C.2)--(C.3) reduce to the clean additive decomposition in the main text.
When shocks are correlated, the FinEvo SDE itself remains unchanged, and only additional covariance terms appear in the variance decomposition, consistent with the remark in Appendix~C.1.

\section{Minimal theoretical guarantees}
\label{app:theory-min}

\newtheorem{proposition}{Proposition}
\newtheorem{lemma}{Lemma}

\subsection{Continuous-time FinEvo SDE (for reference).}
We explicitly write the continuous-time dynamics underlying our discrete updates:
\begin{equation}
\label{eq:fin-sde}
\begin{aligned}
dX_i
&= X_i\!\left(f_i(X,t) - \sum_{j} X_j f_j(X,t)\right)\,dt \\
&\quad + \mu\,(m_i - X_i)\,dt \\
&\quad + \sigma_i X_i\!\left(dW_i - \sum_{j} X_j\,dW_j\right),
\qquad i=1,\dots,K .
\end{aligned}
\end{equation}

where $X(t)\in\Delta^K:=\{x\in\mathbb{R}_{\ge 0}^K:\sum_i x_i=1\}$ are strategy shares, $m\in\mathrm{int}(\Delta^K)$ with $\min_i m_i>0$, $\mu\ge 0$, and $\{W_i\}$ are independent Brownian motions. The diffusion is \emph{tangent} to $\Delta^K$.

\subsection{Assumptions.}
(i) $f_i(\cdot,t)$ are locally Lipschitz with linear-growth bounds in $X$, uniformly on compact time sets;  
(ii) $m\in\mathrm{int}(\Delta^K)$, $\mu\ge 0$, $\sigma_i\ge 0$;  
(iii) Brownian motions $\{W_i\}$ are independent.  
When we couple with exogenous value processes $V_i(t)$, we assume $V_i$ are OU with unique stationary laws and bounded moments; the discrete OU approximation is Eq.~\eqref{eq:ou-disc}.

\begin{lemma}[Well-posedness]
\label{lem:wellposed}
Under the above assumptions, the SDE \eqref{eq:fin-sde} admits a unique strong solution on $[0,\infty)$ for any $X(0)\in\Delta^K$.
\end{lemma}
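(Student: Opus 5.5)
The approach is to establish local existence and uniqueness from the local Lipschitz structure, and then rule out explosion by showing that the solution cannot leave the simplex. Write the drift as
\[
b_i(X,t)=X_i\big(f_i(X,t)-\textstyle\sum_j X_jf_j(X,t)\big)+\mu(m_i-X_i)
\]
and the diffusion matrix as $\Sigma_{ij}(X)=\sigma_iX_i(\delta_{ij}-X_j)$, so that $dX=b\,dt+\Sigma\,dW$. Both coefficients are products and sums of locally Lipschitz functions on $\mathbb{R}^K$. Assumption (i) gives this for $f_i$, and the diffusion is polynomial. The standard theorem (Itô's construction via truncation, e.g.\ Karatzas--Shreve or Mao) therefore yields a unique strong solution up to an explosion time $\zeta=\lim_n\tau_n$, where $\tau_n=\inf\{t:|X(t)|\ge n\}$. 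To avoid the polynomial coefficients growing outside $\Delta^K$, I will first replace the coefficients by globally Lipschitz versions that agree with the originals on a neighbourhood of $\Delta^K$. I will then show that the modified solution never leaves $\Delta^K$, so it solves the original equation and $\zeta=\infty$.

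\textbf{Invariance of the affine hyperplane.} Let $S=\sum_iX_i$. Summing the components gives
\[
dS=(1-S)\big(\textstyle\sum_j X_jf_j\big)\,dt+\mu(1-S)\,dt+\textstyle\sum_i\sigma_iX_i\,dW_i\,(1-S)\cdot(\ldots).
\]
More carefully, the noise sum is $\sum_i\sigma_iX_i\,dW_i-S\sum_j X_j\sigma_j\,dW_j=(1-S)\sum_j\sigma_jX_j\,dW_j$, which holds after checking the indices: the diffusion term in \eqref{eq:fin-sde} has $\sigma_i$ multiplying both pieces. If instead the second piece carries $\sigma_j$, one uses the tangent form from the main text, $\mathrm{diag}(X)\sigma P(X)$; I will adopt whichever form makes the column sums vanish on $\{S=1\}$. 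In either case $Y=1-S$ satisfies a linear SDE $dY=-Y(\alpha_t\,dt+\beta_t\,dW)$ with locally bounded adapted coefficients, and $Y(0)=0$. Uniqueness of this linear SDE forces $Y\equiv0$, so $S(t)=1$ for all $t<\zeta$.

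\textbf{Nonnegativity, which I expect to be the main obstacle.} Each coordinate satisfies $dX_i=X_i\big(a_i\,dt+dM_i\big)+\mu m_i\,dt$, where $a_i$ is locally bounded and $M_i$ is a local martingale with locally bounded quadratic variation. Variation of constants gives
\[
X_i(t)=\mathcal{E}_i(t)\Big(X_i(0)+\mu m_i\int_0^t\mathcal{E}_i(s)^{-1}\,ds\Big)\ge0,
\]
with $\mathcal{E}_i$ a positive stochastic exponential. Strict positivity follows for $t>0$ if $\mu m_i>0$, and for all $t$ if $X_i(0)>0$. The delicate point is making this rigorous before knowing the solution stays bounded. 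To handle it I will work with the truncated coefficients and the stopping times $\tau_n$, and apply the representation to the stopped process. Since the coordinates are nonnegative and sum to $1$, each lies in $[0,1]$, so $|X|\le1$. Hence $\tau_n=\infty$ for every $n>1$, so $\zeta=\infty$ and the truncation was never active. The linear-growth part of (i) also supplies a moment bound $E\sup_{s\le T}|X_s|^2<\infty$ as a consistency check. Uniqueness on $[0,\infty)$ follows from pathwise uniqueness on each $[0,\tau_n]$.

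When the $f_i$ depend on exogenous OU processes $V_i(t)$, I will treat $(X,V)$ jointly. The OU component is globally Lipschitz and non-explosive, and $f_i$ is locally Lipschitz in $(X,V)$, so the same localisation argument applies unchanged.
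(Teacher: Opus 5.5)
Your overall strategy is the right one: localize, then confine the solution to the compact simplex using conservation of $S=\sum_iX_i$ together with your variation-of-constants positivity formula. It is also more than the paper supplies. The paper gives no argument for the lemma, and the standard linear-growth theorem cannot be applied directly, because $X_i\sum_jX_jf_j$ is cubic in $X$ even when $f$ has linear growth.

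\textbf{The gap is in the conservation step.} You hedge on it instead of resolving it, and the hedge goes the wrong way. For your own $\Sigma_{ij}(X)=\sigma_iX_i(\delta_{ij}-X_j)$, which is the diffusion as written, the column sums are $\sum_i\Sigma_{ij}=X_j\big(\sigma_j-\sum_i\sigma_iX_i\big)$. Hence
\[
dS=(1-S)\Big(\textstyle\sum_jX_jf_j+\mu\Big)\,dt+\textstyle\sum_jX_j\Big(\sigma_j-\sum_i\sigma_iX_i\Big)\,dW_j ,
\]
and the noise does not vanish on $\{S=1\}$ unless the $\sigma_i$ coincide on the support of $X$. Your expression $\sum_i\sigma_iX_i\,dW_i-S\sum_jX_j\sigma_j\,dW_j$ is not what summing the stated equation produces.

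A concrete failure: take $K=2$, $f\equiv0$, $\mu=0$, $\sigma=(1,0)$, $X(0)=(\tfrac12,\tfrac12)$. Then $X_2\equiv\tfrac12$, while $dX_1=X_1\big((1-X_1)\,dW_1-\tfrac12\,dW_2\big)$ is a non-constant local martingale, so $S$ does not stay at $1$.

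The main-text matrix $\mathrm{diag}(X)\sigma P(X)$ with $P(X)=I-\mathbf{1}X^\top$ is not the tangent alternative you invoke. Its entries are exactly $\sigma_iX_i(\delta_{ij}-X_j)$, so it coincides with the diffusion as written. The genuinely tangent version puts the projection on the left, $(I-X\mathbf{1}^\top)\mathrm{diag}(X)\sigma$, giving $X_i\big(\sigma_i\,dW_i-\sum_jX_j\sigma_j\,dW_j\big)$. This is the form behind the paper's discrete perturbation term $N_k^t$ and the continuum limit of its multiplicative-weights update. Your positivity step is fine for either form; only conservation breaks.

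Without conservation of $S$ you lose $|X|\le 1$, and then nothing in your argument rules out explosion: off $\Delta^K$ the drift is cubic, and $f$ may not even be defined there. So ``adopt whichever form makes the column sums vanish'' silently replaces the equation. As it stands, your proof establishes the lemma only when $\sigma_1=\cdots=\sigma_K$, or for the corrected tangent diffusion, and you should state that restriction explicitly. The paper's sketch of Proposition~\ref{prop:simplex} asserts the same cancellation without justification, so this defect is inherited from the statement rather than introduced by you.

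For the corrected equation your argument goes through:
\begin{itemize}
\item $Y=1-S$ satisfies $dY=-Y\big(\langle X,f\rangle+\mu\big)\,dt-Y\sum_j\sigma_jX_j\,dW_j$ with $Y(0)=0$, so $Y\equiv 0$.
\item Each coordinate has the form $dX_i=X_i(a_i\,dt+dM_i)+\mu m_i\,dt$ with $a_i=f_i-\langle X,f\rangle-\mu$ and $dM_i=\sigma_i\,dW_i-\sum_jX_j\sigma_j\,dW_j$, as your formula requires.
\item Stopping at $\tau_n$, or at the exit time from a neighbourhood of $\Delta^K$ on which the modified coefficients agree with the original ones, closes the non-explosion argument.
\end{itemize}
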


\begin{proposition}[Simplex invariance and positivity]
\label{prop:simplex}
Let $X(0)\in\Delta^{K}$. Then the solution to \eqref{eq:fin-sde} satisfies
$\sum_{i} X_i(t)\equiv 1$ for all $t\ge 0$ and $X_i(t)\ge 0$ almost surely.
If, in addition, $X(0)\in\mathrm{int}(\Delta^{K})$, $\mu>0$, and $\min_i m_i>0$, then $X_i(t)>0$ for all $t>0$ almost surely (the boundary is non-absorbing).
\end{proposition}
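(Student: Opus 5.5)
The plan is to reduce both claims to one-dimensional linear SDEs that can be solved explicitly, working pathwise with the unique strong solution from Lemma~\ref{lem:wellposed}. Throughout, write $F(X,t):=\sum_j X_j f_j(X,t)$ for the mean fitness.

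\textbf{A caveat on the noise term.} As written in \eqref{eq:fin-sde}, the diffusion is tangent to $\Delta^K$ only when all $\sigma_i$ are equal. With heterogeneous $\sigma_i$, summing the noise over $i$ gives $\sum_i\sigma_iX_i\,dW_i-(\sum_i\sigma_iX_i)(\sum_jX_j\,dW_j)$, which need not vanish on the simplex. I will therefore read the perturbation term in the mean-subtracted form of Appendix~\ref{app:v-decomposition}, namely $X_i\big(\sigma_i\,dW_i-\sum_j X_j\sigma_j\,dW_j\big)$. This is the form under which the statement's claim that ``the diffusion is tangent'' holds, and it coincides with \eqref{eq:fin-sde} when $\sigma_i\equiv\sigma$. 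I regard identifying this form as the main subtlety of the proof. Everything below uses only the structure $dX_i=X_i\,dM_i+\mu m_i\,dt$ for a suitable semimartingale $M_i$, together with the explicit form of $\sum_i dX_i$.

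\textbf{Step 1: simplex invariance.} Let $S_t:=\sum_iX_i(t)$ and sum the equations over $i$. The three contributions are:
\begin{itemize}
\item the selection drift, which gives $F-S_tF=(1-S_t)F$;
\item the innovation drift, which gives $\mu(1-S_t)$ because $\sum_i m_i=1$;
\item the noise, which gives $\sum_i\sigma_iX_i\,dW_i-S_t\sum_j\sigma_jX_j\,dW_j=(1-S_t)\sum_j\sigma_jX_j\,dW_j$.
\end{itemize}
Hence $D_t:=1-S_t$ satisfies the linear SDE
\[
dD_t=-D_t\Big[(F+\mu)\,dt+\sum_j\sigma_jX_j\,dW_j\Big],\qquad D_0=0 .
\]
Its coefficients are progressively measurable and locally bounded along the solution. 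The unique solution is $D_0$ times a stochastic exponential, so $D_t\equiv0$ almost surely. Equivalently, $D\equiv0$ solves the equation, and uniqueness for linear SDEs (or Gronwall applied to $\mathbb{E}[D_{t\wedge\tau_n}^2]$, with $\tau_n$ localizing times) forces $D\equiv 0$.

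\textbf{Step 2: positivity by variation of constants.} Fix $i$ and define the continuous semimartingale
\[
dY_i=(f_i-F-\mu)\,dt+\sigma_i\,dW_i-\sum_j X_j\sigma_j\,dW_j .
\]
Then $dX_i=X_i\,dY_i+\mu m_i\,dt$. Let $Z_i=\mathcal{E}(Y_i)$ be the stochastic exponential, which is strictly positive. Itô's product rule applied to $X_i Z_i^{-1}$ gives
\[
X_i(t)=Z_i(t)\Big(X_i(0)+\mu m_i\int_0^t Z_i(s)^{-1}\,ds\Big).
\]
From this formula:
\begin{itemize}
\item $X_i(t)\ge0$ whenever $X_i(0)\ge0$ and $\mu m_i\ge0$, which gives the first claim.
\item If $\mu>0$ and $m_i>0$, the integral is strictly positive for every $t>0$, so $X_i(t)>0$. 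This holds even when $X_i(0)=0$, so the boundary is non-absorbing; for interior initial data the conclusion is immediate.
\item When $\mu=0$, the formula reduces to $X_i=X_i(0)Z_i$, which is the classical replicator positivity.
\end{itemize}

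\textbf{Step 3: technical justification.} The formulas above need $f_i(X(t),t)$ and the noise coefficients to be locally bounded along the path. This is where Lemma~\ref{lem:wellposed} and the bound on $\Delta^K$ enter. Before knowing $X\in\Delta^K$, I will use stopping times $\tau_n=\inf\{t:|X(t)|\ge n\}$. On $[0,\tau_n]$ the coefficients are bounded by assumption (i), so the stochastic exponentials and the Itô product rule are legitimate.

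Steps 1 and 2 then show that $X$ stays in the compact set $\Delta^K$, so $\tau_n\to\infty$. The identities therefore hold on all of $[0,\infty)$ almost surely. The one point needing care is avoiding circularity: Step 1 must be run on $[0,\tau_n]$ first, and only afterwards is $n\to\infty$ taken.
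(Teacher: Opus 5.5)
Your proof is correct, and it takes a different and more rigorous route than the paper's sketch. The paper sums the equations and says the selection drift and the ``tangent'' diffusion cancel, while the innovation term sums to $\mu(\sum_i m_i-\sum_i X_i)=0$. It then argues positivity geometrically: the multiplicative diffusion vanishes on $\{X_i=0\}$, and the drift there equals $\mu m_i>0$, so it points inward. That buys brevity and a clear picture, but both steps are heuristic. The cancellation only shows $dS_t=0$ on $\{S_t=1\}$, which presupposes what is to be proved. ``The diffusion vanishes on the boundary'' also needs a comparison theorem or a boundary classification before it rules out crossing or hitting.

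You replace the first step with the closed linear equation $dD_t=-D_t\,dL_t$ for $D=1-S$, so $D\equiv0$ follows from uniqueness of the Dol\'eans-Dade exponential. You replace the second with the variation-of-constants identity $X_i(t)=Z_i(t)\bigl(X_i(0)+\mu m_i\int_0^t Z_i(s)^{-1}\,ds\bigr)$. This one formula gives nonnegativity, and strict positivity for $t>0$ from boundary data when $\mu m_i>0$. It also gives interior positivity even when $\mu=0$.

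Your caveat about the noise is correct and is missed by the paper. With heterogeneous $\sigma_i$, the summed diffusion in \eqref{eq:fin-sde} equals $\sum_k X_k\bigl(\sigma_k-\sum_i\sigma_iX_i\bigr)\,dW_k$. The main-text form $\mathrm{diag}(X)\,\sigma\,P(X)\,dW$ is the same term. This sum is nonzero on the simplex, so the literal statement holds only for constant $\sigma$. The form $\mathrm{diag}(X)P(X)\sigma\,dW$ that you adopt matches the discrete perturbation in Appendix~\ref{app:v-decomposition}, and it is the one for which the proposition is true.

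Finally, your localization in Step~3 is substantive. The coefficients of \eqref{eq:fin-sde} grow superlinearly in $X$, so linear growth of the $f_i$ does not by itself give the global solution asserted in Lemma~\ref{lem:wellposed}. Non-explosion really comes from the invariance you prove on $[0,\tau_n]$.
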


\textit{Proof sketch.}
Summing \eqref{eq:fin-sde} over $i$ cancels both the selection drift and the tangent diffusion; the innovation term sums to $\mu(\sum_i m_i-\sum_i X_i)=0$, hence mass is conserved.
Multiplicative diffusion vanishes on the boundary, implying non-negativity.
With $\mu>0$ and $\min_i m_i>0$, the drift $\mu(m_i-X_i)$ points strictly inward. \hfill$\square$

\begin{proposition}[Invariant measure and quasi-stability]
\label{prop:invariant}
Consider the joint Markov process $(X(t),V(t))$ with $X$ governed by \eqref{eq:fin-sde} and $V$ OU.
If $\mu>0$ and $\min_i m_i>0$, then at least one invariant probability measure exists on $\Delta^K\times\mathbb{R}^K$.
Moreover, in the small-perturbation/weak-innovation regime ($\sigma\to 0$, $\mu\to 0$ with $\mu=O(\sigma^2)$), invariant measures concentrate near the attractor set of the deterministic replicator–mutator ODE
$\dot X_i=X_i\,(f_i-\sum_j X_j f_j)+\mu(m_i-X_i)$, yielding quasi-stable ecological compositions.
\end{proposition}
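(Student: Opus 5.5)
Existence will follow from a Krylov--Bogoliubov argument, and concentration from a Freidlin--Wentzell/stochastic-stability argument for small-noise perturbations of a dynamical system.

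\textbf{Step 1 (Feller property).} By Lemma~\ref{lem:wellposed} and Proposition~\ref{prop:simplex}, $X$ lives in the compact set $\Delta^K$ for all times. With $f_i$ depending continuously on $(X,V)$, the coupled SDE for $(X,V)$ has locally Lipschitz coefficients. Uniqueness of strong solutions plus continuous dependence on initial data then give the Feller property of the joint semigroup.

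\textbf{Step 2 (tightness).} We apply Krylov--Bogoliubov to the time averages $\frac1T\int_0^T \mathbb{P}((X(t),V(t))\in\cdot)\,dt$. The $X$-marginals are tight automatically by compactness. For $V$, we use the Lyapunov function $\|V\|^2$. Itô's formula gives $\frac{d}{dt}\mathbb{E}\|V\|^2\le -c\,\mathbb{E}\|V\|^2+C$, since each $V_k$ is OU with $\lambda_k>0$ and a bounded target. Hence $\sup_t\mathbb{E}\|V(t)\|^2<\infty$, and Chebyshev's inequality yields tightness. Any weak limit point is an invariant probability measure on $\Delta^K\times\mathbb{R}^K$.

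\textbf{Role of $\mu>0$ and $\min_i m_i>0$.} These hypotheses let us sharpen the result to measures supported away from $\partial\Delta^K$. Take $U=-\sum_i m_i\log X_i$. Its generator contains $-\mu\sum_i m_i(m_i-X_i)/X_i$, which is very negative near the boundary, and this dominates the bounded selection contribution and the Itô correction $\tfrac12\sum_i m_i\sigma_i^2(\cdot)$. From this we get $\mathcal{L}U\le C-\mu\min_i m_i^2/\min_i X_i$. Time-averaging then shows invariant measures give no mass to the boundary, with uniform moment bounds on $1/\min_i X_i$.

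\textbf{Step 3 (concentration).} Let $\sigma\to0$ and $\mu=O(\sigma^2)$, and take any family of invariant measures $\pi^{\sigma}$. By tightness (compactness in $X$ and the uniform OU bounds), limit points exist. Passing to the limit in $\int\mathcal{L}^\sigma\varphi\,d\pi^\sigma=0$ for smooth test functions $\varphi$ shows that every limit point is invariant for the limiting flow. That flow is the deterministic replicator--mutator ODE, with $f_i$ averaged against the stationary OU law of $V$ if $V$ is treated as a fast or independent component. By the Poincaré recurrence/Birkhoff center theorem, invariant measures of a flow on a compact set are supported on its Birkhoff center, which lies in the attractor/chain-recurrent set. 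This gives concentration near the attractor set, and it is the precise meaning I would give to ``quasi-stable compositions.''

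\textbf{Main obstacle.} The difficulty is Step 3 in the weak-innovation scaling. When $\mu=O(\sigma^2)$, the boundary repulsion from innovation is of the same order as the noise. Boundary equilibria of the pure replicator flow, such as vertices, can then carry limiting mass, and the limit must be identified carefully. I would first settle for the weak statement ``supported on the chain-recurrent set of the limiting ODE,'' which follows from the generator-limit argument above. A sharper selection among attractors would require a Freidlin--Wentzell quasi-potential analysis, which I would note but not carry out in full. A secondary technical issue is that $V$ is not small-noise, since $\nu_k$ is fixed. I would handle this either by assuming $f_i$ depends on $V$ only through its stationary mean, or by an averaging argument in which $V$ mixes exponentially fast.
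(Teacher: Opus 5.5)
Your existence argument follows the paper's route: the Feller property from well-posedness, tightness from compactness of $\Delta^K$ plus OU moment bounds, then Krylov--Bogoliubov. Your estimate with $U=-\sum_i m_i\log X_i$ is a worthwhile refinement. It is where the hypotheses $\mu>0$ and $\min_i m_i>0$ actually earn their keep, since plain existence on $\Delta^K\times\mathbb{R}^K$ does not need them, and it shows that the constructed invariant measures do not charge $\partial\Delta^K$. When $f_i$ depends on $V$, the selection term there is $O(1+\|V\|)$ rather than bounded, but the OU moment bound absorbs this after time-averaging.

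The genuine gap is in Step~3. Passing to the limit in $\int\mathcal{L}^\sigma\varphi\,d\pi^\sigma=0$ shows only that limit points are invariant measures of the limiting flow, hence supported on its Birkhoff center. The Birkhoff center is not the attractor set: it contains every rest point and invariant set, regardless of stability. For example, every vertex of $\Delta^K$ is a rest point of the replicator ODE. For a rock--paper--scissors payoff with $f_i=f_i(X)$ in which losses exceed gains, the interior rest point $X^\ast$ is a repeller and the attractor is the boundary heteroclinic cycle, yet $\delta_{X^\ast}$ is invariant for the flow. Nothing in your argument rules out such a measure as a limit of $\pi^\sigma$. So the sentence identifying the Birkhoff center with the ``attractor/chain-recurrent set'' is exactly where the proof of the stated claim breaks; what you actually obtain is the weaker statement you mention at the end.

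The missing ingredient is a mechanism that expels mass from non-attracting invariant sets. One option is large-deviation estimates showing that the process leaves a neighbourhood of a repeller or saddle in time $O(\log(1/\sigma))$, while residence times near attractors are exponentially long. Another is a local Lyapunov function around each unstable invariant set. This is what the paper's appeal to Freidlin--Wentzell is meant to supply. So that theory is needed not merely for a sharper selection \emph{among} attractors, as your obstacle paragraph suggests, but to reach attractors at all. Neither route is off-the-shelf here, because the diffusion degenerates on $\partial\Delta^K$ and the only push off a face is the $O(\sigma^2)$ innovation drift. The paper's one-line sketch does not address this either.

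Your proposed fix for the coupling with $V$ also does not work as stated. Exponential mixing of the OU process at fixed rates $\lambda_k$ gives no averaging principle, because nothing separates the time scales of $X$ and $V$ as $\sigma,\mu\to0$. With $\lambda_k,\nu_k$ fixed, your generator limit is invariant for the random ODE $\dot X_i=X_i\big(f_i(X,V)-\sum_j X_j f_j(X,V)\big)$ driven by a stationary OU process. Its invariant measures need not lie near the attractor of any deterministic replicator--mutator ODE. To match the statement you need one of the following:
\begin{itemize}
\item take $f_i=f_i(X)$, decoupling $X$ from $V$, which the deterministic ODE in the proposition tacitly assumes;
\item add the scaling $\nu_k\to0$, so the stationary OU law collapses to a point mass at $\bar f$;
\item add the scaling $\lambda_k\to\infty$, which gives a genuine fast-variable averaging regime.
\end{itemize}
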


\textit{Proof sketch.}
OU has a unique stationary law with bounded moments. By Prop.~\ref{prop:simplex}, $X(t)$ remains in the compact simplex; the joint process is Feller and tight, so Krylov–Bogoliubov yields existence of an invariant measure. Small-perturbation concentration follows from standard Freidlin–Wentzell arguments for coupled systems. \hfill$\square$

\subsection{Discrete-time counterpart (consistency).}
Our simulation update implements exponential selection followed by innovation/mutation. The selection-normalization step is Eq.~\eqref{eq:softmax}, and the innovation mixing is the final update Eq.~\eqref{eq:final}. Together, these define a positivity-preserving, simplex-invariant map. In particular,
\[
\sum_k x_k^{t+1}=1,\quad x_k^{t+1}\ge 0,\quad
x_k^{t+1}>0\ \text{for }t\ge 1\ \text{if }\mu>0\ \text{and}\ \min_k m_k>0.
\]
Coupled value signals use the OU discretization in Eq.~\eqref{eq:ou-disc}. Under $\Delta t\to 0$, the discrete scheme is a first-order consistent approximation to Eq.~\eqref{eq:fin-sde}.

\section{ Agent Library and Initialization.}

\subsection{Agent Name Mapping}

\label{app:name_map}
Table~\ref{app:name_map} summarizes the mapping between full agent names and their abbreviations used throughout the paper. 
Agents are organized by category (rule-based, informed, reinforcement learning, time-series models, and LLM-based traders). 
This mapping provides a concise reference to ensure consistency across figures, tables, and analyses in the main text.
\begin{table}
\centering
\caption{Agent Name Mapping}
\begin{tabular}{lll}
\toprule
\textbf{Category} & \textbf{Full Name} & \textbf{Abbreviation} \\
\midrule
\multirow{3}{*}{Technical Traders} 
  & MomentumTrader & MT \\
  & RSITrader & RSI \\
  & BreakoutTrader & BT \\
\midrule
\multirow{5}{*}{Informed Traders} 
  & ConservativeInformedTrader & CIT \\
  & AggressiveInformedTrader & AIT \\
  & SpreadAwareInformedTrader & SAIT \\
  & ScaledInformedTrader & SIT \\
  & OpportunisticInformedTrader & OIT \\
\midrule
\multirow{3}{*}{Reinforcement Learning} 
  & DoubleDQNTrader & DoubleDQN \\
  & PPO\_Trader & PPO \\
  & A2C\_Trader & A2C \\
\midrule
\multirow{3}{*}{Time-Series Models} 
  & iTransformer\_Trader & ITT \\
  & Informer\_Trader & IT \\
  & TimeMixer\_Trader & TimeMixer \\
\midrule
\multirow{4}{*}{LLM-based Agents} 
  & LLM\_ReAct\_Trader & ReAct \\
  & LLM\_CoT\_Trader & CoT \\
  & LLM\_Trading\_Trader & Trading \\
  & LLM\_FinCon\_Trader & FinCon \\
\bottomrule

\end{tabular}
\end{table}

\subsection{Agent library}
This section describes the full design of each agent archetype in our framework, including their trading objectives, information sources, decision rules, and execution mechanisms. 
By detailing both classical strategies (e.g., technical and informed traders) and modern adaptive agents (e.g., deep learning, reinforcement learning, and LLM-based models), the agent library highlights the heterogeneity of behaviors underlying FinEvo’s ecological simulations.
\subsubsection{Informational Traders Agent}
\label{app:agent_libarary}
\begin{itemize}
  \item \textbf{Trading Objective} \\
  Exploit discrepancies between the market price $P_t$ and an oracle‐provided fundamental value $P^*_t$ to generate profitable trades without long‐term directional bias.

  \item \textbf{Signal Computation} \\
  At each time $t$, compute the relative deviation
  \[
    \delta_t = \frac{P_t - P^*_t}{P^*_t}\,,
    \quad P_t = \frac{P^b_t + P^a_t}{2}
  \]

  \item \textbf{Decision Rule} \\
    Given a deviation threshold $\theta > 0$, the agent's trading decision $a_t^i \in \{\text{buy}, \text{sell}, \text{hold}\}$ is defined as:
  \[
  a_t^i =
  \begin{cases}
  \text{buy}, & \text{if } \delta_t < -\theta \text{ and } h_t^i \leq 0 \\
  \text{sell}, & \text{if } \delta_t > \theta \text{ and } h_t^i > 0 \\
  \text{hold}, & \text{otherwise}
  \end{cases}
  \]
  where $h_t^i$ denotes agent $i$'s current inventory position at time $t$.

  \item \textbf{Execution Mechanism} \\
  Depending on variant, the agent submits orders as follows:
  \begin{itemize}
    \item \emph{Conservative}: place \textbf{limit orders} at
      \[
        P^{\text{buy}} = P^b_t + \epsilon,\quad
        P^{\text{sell}} = P^a_t - \epsilon
      \]
    \item \emph{Aggressive}: place \textbf{market orders} immediately at the best ask/bid.
    \item \emph{Opportunistic}: place \textbf{market orders} of size
      \[
        q_t =
        \begin{cases}
          \lambda\,q, & \text{if liquidity is low}\\
          q,          & \text{otherwise}
        \end{cases}
      \]
      where $q$ is the base trade quantity and $\lambda > 1$ is an amplification factor.
  \end{itemize}
\end{itemize}

\subsubsection{Market Maker Agent}

\begin{itemize}

  \item \textbf{Trading Objective} \\
  Maintain liquidity in the limit order book by continuously submitting limit orders on both bid and ask sides across multiple price levels, aiming to profit from the bid-ask spread.

  \item \textbf{Signal Computation} \\
  At each decision time $t$, observe the best bid and ask prices:
  \[
  P^b_t, \quad P^a_t
  \]
  Calculate the mid-price and half-spread:
  \[
  P^m_t = \frac{P^b_t + P^a_t}{2}, \quad s_t = \frac{P^a_t - P^b_t}{2}
  \]
  Sample the total order size per side from a predefined range:
  \[
  q_t \sim [q_{\min}, q_{\max}]
  \]
  Determine the number of price levels $L$ to quote on each side, sampled randomly.

  \item \textbf{Decision Rule} \\
  For each level $i = 0, 1, \dots, L-1$, compute limit order prices:
  \[
  P^b_{t,i} = P^m_t - s_t - i, \quad P^a_{t,i} = P^m_t + s_t + i
  \]
  Allocate order sizes at each level according to weight vector $w = (w_1, \dots, w_L)$, satisfying
  \[
  \sum_{i=1}^L w_i = 1, \quad q_{t,i} = w_i \cdot q_t
  \]

  \item \textbf{Execution Mechanism} \\
  \begin{itemize}
    \item \textbf{Polling mode (subscribe=False)}: Query current market spread and place symmetric limit orders around the mid-price with uniform size at each level.
    \item \textbf{Subscription mode (subscribe=True)}: Subscribe to order book updates, randomly select $L \in \{1, \ldots, 5\}$ levels, and allocate volumes per the predefined weight vector $w$ from the levels quote dictionary.
  \end{itemize}

\end{itemize}

\subsubsection{Noise Trader Agent}

\begin{itemize}

  \item \textbf{Trading Objective} \\
  Operates without access to fundamental information or technical indicators. Makes randomized decisions to simulate irrational or uninformed trading behavior.

  \item \textbf{Decision Rule at time $t$} \\
  With equal probability $p = 0.5$, the agent executes one of two actions:

  \begin{itemize}
    \item \textit{Aggressive Execution (Market Order)}: \\
    Submits a market buy or sell order with equal probability:
    \[
    q_t \sim \mathcal{U}\{10, 20, ..., 100\}, \quad 
    a_t =
    \begin{cases}
    \text{MarketBuy}(q_t), & \text{prob. } 0.5 \\
    \text{MarketSell}(q_t), & \text{prob. } 0.5
    \end{cases}
    \]

    \item \textit{Passive Execution (Limit Order)}: \\
    Submits a limit buy or sell order offset from the mid-price:
    \[
    P^m_t = \frac{P^b_t + P^a_t}{2}, \quad \epsilon_t \sim \mathcal{U}(0.001, 0.005)
    \]
    \[
    P^{\text{limit}}_t =
    \begin{cases}
    \lfloor P^m_t (1 - \epsilon_t) \rfloor, & \text{if buy} \\
    \lceil P^m_t (1 + \epsilon_t) \rceil, & \text{if sell}
    \end{cases}
    \]
    \[
    a_t =
    \begin{cases}
    \text{LimitBuy}(q_t, P^{\text{limit}}_t), & \text{prob. } 0.5 \\
    \text{LimitSell}(q_t, P^{\text{limit}}_t), & \text{prob. } 0.5
    \end{cases}
    \]
  \end{itemize}

  \item \textbf{Next Decision Time} \\
  The next wake-up time is sampled from:
  \[
  t_{\text{next}} = t + \Delta t, \quad \Delta t \sim \mathcal{U}(55, 65)\text{ seconds}
  \]

\end{itemize}

\subsubsection{Event-Driven News Reaction Agent}

\begin{itemize}
  \item \textbf{Trading Objective} \\
  React only to significant market news events (e.g., macroeconomic shocks or asset-specific ratings) by placing immediate market orders in response to the sentiment direction of the news.

  \item \textbf{Information Source} \\
  Subscribes to an \textit{EventBus} that broadcasts discrete news events. Each event includes:
  \begin{itemize}
    \item \texttt{event\_type} $\in$ \{\texttt{POSITIVE}, \texttt{NEGATIVE}, \texttt{UPGRADE}, \texttt{DOWNGRADE}, \dots\}
    \item \texttt{content}: textual description (ignored by logic)
  \end{itemize}

  \item \textbf{Decision Logic on Event Reception} \\
  At time $t$, when a news event $e_t$ is received:
  \[
  \text{Action}_t =
  \begin{cases}
    \text{MarketBuy}(q), & \text{if } \texttt{event\_type} \in \{\texttt{POSITIVE}, \texttt{UPGRADE}\} \\
    \text{MarketSell}(q), & \text{if } \texttt{event\_type} \in \{\texttt{NEGATIVE}, \texttt{DOWNGRADE}\} \\
    \text{NoAction}, & \text{otherwise}
  \end{cases}
  \]
  where $q$ is the fixed order quantity (e.g., $q = 10$ shares).

  \item \textbf{Timing and Wake-up Frequency} \\
  The agent wakes up every $\Delta t = 10$ seconds to maintain kernel activity and ensure one-time subscription to the event stream, but trading decisions are purely event-driven.

  \item \textbf{Execution Mechanism} \\
  All trades are submitted as \textbf{market orders} immediately upon signal reception to ensure fastest reaction to major events.

\end{itemize}

\subsubsection{Reinforcement Learning Trader Agent}

\begin{itemize}
  \item \textbf{Trading Objective} \\
  The RL-based agent aims to maximize cumulative portfolio returns by interacting with the environment in an online manner. At each time step, it observes the current market state, selects an action (buy/sell/hold), and updates its policy parameters through continual learning (online fine-tuning).

  \item \textbf{State Representation} \\
  The market state $s_t$ is represented as follows for different learning schemes:
 \[
\begin{array}{ll}
\textbf{Q-Learning:} & s_t = (x_t^{(1)},\ x_t^{(2)},\ x_t^{(3)})\ \text{(discretized)} \\
\textbf{PPO:} & s_t = \left[ x_t^{(1)},\ x_t^{(2)},\ \ldots,\ x_t^{(n)} \right]\ \text{(continuous and normalized)}
\end{array}
\]
1 

  \item \textbf{Action Policy} \\
  The agent selects actions from a discrete space:
  \[
  \mathcal{A} = \{0 = \text{HOLD},\ 1 = \text{BUY},\ 2 = \text{SELL}\}
  \]
  Each action triggers a market order of fixed quantity $Q$ in the corresponding direction.

  \item \textbf{Reward Function} \\
  The agent is rewarded based on changes in portfolio value:
  \[
  r_t = PV_t - PV_{t-1} - \lambda \cdot \mathbb{I}(a_t \in \{\text{BUY}, \text{SELL}\})
  \]
  where $PV_t$ is portfolio value and $\lambda$ is a trade penalty coefficient.

  \item \textbf{Learning Mechanism} \\

  \begin{enumerate}
    \item \textbf{Q-Learning (Tabular, Online)} \\
    The agent maintains a Q-table $Q(s, a)$ and updates it using the standard Bellman equation:
    \[
    Q(s_t, a_t) \leftarrow Q(s_t, a_t) + \alpha \left[ r_t + \gamma \max_{a'} Q(s_{t+1}, a') - Q(s_t, a_t) \right]
    \]
    Parameters $(\alpha, \gamma)$ denote the learning rate and discount factor, respectively. Updates occur online every $N$ steps from recent experience buffer $D$.

    \item \textbf{Proximal Policy Optimization (PPO)} \\
    The agent learns a stochastic policy $\pi_\theta(a|s)$ using clipped surrogate objective:
    \[
    L^{\text{PPO}} = \mathbb{E}_t \left[ \min \left( r_t(\theta) A_t,\ \text{clip}(r_t(\theta), 1-\epsilon, 1+\epsilon) A_t \right) \right]
    \]
    where $r_t(\theta) = \frac{\pi_\theta(a_t|s_t)}{\pi_{\theta_{\text{old}}}(a_t|s_t)}$ is the probability ratio, and $A_t$ is the advantage function. The agent collects $N$ steps of interaction and performs $K$ epochs of gradient updates for online fine-tuning.
  \end{enumerate}
\end{itemize}

\subsubsection{Technical Trader Agent}

\begin{itemize}

  \item \textbf{Trading Objective} \\
  Execute market orders based on common technical analysis signals derived from historical price data, aiming to capture trends or mean reversion.

  \item \textbf{State Observation} \\
  At each decision time $t$, the agent observes a recent price window:
  \[
  S_t = \{ P_{t-n}, \ldots, P_t \}
  \]
  where $P_t$ is the latest trade price.

  \item \textbf{Trading Quantity} \\
  Fixed order size $Q$ for all trades.

  \item \textbf{Trading Strategies and Decision Rules} \\
  The agent implements one of the following strategies:

  \begin{enumerate}
    \item \textit{Momentum Agent (Moving Average Crossover)} \\
    Compute short-term and long-term moving averages:
    \[
    MA^{short}_t = \frac{1}{N_s} \sum_{i=t-N_s+1}^{t} P_i, \quad
    MA^{long}_t = \frac{1}{N_l} \sum_{i=t-N_l+1}^{t} P_i
    \]
    The trading action $a_t$ is:
    \[
    a_t =
    \begin{cases}
    \text{BUY}, & \text{if } MA^{short}_t > MA^{long}_t \text{ and position} \leq 0 \\
    \text{SELL}, & \text{if } MA^{short}_t < MA^{long}_t \text{ and position} \geq 0 \\
    \text{HOLD}, & \text{otherwise}
    \end{cases}
    \]
    Market order of size $Q$ placed in the direction of $a_t$.

    \item \textit{RSI Agent (Mean Reversion)} \\
    Calculate the Relative Strength Index (RSI):
    \[
    RSI_t = 100 - \frac{100}{1 + RS}, \quad RS = \frac{\text{Average Gain}}{\text{Average Loss}}
    \]
    The trading action $a_t$ is:
    \[
    a_t =
    \begin{cases}
    \text{BUY}, & RSI_t < \theta_{\text{low}} \\
    \text{SELL}, & RSI_t > \theta_{\text{high}} \\
    \text{HOLD}, & \text{otherwise}
    \end{cases}
    \]
    Market order of size $Q$ placed in the direction of $a_t$ (buy low RSI, sell high RSI).

    \item \textit{Breakout Agent (Channel Breakout)} \\
    Define breakout levels:
    \[
    R_t = \max(P_{t-L}, \ldots, P_{t-1}), \quad S_t = \min(P_{t-L}, \ldots, P_{t-1})
    \]
    The trading action $a_t$ is:
    \[
    a_t =
    \begin{cases}
    \text{BUY}, & P_t > R_t \\
    \text{SELL}, & P_t < S_t \\
    \text{HOLD}, & \text{otherwise}
    \end{cases}
    \]
    Market order of size $Q$ placed following $a_t$.

  \end{enumerate}

\end{itemize}

\subsubsection{Deep Learning Agent}

\begin{itemize}
  \item \textbf{Trading Objective} \\
  Predict the future direction of asset prices using a deep neural network trained on market indicators, and make trading decisions accordingly.

  \item \textbf{Inputs and Prediction} \\
  At each decision time $t$, the agent receives a vector of processed market indicators (e.g., RSI, MACD, spread, etc.), denoted by $x_t$. The agent uses a parameterized neural network $f_\theta(x_t)$ to produce logits corresponding to three classes: price \textit{decrease}, \textit{no change}, or \textit{increase}.
  \[
  \hat{y}_t = f_\theta(x_t) \in \mathbb{R}^3, \quad \text{with } \text{action}_t = \arg\max_i \hat{y}_{t,i}
  \]

  \item \textbf{Decision Rule} \\
  Based on the predicted class:
  \begin{itemize}
    \item If $\text{action}_t = \text{increase}$: place a market buy order of size $Q$
    \item If $\text{action}_t = \text{decrease}$: place a market sell order of size $Q$
    \item Otherwise: hold
  \end{itemize}

  \item \textbf{Online Supervised Learning} \\
  The agent continuously collects experience tuples $(x_t, y_t)$, where $y_t$ is the realized future price direction after a fixed prediction horizon. These are stored in a training buffer, and used to fine-tune the model parameters $\theta$ periodically via stochastic gradient descent:
  \[
  \theta \leftarrow \theta - \eta \cdot \nabla_\theta \mathcal{L}(f_\theta(x), y)
  \]
  where $\mathcal{L}$ is the cross-entropy loss.

  \item \textbf{Learning Cycle} \\
  The agent alternates between:
  \begin{enumerate}
    \item \textbf{Prediction:} Use the current model $f_\theta$ to make trading decisions
    \item \textbf{Labeling:} Wait until the prediction horizon expires to label old observations
    \item \textbf{Fine-Tuning:} Train $f_\theta$ on recent labeled samples from the buffer
  \end{enumerate}
\end{itemize}

\subsubsection{TimeMixer Trading Agent}

\begin{itemize}
  \item \textbf{Trading Objective} \\
  Predict the short-term price direction of the traded asset using a deep learning model (TimeMixer), and place market orders accordingly. The model is continuously fine-tuned online with newly observed data to improve predictive performance.

  \item \textbf{Information Source} \\
  Periodically requests a vector of standardized market indicators from a \textit{DataCenterAgent}. These indicators may include measures of momentum, liquidity, order book imbalance, and normalized position holdings.
 
  \item \textbf{Decision Logic on Indicator Reception} \\
  At time $t$, when an indicator vector $x_t$ is received, the TimeMixer model outputs a trading signal $\hat{y}_t \in \{\texttt{Buy}, \texttt{Sell}, \texttt{Hold}\}$, leading to the following actions:
  \begin{itemize}
    \item $\hat{y}_t = \texttt{Sell}$: predicted price decrease $\;\Rightarrow\;$ place \texttt{MarketSell}$(q)$
    \item $\hat{y}_t = \texttt{Hold}$: predicted no significant change $\;\Rightarrow\;$ take \texttt{NoAction}
    \item $\hat{y}_t = \texttt{Buy}$: predicted price increase $\;\Rightarrow\;$ place \texttt{MarketBuy}$(q)$
  \end{itemize}
  where $q$ denotes the fixed order quantity.

  \item \textbf{Labeling and Online Training} \\
  Each observation $x_t$ is stored and labeled after a prediction horizon $\Delta T$ based on realized price movement. If the relative change exceeds an upper threshold $\theta_1$, the label is ``up''; if it falls below a lower threshold $\theta_2$, the label is ``down''; otherwise the label is ``neutral''. 
  Experiences $(x_t, y_t)$ are stored in a replay buffer. At fixed intervals, the agent samples minibatches from the buffer and fine-tunes the model parameters $\theta$ via stochastic gradient descent.

  \item \textbf{Timing and Wake-up Frequency} \\
  The agent wakes up at stochastic intervals $\Delta t$ to request new indicators, and may schedule short-term wakeups while awaiting responses to maintain activity.

  \item \textbf{Execution Mechanism} \\
  All trades are submitted as \textbf{market orders} with fixed size to ensure immediate execution. The agent dynamically updates its cash and holdings according to executed trades.
\end{itemize}

\subsubsection{ITransformer Trading Agent}

\begin{itemize}
  \item \textbf{Trading Objective} \\
  Employ a simplified iTransformer model for supervised short-term price prediction. The agent uses predicted signals to submit market orders, while continuously fine-tuning the model with streaming market data.

  \item \textbf{Information Source} \\
  Periodically requests a vector of technical indicators (e.g., momentum, order book imbalance, liquidity spread, position features) from a \textit{DataCenterAgent}. These indicators are normalized and transformed into input features for the iTransformer.

  \item \textbf{Decision Logic on Indicator Reception} \\
  Upon receiving an indicator vector $x_t$, the iTransformer outputs a trading signal $\hat{y}_t \in \{\texttt{Buy}, \texttt{Sell}, \texttt{Hold}\}$:
  \begin{itemize}
    \item $\hat{y}_t = \texttt{Buy}$ $\;\Rightarrow\;$ place \texttt{MarketBuy}$(q)$
    \item $\hat{y}_t = \texttt{Sell}$ $\;\Rightarrow\;$ place \texttt{MarketSell}$(q)$
    \item $\hat{y}_t = \texttt{Hold}$ $\;\Rightarrow\;$ take \texttt{NoAction}
  \end{itemize}
  where $q$ denotes the fixed trade quantity.

  \item \textbf{Labeling and Online Training} \\
  Each observation is assigned a label after a prediction horizon $\Delta T$, based on realized price change relative to past levels. If the change exceeds an upper threshold $\theta_1$, the label is ``Buy''; if below a lower threshold $\theta_2$, the label is ``Sell''; otherwise it is ``Hold''.  
  These labeled samples are stored in a replay buffer. At regular intervals, minibatches are sampled and the model parameters are fine-tuned using stochastic gradient descent.

  \item \textbf{Timing and Wake-up Frequency} \\
  The agent schedules wake-ups at randomized intervals to request new indicators. During waiting periods, short-term wakeups are used to retry or maintain activity.

  \item \textbf{Execution Mechanism} \\
  All trades are executed as \textbf{market orders} with fixed size to ensure immediate execution. The portfolio of cash and holdings is dynamically updated after each trade.
\end{itemize}

\subsubsection{Informed Trader Agents}

\begin{itemize}
  \item \textbf{Trading Objective} \\
  Informed traders compare the market price with the fundamental value (as provided by an oracle). If the relative deviation exceeds a predefined threshold, they place orders according to their specific strategy.

  \item \textbf{Information Source} \\
  The agent observes the fundamental value from the oracle and retrieves the best bid and ask quotes from the order book maintained by the \textit{ExchangeAgent}. From this information, the deviation ratio is computed:
  \[
    d_t = \frac{p^{\text{market}}_t - p^{\text{fundamental}}_t}{p^{\text{fundamental}}_t}.
  \]

  \item \textbf{Decision Logic} \\
  The base class (\texttt{InformedTraderAgent}) provides a general mechanism for computing deviation and submitting orders. Subclasses override the \texttt{take\_action()} method to implement distinct trading styles:

  \begin{itemize}
    \item \textbf{Conservative Informed Agent} \\
    Places \textbf{limit orders} when deviation exceeds the threshold, aiming to trade passively. Buys when the market is undervalued; sells when overvalued, but only if current positions allow.

    \item \textbf{Aggressive Informed Agent} \\
    Reacts with \textbf{market orders} immediately once deviation exceeds the threshold. Executes a buy when the asset is undervalued and a sell when overvalued.

    \item \textbf{Opportunistic Informed Agent} \\
    Evaluates available liquidity before trading. If liquidity on the opposite side of the book is thin, scales up the order size; otherwise uses standard quantity. Orders are submitted as market orders.

    \item \textbf{Scaled Informed Agent} \\
    Adjusts order size proportionally to the magnitude of deviation. Larger deviations lead to larger trade quantities, up to a capped multiple. Trades are placed using limit orders.

    \item \textbf{Spread-Aware Informed Agent} \\
    Chooses between limit and market orders depending on the bid-ask spread. If the spread is wide, prefers limit orders to capture better prices; if narrow, executes with market orders for immediacy.
  \end{itemize}

  \item \textbf{Timing and Wake-up Frequency} \\
  The agent wakes up at randomized intervals during the market session to re-evaluate deviation, update its position, and potentially submit orders.

  \item \textbf{Execution Mechanism} \\
  Execution style depends on the subclass: passive (limit orders), aggressive (market orders), or hybrid (spread-aware or scaled). All order messages are routed through the exchange and logged for state tracking.
\end{itemize}
\subsubsection{Technical Trader Agents}

\begin{itemize}
  \item \textbf{Trading Objective} \\
  Technical traders base their trading decisions on historical price patterns and derived technical indicators. They aim to capture short-term price trends, reversals, or breakout movements according to their respective strategy.

  \item \textbf{Information Source} \\
  Each agent observes recent trade prices from the market and computes technical indicators such as moving averages, RSI, or price channel levels. These indicators are used to determine market trends or overbought/oversold conditions.

  \item \textbf{Decision Logic} \\
  The agents implement different trading styles:
  \begin{itemize}
    \item \textbf{Momentum Agent} \\
    Uses a dual moving average crossover strategy. Buys when the short-term average crosses above the long-term average (uptrend) and sells when the short-term average crosses below the long-term average (downtrend). Orders are placed as market orders.

    \item \textbf{RSI Agent} \\
    Applies the Relative Strength Index (RSI) for mean-reversion trading. Buys when the market is oversold (RSI below threshold) and sells when overbought (RSI above threshold). This agent also uses market orders for execution.

    \item \textbf{Breakout Agent} \\
    Follows a channel breakout strategy. If the current price exceeds the recent high (resistance), the agent buys; if it drops below the recent low (support), it sells. Orders are executed immediately using market orders.
  \end{itemize}

  \item \textbf{Timing and Wake-up Frequency} \\
  Agents wake up at randomized intervals to observe the market, compute indicators, and evaluate trading opportunities. The randomness introduces desynchronization among agents, avoiding artificial simultaneity.

  \item \textbf{Execution Mechanism} \\
  All agents place market orders based on signals derived from their respective indicators. Trade execution is logged, and positions are updated accordingly. Agents track cash, stock holdings, and portfolio value as part of their internal state.
\end{itemize}

\subsubsection{LLM-Based Trading Agents}

LLMs integrate market indicators and news information to generate trading actions using different reasoning mechanisms.

\begin{itemize}

  \item \textbf{LLM ReAct Trader} \\
  \textbf{Trading Objective:} Incorporate market indicators and news text into a ReAct reasoning process to produce immediate trading actions while logging interpretable reasoning. \\
  \textbf{State Observation:} At each decision time $t$, the agent observes a vector of processed market indicators $x_t$ and recent news $n_t$. \\
  \textbf{Decision Rule:} The agent performs ReAct reasoning (Reason $\rightarrow$ Act) to select an action $a_t \in \{\text{BUY, SELL, HOLD}\}$:
  \[
  a_t = \text{ReAct}(x_t, n_t)
  \]
  Market orders of fixed quantity $Q$ are placed according to $a_t$. \\
  \textbf{Logging:} Reasoning steps and intermediate analysis are stored for interpretability.

  \item \textbf{LLM CoT Trader} \\
  \textbf{Trading Objective:} Generate a chain-of-thought (CoT) reasoning process using indicators and news to improve decision quality. \\
  \textbf{State Observation:} At time $t$, observe $x_t$ and $n_t$. \\
  \textbf{Decision Rule:} Construct a sequence of reasoning steps:
  \[
  \text{CoT}_t = \{ s_1, s_2, \dots, s_k \}, \quad
  a_t = \text{FinalDecision}(\text{CoT}_t)
  \]
  Market orders of quantity $Q$ are executed based on the final action.

  \item \textbf{LLM Trading Trader} \\
  \textbf{Trading Objective:} Produce a structured trading plan (action + quantity) using indicators and news information. \\
  \textbf{State Observation:} Observe $x_t$ and $n_t$ at decision time $t$. \\
  \textbf{Decision Rule:} Generate a trading plan:
  \[
  \text{Plan}_t = \{ \text{action}_t, \text{quantity}_t \} = \text{LLMTradingModel}(x_t, n_t)
  \]
  The agent executes market orders following the proposed plan.

  \item \textbf{LLM FinCon Trader} \\
  \textbf{Trading Objective:} Generate trading actions under financial constraints (e.g., maximum position limits), integrating market indicators and news. \\
  \textbf{State Observation:} At time $t$, observe $x_t$, $n_t$, and current holdings $h_t$. \\
  \textbf{Decision Rule:} Compute a constrained action:
  \[
  a_t, q_t = \text{ConstrainedDecision}(x_t, n_t, h_t, Q, h_{\max})
  \]
  Market orders are placed only if they respect financial constraints. Reasoning and planned quantity are logged.

\end{itemize}
\section{ Experimental Details and Results }
\subsection{Simulation Under Artificial Shocks}
\label{appendix:shock_networks}

\paragraph{Trading Price}
\begin{figure}[h]
    \centering
    \includegraphics[width=1.0\linewidth]{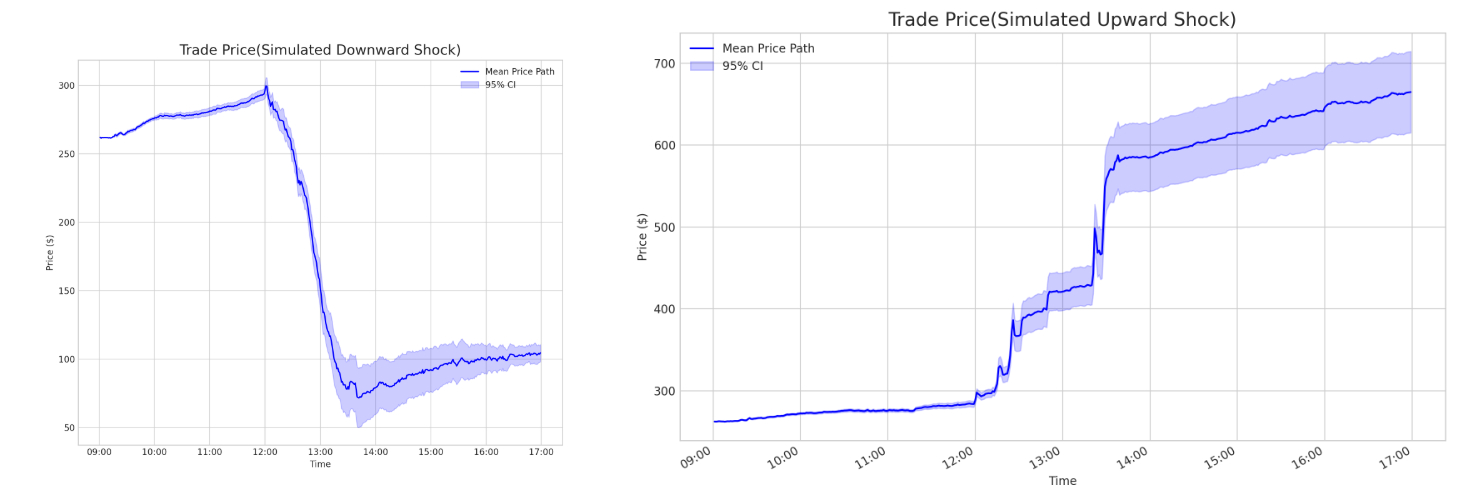}
    \caption{Impact of simulated exogenous shocks on trade prices. 
    The blue line indicates the mean transaction price path, 
    and the shaded area represents the $\pm 3\sigma$ confidence band across simulation runs.}
    
\end{figure}

The left panel shows the average price trajectory under an upward shock injected at 13:00, 
while the right panel illustrates the dynamics under a downward shock injected at 12:00. 
In both cases, the blue line represents the mean path of simulated trade prices, and the shaded region denotes 
the variability across runs, measured as 95\% CI above and below the mean. 
The upward shock leads to a persistent price increase, whereas the downward shock triggers a sharp drop 
followed by gradual decline, closely resembling stylized market reactions. 
These results further confirm that our framework produces plausible market responses under extreme 
exogenous disturbances, reinforcing the validity of the proposed ecological dynamics model.
\label{app:impact_price}

\subsection*{Shock-Induced Network Dynamics}

To complement the main analysis, we further investigate how external shocks reshape 
the structural interactions between strategies. Figures~\ref{fig:cooccurrence_up} 
and~\ref{fig:mi_up} present the evolution of co-occurrence matrices and mutual-information 
networks under a positive shock, while Figures~\ref{fig:cooccurrence_down} 
and~\ref{fig:mi_down} show the corresponding results under a negative shock. 
These views provide complementary evidence of robustness and reveal how alliances reorganize 
in response to shocks.

\begin{figure}[h]
    \centering
    \includegraphics[width=\linewidth]{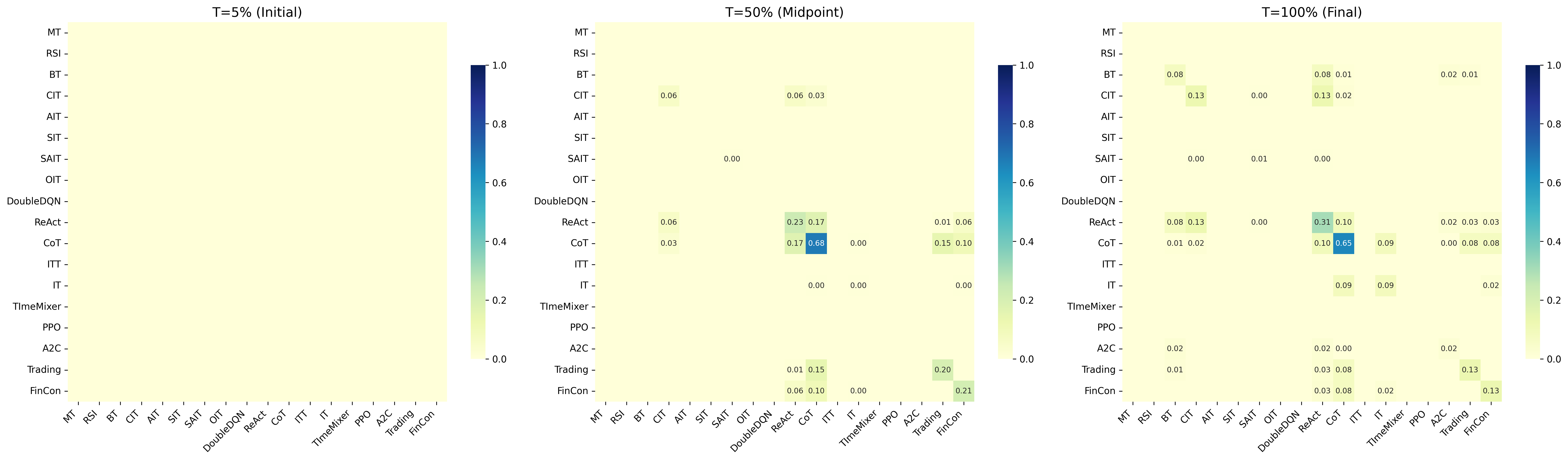}
    \caption{Evolution of strategy co-occurrence matrices under a positive shock. 
    Values indicate the frequency with which pairs of strategies exceed the dominance threshold 
    across different stages (initial, midpoint, final).}
    \label{fig:cooccurrence_up}
\end{figure}

\begin{figure}[h]
    \centering
    \includegraphics[width=0.9\linewidth]{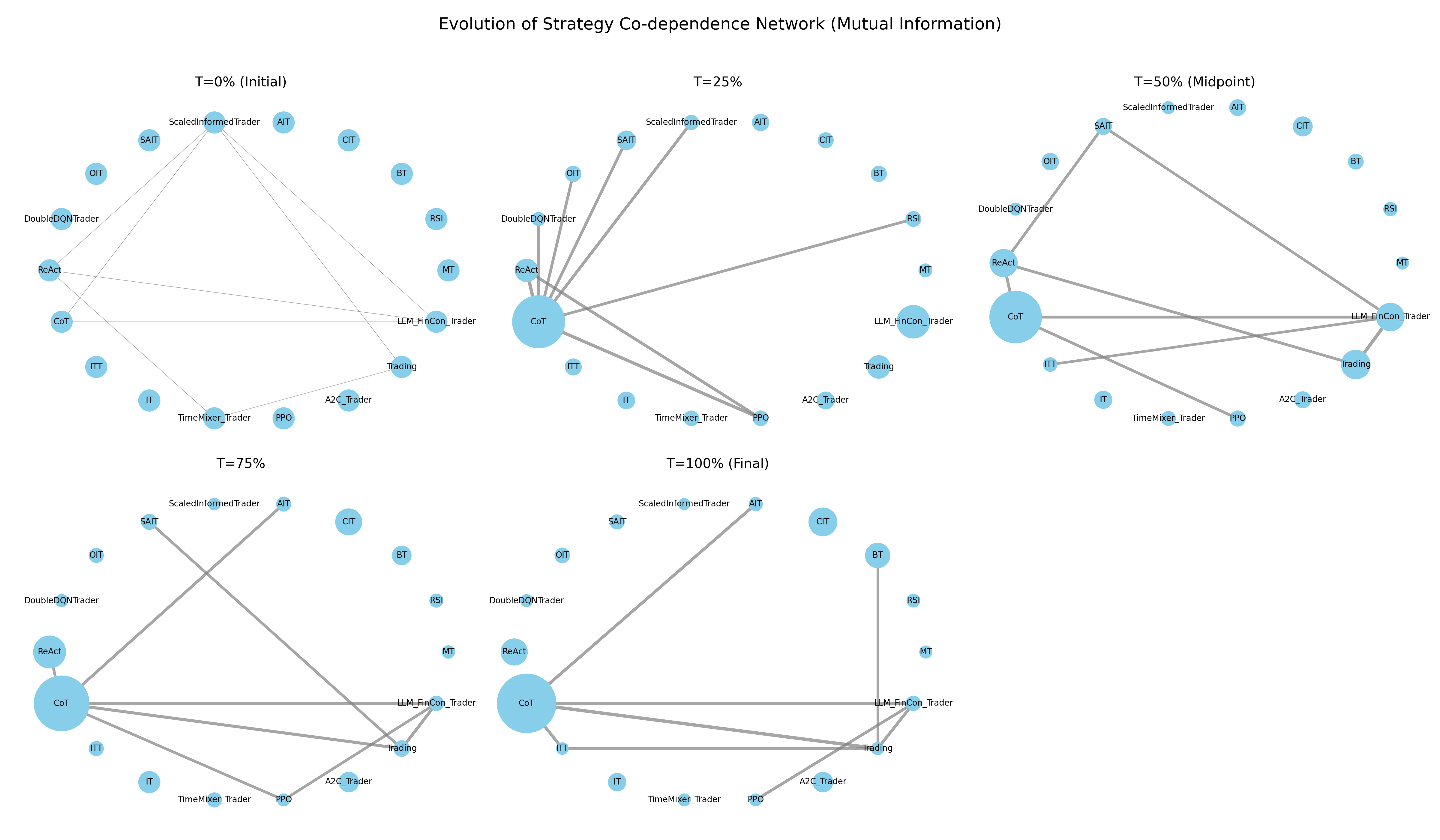}
    \caption{Evolution of mutual-information networks under a positive shock. 
    Node size reflects strategy dominance, and edge thickness encodes co-dependence strength.}
    \label{fig:mi_up}
\end{figure}

\begin{figure}[h]
    \centering
    \includegraphics[width=\linewidth]{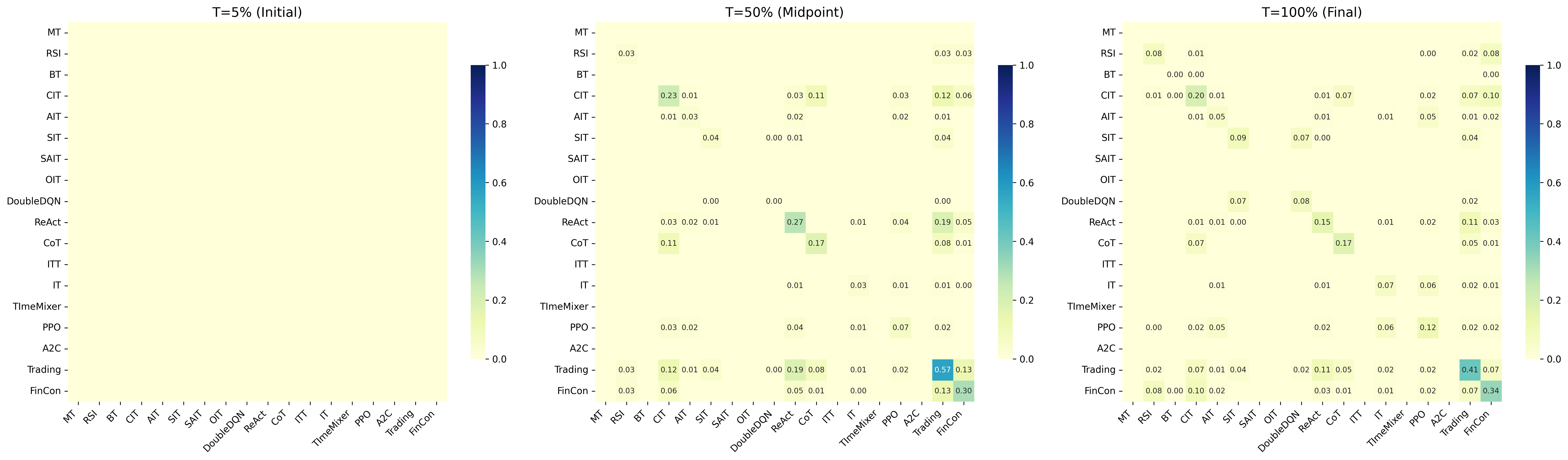}
    \caption{Evolution of strategy co-occurrence matrices under a negative shock. 
    Compared to the positive case, the network re-concentrates more quickly, 
    with fewer but stronger alliances emerging.}
    \label{fig:cooccurrence_down}
\end{figure}

\begin{figure}[h]
    \centering
    \includegraphics[width=0.9\linewidth]{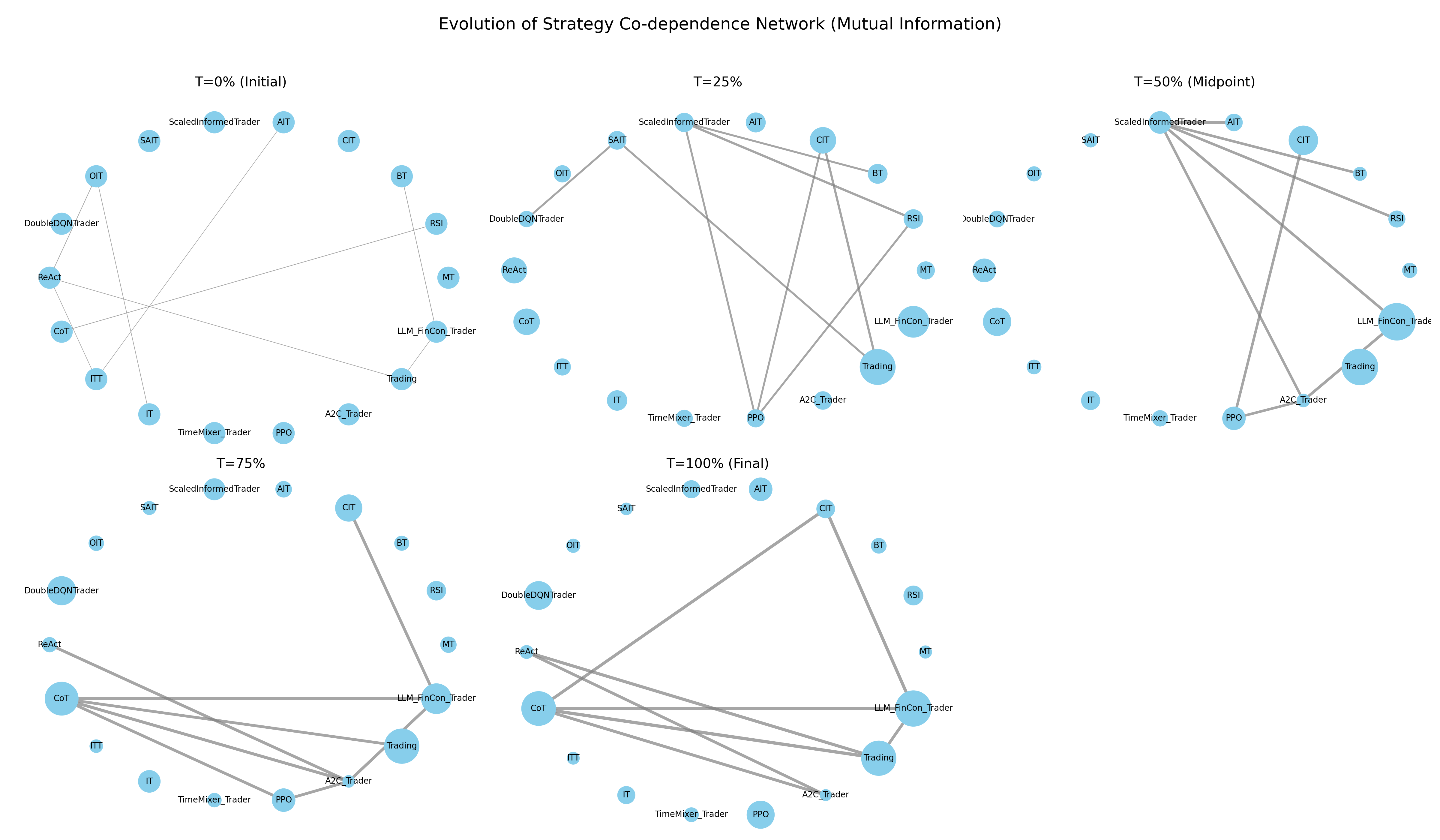}
    \caption{Evolution of mutual-information networks under a negative shock. 
    The results highlight the collapse of pluralistic alliances and the rise of 
    concentrated hubs dominated by LLM agents.}
    \label{fig:mi_down}
\end{figure}

\subsection*{Intraday Network Evolution (July 7, 2025)}
\label{app:intraday_0707}

To examine how ecological structures evolve within a normal trading day (without exogenous shocks), 
we track (i) the \emph{co-occurrence} of dominant strategies and (ii) the \emph{mutual-information (MI)} 
network among strategies on July~7,~2025. Time is normalized to the trading session 
($T{=}0\%,25\%,50\%,75\%,100\%$). In the co-occurrence view, an entry $(i,j)$ records the frequency 
with which strategies $i$ and $j$ are simultaneously dominant (share $>{}$12\%). 
In the MI network, node size reflects the dominance of a strategy and edge thickness encodes 
co-dependence strength.

\begin{figure}[h]
    \centering
    \includegraphics[width=\linewidth]{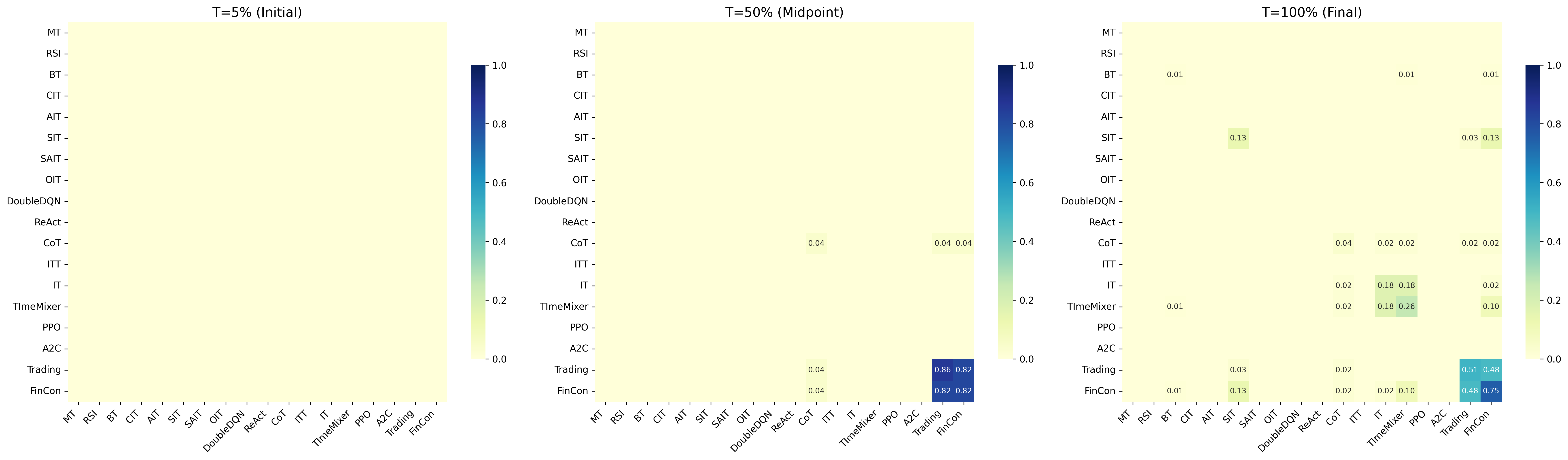}
    \caption{Intraday evolution of \textbf{co-occurrence} matrices on July 7, 2025 
    (dominance threshold $>$12\%). Early in the day ($T{=}0\%$) the matrix is near-empty; 
    mid-day ($T{=}50\%$) a few pairs begin to co-dominate; by the close ($T{=}100\%$) 
    the structure consolidates around a small number of persistent alliances while many 
    strategies remain peripheral.}
    \label{fig:cooccurrence_intraday_0707}
\end{figure}

\begin{figure}[h]
    \centering
    \includegraphics[width=0.95\linewidth]{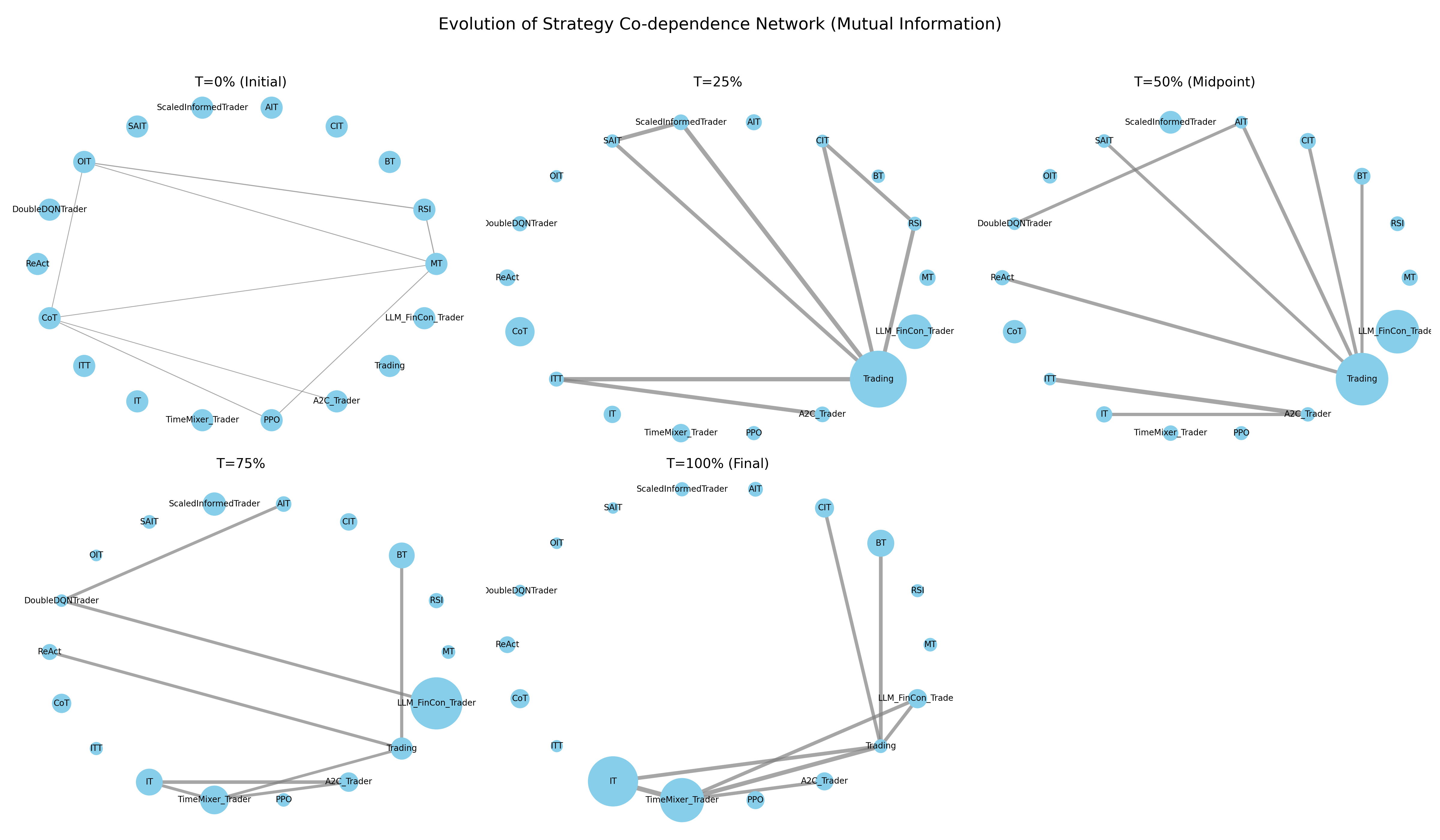}
    \caption{Intraday evolution of the \textbf{mutual-information} network on July 7, 2025. 
    Node size indicates dominance; edge thickness indicates co-dependence strength. 
    The network transitions from a sparse, weakly connected configuration in the morning 
    to a hub-centric structure by mid-day, followed by selective pruning and stabilization 
    toward the close.}
    \label{fig:mi_intraday_0707}
\end{figure}

\paragraph{Detailed analysis.}
\textbf{(Early, $T{=}0\%-25\%$)}\;Co-occurrence is negligible and the MI network is sparse, 
indicating a fragmented ecology with weak coordination. Most strategies operate independently.

\textbf{(Mid-day, $T{=}50\%$)}\;The first \emph{co-dominant pairs} emerge in the co-occurrence matrix, 
and the MI network becomes \emph{hub-centric}: one or two strategies act as connectors 
linking otherwise disjoint clusters. This marks a regime shift from exploratory behavior 
to coordinated specialization.

\textbf{(Late, $T{=}75\%-100\%$)}\;A small set of alliances persists and strengthens, while peripheral 
strategies decouple. The MI network shows \emph{selective pruning}: superfluous links vanish and 
high-weight ties concentrate around the main hub(s). This coincides with the end-of-day 
re-concentration seen in population shares.

\paragraph{Takeaways.}
Across the day, the ecology moves from \emph{diffuse} (low co-occurrence, sparse MI) to 
\emph{organized} (stable alliances, hub-centric MI), then \emph{stabilizes} toward the close. 
This intraday reorganization is consistent with our macro findings: exploration predominates early, 
while selection gradually consolidates alliances into a small, persistent backbone.

\subsection{Additional Intraday Trajectories Across Multiple Real-Market Days}
\label{app:intraday-multiple-days}

In the main text (Figure 6), we evaluated FinEvo on two long real-market periods:
\emph{January–October 2022} (bear regime) and \emph{October 2024–July 2025} (bull regime).
These experiments already aggregate ecosystem statistics over \textbf{many months} of real data.
However, due to space constraints, only a single representative intraday trajectory was shown in the main text.

Therefore, we provide
\textbf{additional intraday population-evolution trajectories} from multiple dates across both bull and bear markets.
These plots demonstrate that FinEvo's ecological dynamics are not specific to a single trading day:
dominance shifts, strategy turnover, and phase-change patterns remain consistent across a wide range of trading conditions.

\begin{figure}[h!]
    \centering
    \includegraphics[width=0.95\linewidth]{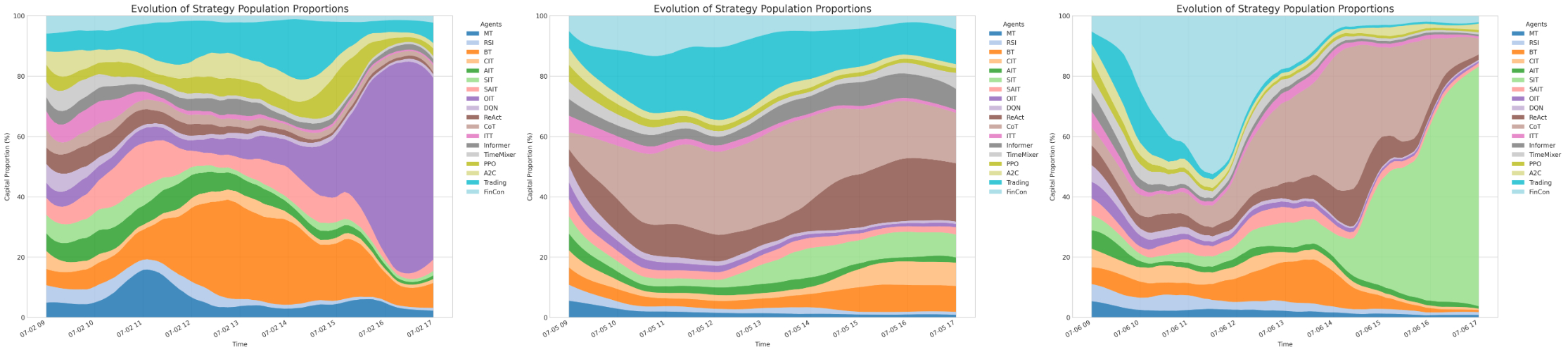}
    \caption{\textbf{Intraday population evolution on 2025-07-02, 07-05, and 07-06.}}
    \label{fig:intraday-2022-08-05}
\end{figure}

\begin{figure}[h!]
    \centering
    \includegraphics[width=0.95\linewidth]{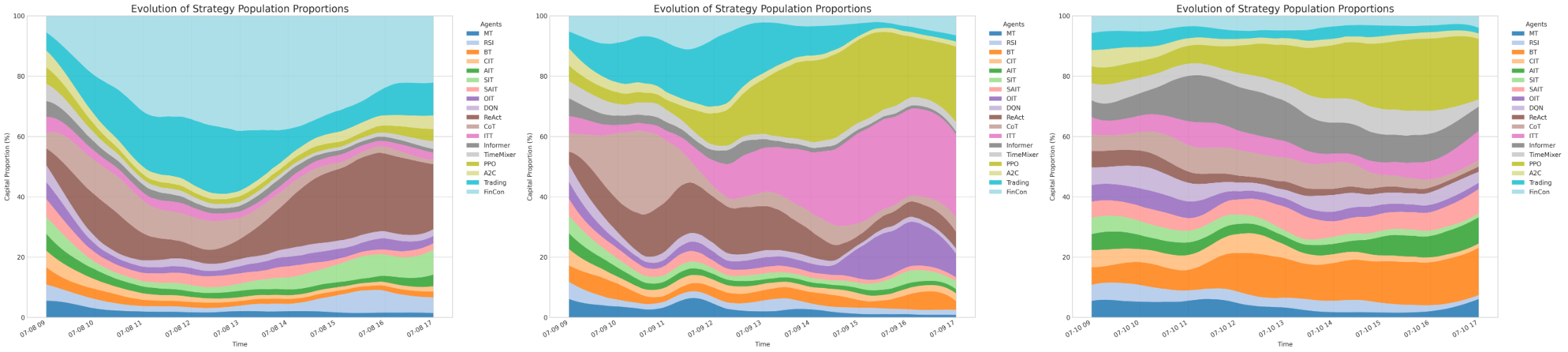}
    \caption{\textbf{Intraday population evolution on 2025-07-08, 07-09, and 07-10.}}
    \label{fig:intraday-2025-07-08}
\end{figure}

These results confirm that FinEvo's real-data performance is \textbf{robust across days, months, and market regimes},

\subsection{Robustness to Perturbations of Initial Agent Parameters}
\label{app:perturbed-parameters}

To assess whether FinEvo’s ecological dynamics depend sensitively on the initial agent-level hyperparameters, 
we conducted a \textit{Perturbed-Parameter} robustness experiment. For each Monte Carlo run, all key 
hyperparameters (e.g., RL learning rates, window lengths for rule-based agents, and LLM prompting parameters) 
were randomly perturbed by $\pm 20\%$ around their baseline values, sampled from a uniform distribution. 
This produces substantially different heterogeneous initializations for every simulation.

\begin{table}[h!]
\centering
\caption{Robustness under perturbed agent parameters (128 runs). 
Values report mean $\pm$ 95\% CI. Stylized-fact metrics remain stable under $\pm 20\%$ perturbations.}
\label{tab:perturbed-params}
\scriptsize
\begin{tabular}{lccccc}
\toprule
\textbf{Regime} 
& \textbf{Model} 
& \textbf{Ex. Kurtosis$_{\text{mean}}$$\uparrow$} 
& \textbf{Skewness$_{\text{mean}}$$\downarrow$} 
& \textbf{Sharpe Disp.$_{\text{mean}}$$\uparrow$} 
& \textbf{Vol-of-Vol$_{\text{mean}}$$\uparrow$} \\
\midrule
Bullish 
& FinEvo 
& $3.163 \pm 0.138$
& $-1.532 \pm 0.129$
& $1.669 \pm 0.061$ 
& $0.008 \pm 0.0004$ \\
& FinEvo (Perturbed) 
& $3.141 \pm 0.152$
& $-1.596 \pm 0.138$
& $1.638 \pm 0.067$
& $0.008 \pm 0.0004$ \\
\midrule
Bearish 
& FinEvo 
& $4.259 \pm 0.172$
& $-1.793 \pm 0.149$
& $1.732 \pm 0.089$
& $0.013 \pm 0.0007$ \\
& FinEvo (Perturbed) 
& $4.196 \pm 0.181$
& $-1.692 \pm 0.157$
& $1.684 \pm 0.094$
& $0.012 \pm 0.0008$ \\
\bottomrule
\end{tabular}
\end{table}

Across both bull and bear regimes, the perturbed-parameter version closely matches the baseline across all stylized-fact metrics. 
This confirms that heavy tails, downside asymmetry, Sharpe dispersion, and volatility clustering are 
\textbf{stable emergent properties of FinEvo’s evolutionary dynamics}, not artifacts of specific hyperparameter choices.

\subsection{Robustness to Correlated, Heavy-Tailed Shocks}
\label{app:correlated_shocks}

In the main text, our analytical results are derived under the standard assumption of independent Gaussian shocks. To verify that our empirical findings are not an artifact of this simplification, we conduct an additional robustness experiment in which payoff and environmental shocks are both \emph{correlated} and \emph{heavy-tailed}.
Concretely, we modify the noise driving the value processes $V_k(t)$ as follows. At each time step $t$, instead of drawing independent standard normal increments, we construct strategy-level shocks

\begin{equation}
    \epsilon_k(t) \;=\; \rho\, Z_{\text{common}}(t) 
    \;+\; \sqrt{1-\rho^2}\, Z_{k,\text{idio}}(t),
\end{equation}

where $\rho \in (0,1)$ controls the correlation strength (we use $\rho = 0.5$ in our experiments), $Z_{\text{common}}(t)$ is a \emph{common macro factor} shared by all strategies, and $Z_{k,\text{idio}}(t)$ is an idiosyncratic component specific to strategy $k$. Both $Z_{\text{common}}(t)$ and $Z_{k,\text{idio}}(t)$ are drawn from a standardized Student-\(t\) distribution with $\nu$ degrees of freedom (we use $\nu = 5$), rescaled to have unit variance. This yields positively correlated, heavy-tailed shocks at the strategy level:

\begin{equation}
    \text{Cov}\bigl(\epsilon_k(t), \epsilon_\ell(t)\bigr) \;=\; \rho^2 \quad (k \neq \ell),
\end{equation}

in contrast to the independent Gaussian setting in the main analysis.
We then drive the value process of each strategy with these correlated heavy-tailed shocks:

\begin{equation}
    dV_k(t) \;=\; \lambda_k \bigl(m_t - V_k(t)\bigr)\,dt 
    \;+\; \nu_k \,\epsilon_k(t)\,\sqrt{dt},
\end{equation}

and, for the environmental shock, we take
\begin{equation}
    \eta_t \;=\; Z_{\text{common}}(t),
\end{equation}

so that strategy-level payoffs are explicitly correlated both with each other and with the environment shock.
We rerun the main set of simulations under this correlated, heavy-tailed specification using the same market configurations as in Section~\ref{sec:experiments}. The resulting markets continue to exhibit (i) fat-tailed and negatively skewed PnL distributions, (ii) pronounced volatility clustering and elevated volatility-of-volatility, and (iii) similar ecological phase changes in the strategy population. Quantitatively, the stylized-fact metrics change only moderately; a detailed comparison is reported in Table~\ref{tab:correlated-shock}. These results confirm that the independence/normality assumptions are primarily used for analytical tractability in the SDE derivation and are not critical for the qualitative behavior of FinEvo.
\begin{table}[h!]
\centering
\caption{Robustness to correlated, heavy-tailed shocks (128 runs). 
Values report mean $\pm$ 95\% CI.}
\label{tab:correlated-shock}
\scriptsize
\begin{tabular}{lccccc}
\toprule
\textbf{Regime} 
& \textbf{Model} 
& \textbf{Ex. Kurtosis$_{\text{mean}}$$\uparrow$} 
& \textbf{Skewness$_{\text{mean}}$$\downarrow$} 
& \textbf{Sharpe Disp.$_{\text{mean}}$$\uparrow$} 
& \textbf{Vol-of-Vol$_{\text{mean}}$$\uparrow$} \\
\midrule
Bullish 
& FinEvo 
& $3.163 \pm 0.138$
& $-1.532 \pm 0.129$
& $1.669 \pm 0.061$ 
& $0.008 \pm 0.0004$ \\
& FinEvo (Correlated) 
& $5.278 \pm 0.151$
& $-1.632 \pm 0.133$
& $1.727 \pm 0.075$
& $0.011 \pm 0.0007$ \\
\midrule
Bearish 
& FinEvo 
& $4.259 \pm 0.172$
& $-1.793 \pm 0.149$
& $1.732 \pm 0.089$
& $0.013 \pm 0.0007$ \\
& FinEvo (Correlated) 
& $6.399 \pm 0.214$
& $-1.932 \pm 0.175$
& $1.844 \pm 0.101$
& $0.015 \pm 0.0009$ \\
\bottomrule
\end{tabular}
\end{table}

\subsection{Cross-Asset Robustness}
\label{app:cross-asset}

\paragraph{Setup.}

FinEvo is designed to be asset-agnostic: the evolutionary SDE and CDA mechanism do not rely on any asset-specific assumption.
To examine robustness across heterogeneous markets, we instantiate the same FinEvo ecology under four widely studied equities---AAPL, TSLA, MSFT, and XOM.
Across all assets, the evolutionary parameters $(\beta, \mu, \gamma)$, agent library, value-learning dynamics, and market microstructure remain \emph{identical}.
The only asset-dependent components are the empirical volatility scale and the asset-specific news schedule used to parameterize exogenous shocks.
Thus, each asset represents the \emph{same evolutionary system under different empirical environments}, rather than a separately calibrated model.

\paragraph{Results.}

Table~\ref{tab:cross-asset} reports the mean and 95\% confidence intervals of four stylized-fact metrics of strategy PnL distributions.
Across all assets, FinEvo consistently produces positive excess kurtosis (fat tails), negative skewness (downside asymmetry), elevated Vol-of-Vol (volatility clustering), and substantial Sharpe-ratio dispersion (heterogeneous strategy performance).
The magnitudes also vary in economically intuitive ways: TSLA exhibits the strongest heavy tails, skewness, Sharpe-dispersion, and Vol-of-Vol; AAPL lies in the intermediate range; while MSFT and XOM display more moderate tails and lower regime-switching intensity.
These findings demonstrate that FinEvo’s evolutionary dynamics are robust across heterogeneous assets without any per-asset tuning. All assets are evaluated using the same empirical bear-market period as in Section 3.3 (January–October 2022), so that cross-asset differences arise only from volatility scales and news patterns rather than different market regimes.

\begin{table}[h!]
    \centering
    \caption{Cross-asset robustness of FinEvo. 
    Metrics are reported as mean $\pm$ 95\% CI over 128 runs.}
    \label{tab:cross-asset}
    \scriptsize
    \begin{tabular}{lcccc}
        \toprule
        \textbf{Asset} & \textbf{Ex. Kurtosis} & \textbf{Skewness} 
                       & \textbf{Sharpe Disp.} & \textbf{Vol-of-Vol} \\
        \midrule
        AAPL & $4.354 \pm 0.41$ & $-1.656 \pm 0.18$ & $1.633 \pm 0.12$ & $0.011 \pm 0.0015$ \\
        TSLA & $\mathbf{5.578 \pm 0.53}$ & $\mathbf{-2.344 \pm 0.24}$ & $\mathbf{1.975 \pm 0.18}$ & $\mathbf{0.015 \pm 0.0023}$ \\
        MSFT & $3.176 \pm 0.35$ & $-0.967 \pm 0.15$ & $1.158 \pm 0.09$ & $0.008 \pm 0.0016$ \\
        XOM  & $3.741 \pm 0.38$ & $-0.573 \pm 0.12$ & $1.392 \pm 0.17$ & $0.008 \pm 0.0011$ \\
        \bottomrule
    \end{tabular}
\end{table}
In our experiments, the same FinEvo configuration produces qualitatively similar eco-
logical signatures across AAPL, TSLA, MSFT, and XOM (Table\ref{tab:cross-asset}), suggesting that the frame-
work is not tightly tied to any single underlying asset. While we only test equities here, the evolu-
tionary SDE and CDA mechanism do not rely on equity-specific assumptions, so in principle the
same formalism could be instantiated for other electronically traded markets (e.g., futures, FX, cryp-
tocurrencies) with appropriate data and calibration.

\subsection{Experiment Configuration}
\label{app:configuration}

\subsubsection*{Price Matching Mechanism (Continuous Double Auction)}
\label{app:matching}

\paragraph{Order book and priority.}
We adopt a continuous double auction with price--time priority. 
At time $t$, let the best bid/ask be $b_t$ and $a_t$ with spread $s_t=a_t-b_t$. 
Depth at level $\ell$ on the bid/ask sides are $B_t(\ell),A_t(\ell)$, tick size $\delta p$.
Orders are queued FIFO within price.

\paragraph{Order types.}
Agents submit \emph{limit orders} $(p,q)$ (positive $q$ for buy, negative for sell) 
or \emph{market orders} ($p=\pm\infty$). A limit order is \emph{marketable} if 
$p\ge a_t$ (buy) or $p\le b_t$ (sell).

\paragraph{Matching rule and fills.}
Incoming marketable quantity $q>0$ (buy) consumes the ask ladder 
$\{(A_t(1),a_t),(A_t(2),a_t+\delta p),\dots\}$ until filled or book exhausted.
If the order walks $L$ levels with per-level trades $v_\ell$ at prices $p_\ell$, 
the volume-weighted execution price (VWAP) is
\[
\overline{p}_t \;=\; \frac{\sum_{\ell=1}^{L} v_\ell\, p_\ell}{\sum_{\ell=1}^{L} v_\ell},
\qquad \sum_{\ell=1}^{L} v_\ell \;=\; q_{\mathrm{filled}}\le q.
\]
Any unfilled remainder $\Delta q=q-q_{\mathrm{filled}}$ (for a limit order) is posted at its limit price $p$ on the respective side with timestamp $t$. 
The \emph{last trade price} $P_t$ is set to the price of the final match within the event; the book then updates depths and best quotes.

\paragraph{Mid/micro-price and slippage.}
Define mid-price $m_t=\frac{a_t+b_t}{2}$ and a depth-weighted micro-price
$\tilde m_t=\frac{A_t(1)\,b_t+B_t(1)\,a_t}{A_t(1)+B_t(1)}$.
For a buy, the instantaneous slippage (vs. pre-trade mid) is
\[
\mathrm{Slip}_t^{\mathrm{buy}} \;=\; \overline{p}_t - m_{t^-}, 
\qquad
\mathrm{Slip}_t^{\mathrm{sell}} \;=\; m_{t^-} - \overline{p}_t .
\]

\paragraph{Trading frictions (costs).}
We account for realistic frictions via a per-trade proportional cost $c=0.5\%$ (including slippage/impact setting). 
For a trade of notional $N_t = |\overline{p}_t\, q_{\mathrm{filled}}|$, the fee component is $c\,N_t$.

\paragraph{Cash, inventory, and mark-to-market.}
For agent $k$, let inventory be $x_{k,t}$ and cash be $C_{k,t}$. 
A filled \emph{buy} of size $q>0$ at $\overline{p}_t$ updates
\[
C_{k,t} \leftarrow C_{k,t} - \overline{p}_t\, q - c\,|\overline{p}_t q|,
\qquad
x_{k,t} \leftarrow x_{k,t^-} + q.
\]
A \emph{sell} symmetrically yields 
$C_{k,t} \leftarrow C_{k,t} + \overline{p}_t\, |q| - c\,|\overline{p}_t q|,\ 
x_{k,t} \leftarrow x_{k,t^-} - |q|$.
Mark-to-market portfolio value at event-time $t$ is
\[
V_{k,t} \;=\; C_{k,t} + x_{k,t}\, m_t .
\]

\paragraph{PnL and return.}
Incremental PnL is $\Delta \mathrm{PnL}_{k,t} = V_{k,t} - V_{k,t^-}$. 
Cumulative return over a session is
\[
\mathrm{Ret}_k \;=\; \frac{V_{k,T}-V_{k,0}}{V_{k,0}}.
\]
We report $\mathrm{Ret}_k$ with a 95\% confidence interval (CI) across runs:
$\mathrm{CI}_{95} = \bar r \pm 1.96\, \hat\sigma/\sqrt{n}$.

\paragraph{Sharpe, drawdown, turnover.}
Using event- or bar-level returns $\{r_{k,t}\}$, the (annualized) Sharpe is 
\[
\mathrm{Sharpe}_k \;=\; \frac{\mu(r_{k,t})}{\sigma(r_{k,t})}\,\sqrt{A},
\]
with $A$ the annualization factor (e.g., bars-per-year). 
Max drawdown is 
$\mathrm{MDD}_k = \max_{t\le u}\bigl(1-\frac{V_{k,u}}{\max_{s\le t} V_{k,s}}\bigr)$.
Turnover over the session is measured as traded notional over average equity:
\[
\mathrm{Turnover}_k \;=\; 
\frac{\sum_t |\overline{p}_t\, q_{k,t}|}{\frac{1}{T}\sum_{t} V_{k,t}}.
\]

\paragraph{Interpretation.}
\begin{itemize}
\item \emph{Price--time priority} ensures fair queueing: better prices fill first; within a price level, earlier timestamps precede later ones.
\item \emph{VWAP execution} captures multi-level consumption and embeds microstructure slippage relative to $m_{t^-}$.
\item \emph{0.5\% cost per trade} penalizes overtrading and brings profitability closer to realistic conditions.
\item \emph{Mark-to-market PnL} at mid links inventory risk to contemporaneous liquidity via $m_t$ (or $\tilde m_t$ if desired).
\item \emph{Sharpe/MDD/Turnover} jointly describe risk-adjusted performance, path risk, and trading intensity, complementing raw returns.
\end{itemize}
\subsubsection{Experiment Setup}
\label{app:exp_setup}

All experiments are conducted using the simulation market engine, 
which provides a continuous double auction market with realistic order book dynamics 
(see Appendix~\ref{app:matching} for details of the price matching mechanism). 
We extend the environment with three key components to capture eco-evolutionary dynamics: 

\begin{enumerate}
    \item \textbf{News Release Line.}  
    External information shocks are injected via a news-release channel that 
    delivers common signals to all agents at scheduled times. 
    Shocks can be positive or negative and directly affect agent beliefs and trading aggressiveness.  

    \item \textbf{Multi-Metric Distribution Mechanism.}  
    Beyond simple price updates, the simulation tracks and redistributes 
    capital based on multiple metrics including realized PnL, 
    volatility-adjusted returns, and strategy diversity. 
    This allows performance pressure, innovation, and environmental perturbation 
    to jointly influence ecological adaptation.  

    \item \textbf{Evolutionary Rules.}  
    We implement the replicator--mutator dynamics introduced in Section~2.  
    At each epoch (set to 5 minutes of simulated time), 
    agent population shares are updated according to 
    selection strength $\beta$, mutation rate $\mu$, and perturbation intensity $\sigma$.  
    Capital is then reallocated consistently with these updated proportions, 
    ensuring both ecological realism and conservation of market value.  
\end{enumerate}

Unless otherwise specified, we run each configuration with 128 Monte Carlo seeds
with a 8-hour trading session per day.
Transaction costs (including slippage) are set to $0.5\%$ per trade. 
All agents start with equal initial capital and interact in the same synthetic market environment.
\subsubsection{Agent Parameters}
For clarity, we summarize the design of the infrastructure agents used in our experiments.  
Since the trading logic of each agent type has already been detailed in 
Appendix~\ref{app:agent_libarary}, here we only describe their functional roles 
and key configuration parameters.
\label{app:parameter}
\paragraph{DataCenterAgent.}  
Acts as the unique global data hub and indicator engine.  
It collects trades and liquidity snapshots from the exchange, computes a wide set 
of microstructure and technical indicators (e.g., bid--ask spread, order book imbalance, 
RSI, MACD, ATR, Bollinger Bands), and maintains historical records via bounded 
deques to ensure memory efficiency.  
Key parameters include:  
\begin{itemize}
    \item \textit{wakeup frequency:} 5s (sampling interval for updates).  
    \item \textit{max history length:} 200 ticks.  
    \item \textit{minimum history for service:} 30 ticks.  
    \item \textit{output:} indicator\_history.pkl and liquidity\_history.pkl.  
\end{itemize}

\paragraph{EventBusAgent.}  
Implements a centralized publish/subscribe hub.  
It manages subscriptions from other agents and forwards broadcast events (e.g., news).  
The design is lightweight: subscribers are stored in a set, ensuring de-duplication, 
and events are distributed in real time to all registered agents.

\paragraph{NewsAgent.}  
Publishes exogenous events (e.g., news shocks) according to a pre-specified schedule.  
At each scheduled time, the event is broadcast via the EventBusAgent.  
The schedule is defined as a sorted list of $(t, \text{event})$ pairs, 
ensuring reproducibility and precise timing of information shocks.
\paragraph{NoiseTraderAgent.}  
A baseline stochastic trader that injects random liquidity into the market.  
At each wakeup, it places either a market order (immediate execution) or a 
limit order (near the mid-price with small random offsets), each with 
50\% probability.  
\textbf{Key parameters:} traded symbol , random order size 
(10--100 shares), wakeup interval (55--65 seconds, uniformly drawn).

\paragraph{MarketMakerAgent.}  
Provides liquidity symmetrically on both sides of the order book.  
At each wakeup, it cancels its resting orders and posts new ones across 
multiple price levels, ensuring spread coverage.  
It can operate in two modes:  
(i) polling mode (places orders around mid-price using current spread),  
(ii) subscription mode (subscribes to order book updates and allocates 
volumes according to a predefined level-quote distribution).  
\textbf{Key parameters:} initial cash = 10M, order size range = [10, 50], 
wakeup frequency = 5s, number of levels = 1--5, order size split 
follows DEFAULT\_LEVELS\_QUOTE\_DICT.

\paragraph{NewsReactionAgent.}  
An event-driven trader that reacts to exogenous news released by the 
NewsAgent via the EventBus.  
Upon receiving a broadcast, it submits a market order consistent with 
the sentiment (buy if \texttt{POSITIVE/UPGRADE}, sell if 
\texttt{NEGATIVE/DOWNGRADE}).  
\textbf{Key parameters:} initial cash = 1M, trade size = 10 shares per reaction, 
wakeup frequency = 10s (used only for subscription maintenance).

\paragraph{MomentumAgent.}  
A trend-following trader based on moving average crossovers.  
It buys when the short-term average exceeds the long-term average, and exits 
when the reverse occurs.  
\textbf{Key parameters:} initial cash = 1M, short window = 5, long window = 20, 
trade size = 10, wakeup interval = 55--65s.

\paragraph{RSIAgent.}  
A mean-reversion trader using the Relative Strength Index (RSI).  
It buys when RSI drops below the oversold threshold and sells when it 
exceeds the overbought threshold.  
\textbf{Key parameters:} initial cash = 1M, RSI window = 14, thresholds = [30, 70], 
trade size = 10, wakeup interval = 55--65s.

\paragraph{BreakoutAgent.}  
A channel breakout trend-follower.  
It buys when the current price exceeds the recent maximum and sells when 
it falls below the recent minimum.  
\textbf{Key parameters:} initial cash = 1M, lookback window = 50, 
trade size = 10, wakeup interval = 55--65s.

\paragraph{InformedTraderAgent (Base).}  
This family of agents trades on deviations between market price and 
fundamental value as observed from the oracle.  
If deviation exceeds a threshold, subclasses determine execution style.  
\textbf{Key parameters:} initial cash = 1M, trade size = 10, 
deviation threshold = 2\%, wakeup interval = 2--3 min.

\begin{itemize}
    \item \textbf{Conservative Informed.} Places passive limit orders when value deviations are large.  
    \item \textbf{Aggressive Informed.} Executes immediate market orders upon detecting deviations.  
    \item \textbf{Opportunistic Informed.} Adapts trade size to available liquidity (e.g., larger buys if ask depth is thin).  
    \item \textbf{Scaled Informed.} Scales order size proportionally with deviation magnitude (up to 3x baseline).  
    \item \textbf{Spread-Aware Informed.} Chooses between limit and market orders depending on the bid–ask spread width.  
\end{itemize}
\paragraph{DoubleDQNTradingAgent.}  
An online reinforcement learning trader that implements Double DQN with experience replay.  
The state space is discretized via RSI, order book imbalance (OBI), and bid–ask spread bins.  
The agent learns Q-values for three actions (HOLD, BUY, SELL) with periodic target network updates.  
\textbf{Key parameters:} learning rate = $10^{-3}$, discount = 0.9, $\epsilon=0.1$, 
buffer size = 1000, batch = 32, target update = 50 steps, trade size = 100, wakeup interval = 2–3 min.

\paragraph{LLM Agents (CoT, ReAct, Trading).} 
Our generic LLM-based agents rely on structured prompting to integrate technical indicators, news events, and portfolio states.
They differ in reasoning style: \emph{CoT} uses step-by-step deliberation, \emph{ReAct} interleaves reasoning with action, and \emph{Trading} employs a hybrid evidence-aggregation workflow.
When a news event arrives, the agents immediately request indicators and issue a trading decision; in the absence of news, they wake up on randomized intervals (5--15 minutes) to avoid synchronization artifacts.
Trade sizes are discretized into multiples of 10 units, ensuring reproducibility and comparability across agents.
These design choices keep agents responsive to exogenous shocks while preserving heterogeneous but tractable behavior, and robustness checks confirm stability under alternative sampling intervals and trade granularities.

\paragraph{iTransformerTradingAgent.}  
A supervised-learning trader using the \emph{iTransformer} architecture.  
It predicts short-term price direction (down, neutral, up) from streaming indicators, 
with labels generated via delayed price changes.  
The model is fine-tuned online using a replay buffer.  
\textbf{Key parameters:} input features = \{RSI, MACD, spread, OBI, position\}, 
output = 3 actions, buffer size = 1000, batch = 32, train frequency = 10, 
trade size = 10, wakeup interval = 2–3 min.
\paragraph{InformerTradingAgent.}  
A supervised agent based on the \emph{Informer} architecture for time-series forecasting.  
It takes streaming indicators from the DataCenter and predicts future price directions (down, flat, up).  
Predictions are used to place trades, while labels are generated ex-post using realized returns, 
enabling online fine-tuning with a replay buffer.  
\textbf{Key parameters:} input = \{RSI, MACD, spread, OBI, position\}, 
prediction horizon = 10 min, buffer = 1000, batch = 32, train freq = 10, 
trade size = 10, wakeup interval = 2–3 min.

\paragraph{TimeMixerTradingAgent.}  
A supervised agent using the \emph{TimeMixer} sequence model, similarly trained online.  
It processes technical features and position state, predicts directional moves, 
executes trades, and fine-tunes periodically.  
\textbf{Key parameters:} input = \{RSI, MACD, spread, OBI, position\}, 
hidden dim = 64, 2 layers, buffer = 1000, batch = 32, train freq = 10, 
trade size = 10, wakeup interval = 2–3 min.

\paragraph{RL Agents (A2C, PPO).} 
Our reinforcement learning agents adopt canonical hyperparameters commonly used in the financial RL literature. 
For instance, the discount factor is fixed at $\gamma=0.99$, entropy coefficient $0.01$, and value coefficient $0.5$, 
which follow standard practice in continuous control benchmarks. 
Training updates are based on short $n$-step trajectories ($n=20$), balancing responsiveness and stability under noisy market feedback. 
Action space is deliberately simplified to $\{\text{Buy}, \text{Sell}, \text{Hold}\}$ with fixed trade size (10 units), 
ensuring comparability across agents. 
We further verified that moderate perturbations of these hyperparameters do not materially change the qualitative dynamics.

\paragraph{LLM FinCon (Risk-Aware).} 
In addition to evidence-based reasoning, FinCon incorporates a lightweight risk module. 
We approximate Conditional Value-at-Risk (CVaR) using the last 50 realized returns 
and flag a risk alert when estimated CVaR falls below $-2\%$. 
This threshold is chosen empirically as a conservative proxy for downside fragility. 
To avoid overfitting to a single specification, we verified robustness across alternative window sizes (20--100) 
and thresholds ($-1\%$ to $-3\%$), which yield consistent qualitative patterns. 
When a risk alert is active, trade sizes are scaled down to reflect more cautious behavior.

\paragraph{Population Manager.} 
The ecological dynamics are governed by a stochastic replicator--mutator mechanism. 
We adopt parameters within standard ranges of evolutionary game theory: 
selection strength $\beta=2.0$, mutation rate $\mu=0.05$, and environmental perturbation $\sigma=0.1$. 
Strategy-specific Ornstein–Uhlenbeck parameters $(\lambda, \nu)$ are used to smooth empirical payoffs, 
analogous to mean-reverting fitness dynamics in ecological modeling. 
The Dirichlet prior $\alpha=10$ encourages moderate baseline diversity. 
Although these values are not calibrated to a specific market, 
qualitative behaviors---such as coexistence, dominance cycles, and extinction events---are robust under 
moderate perturbations of $(\beta, \mu, \sigma, \alpha)$, as confirmed by sensitivity checks.

\subsubsection{Analysis Metrics}
\label{app:metrics}

Our evaluation framework adopts a micro–meso–macro structure, with metrics defined as follows.

\paragraph{Micro (Individual Performance and Diversity).}
We track each agent’s financial performance and survival:
\begin{itemize}
    \item \textbf{Return}: $\text{Return}_i = \tfrac{V_i^{\text{final}} - V_i^{\text{init}}}{V_i^{\text{init}}}$, with $95\%$ confidence intervals via bootstrapping.
    \item \textbf{Sharpe ratio}: $\text{Sharpe}_i = \frac{\mathbb{E}[r_{i,t}]}{\sigma(r_{i,t})}$.
    \item \textbf{Maximum drawdown}: $\text{MaxDD}_i = \max_{t<u} \frac{P_{i,t}-P_{i,u}}{P_{i,t}}$.
    \item \textbf{Turnover}: $\text{Turnover}_i = \frac{\sum_t |q^{buy}_{i,t}+q^{sell}_{i,t}|}{\sum_t q^{pos}_{i,t}}$, measuring trading intensity.
    \item \textbf{Win rate}: fraction of runs in which an agent achieves positive return or high composite score.
\end{itemize}

\paragraph{Meso (Population Diversity and Interaction Structures).}
We evaluate intermediate-level ecological structure:
\begin{itemize}
    \item \textbf{Concentration (HHI)}: $\text{HHI} = \sum_{k} x_k^2$, where $x_k$ is the population share of strategy $k$.
    \item \textbf{Strategy entropy}: Based on the distribution of discrete agent actions 
    $\mathcal{A} = \{\text{Market\_buy}, \text{Limit\_buy}, \text{hold}, \text{Limit\_sell}, \text{Market\_sell}\}$, 
    we compute at each time step $t$ the proportion $p_i(t)$ of agents in each action state. 
    The Shannon entropy is then
    \[
    H(p(t)) = - \sum_{i=1}^{5} p_i(t)\,\log_2 p_i(t).
    \]
    High $H$ indicates diverse and disordered activity, while low $H$ indicates homogenization (e.g., herding on a single action).
    \item \textbf{Modularity}: graph-based clustering strength in the correlation network of strategies.
    \item \textbf{Synergy vs.\ antagonism}: average signed correlation between strategy pairs, visualized as heatmaps.
    \item \textbf{Co-occurrence and mutual information}: frequency of joint dominance and dependency structures between strategies, reflecting alliance formation and reconfiguration.
\end{itemize}

\paragraph{Macro (System Volatility and Regime Shifts).}
We decompose aggregate fluctuations into three components:
\[
\text{Var}(\Delta x_k) = V_{\text{selection}} + V_{\text{innovation}} + V_{\text{perturbation}},
\]
corresponding to performance-driven selection pressure, innovation and social mixing, and environmental perturbations, respectively.  
We further quantify regime instability via the number of \textbf{phase changes} (frequency of dominant-strategy transitions).

\paragraph{Phase Change Metric}
\label{app:phase-change}

We provide the formal definition of the \emph{Phase Change} metric used in the empirical analyses.

Let 
\[
X_t = (x_{1,t}, \ldots, x_{K,t})
\]
denote the population share vector over the $K$ strategy classes at time $t$.  
Define the set of dominant strategies as
\[
\mathcal{D}(t) = \Bigl\{ k \;:\; x_{k,t} = \max_j x_{j,t} \Bigr\}.
\]

\paragraph{Definition.}
A \emph{Phase Change} occurs at time $t$ whenever the identity of the dominant strategy set changes:
\[
\boxed{
\text{PhaseChange}(t) = 1 
\quad \text{iff} \quad
\mathcal{D}(t) \neq \mathcal{D}(t^-)
}
\]
and $\text{PhaseChange}(t)=0$ otherwise.
\paragraph{Distributional Statistics (System-Level).}
Let $\{r_t\}_{t=1}^T$ denote the system-level return series (e.g., aggregate market return).  
We report four key distributional metrics that are standard in the stylized-fact literature.

\begin{itemize}
    \item \textbf{Excess kurtosis (tail heaviness).}
    Financial returns are empirically leptokurtic and strongly non-Gaussian:
    their distributions exhibit more mass in the tails and around the mean
    than a normal distribution with the same variance
    \citep{engle1982autoregressive,bollerslev1986generalized,engle1993measuring}.
    Excess kurtosis measures how heavy the tails are relative to the Gaussian benchmark.
    \[
    \text{ExKurt}(r)
    = \frac{1}{T} \sum_{t=1}^T 
      \left( \frac{r_t - \mu_r}{\sigma_r} \right)^4 - 3,
    \quad 
    \mu_r = \tfrac{1}{T}\sum r_t,
    \quad
    \sigma_r^2 = \tfrac{1}{T}\sum (r_t - \mu_r)^2.
    \]
    A Gaussian distribution has $\text{ExKurt}=0$; positive values indicate
    fat tails (large shocks occur more often than under a normal model), which
    correspond to higher crash risk and jump-like behavior in returns.

    \item \textbf{Skewness (asymmetry of the return distribution).}
    Skewness captures whether large positive moves and large negative moves
    are equally likely. Many equity markets exhibit negative skewness,
    reflecting a higher probability of large downside moves (crashes) than
    large upside jumps \citep{barberis1998model}.
    \[
    \text{Skew}(r)
    = \frac{1}{T} \sum_{t=1}^T 
      \left( \frac{r_t - \mu_r}{\sigma_r} \right)^3.
    \]
    Values close to $0$ correspond to symmetric returns; 
    $\text{Skew}(r) < 0$ indicates a heavy left tail (downside risk dominates),
    while $\text{Skew}(r) > 0$ indicates a heavy right tail.

    \item \textbf{Sharpe dispersion (heterogeneity of strategic performance).}
    While excess kurtosis and skewness are computed at the market level,
    Sharpe dispersion measures how differently individual strategy classes
    perform within the same environment.
    Let $S_k$ denote the Sharpe ratio of strategy class $k$, and 
    $\bar{S} = \tfrac{1}{K} \sum_{k=1}^K S_k$ their cross-sectional mean. Then
    \[
    \text{SharpeDisp}
    = \sqrt{\frac{1}{K} \sum_{k=1}^K (S_k - \bar{S})^2 }.
    \]
    A low SharpeDisp means most strategies earn similar risk-adjusted returns
    (the environment does not strongly discriminate between them), whereas a
    high SharpeDisp indicates that the ecosystem sharply separates robust
    “winners’’ from fragile “losers’’ and thus provides a stringent stress test
    for trading rules.

    \item \textbf{Vol-of-Vol (instability and clustering of volatility).}
    Volatility in financial markets is time-varying and exhibits persistent
    clustering: periods of high volatility tend to be followed by high
    volatility, and calm periods by calm periods
    \citep{andersen2003modeling}. To capture this, we first construct a
    rolling-window estimator of volatility (e.g., with window size $W$):
    \[
    \sigma_t 
    = \sqrt{ \frac{1}{W} \sum_{i=t-W+1}^{t} (r_i - \mu_t)^2 },
    \quad
    \mu_t = \tfrac{1}{W} \sum_{i=t-W+1}^{t} r_i.
    \]
    We then measure the standard deviation of this volatility process:
    \[
    \text{VoV} =
    \sqrt{ \frac{1}{T-W} \sum_{t=W}^T (\sigma_t - \bar{\sigma})^2 },
    \qquad
    \bar{\sigma} = \tfrac{1}{T-W} \sum_{t=W}^T \sigma_t.
    \]
    Larger VoV indicates more unstable volatility and stronger volatility
    clustering (frequent transitions between calm and turbulent regimes),
    whereas smaller VoV corresponds to a more stationary, homoskedastic market.
\end{itemize}

In heterogeneous-agent and agent-based models, the accepted evaluation philosophy is not to fit price trajectories, but to examine whether the model endogenously generates the Stylized Facts observed in real markets\citep{lux1999scaling,lebaron2006agent,hommes2006heterogeneous}.
 Theoretical and microstructure studies further emphasize that these non-Gaussian features arise from interaction among market participants, not exogenous noise or curve fitting \citep{bouchaud2009markets}.
Thus, reproducing Stylized Facts is the most meaningful and widely accepted measure of “closeness” to real financial behavior.
\paragraph{Economic interpretation.}
This metric captures a structural reorganization of the market ecology rather than a statistical effect.  
It is distinct from (i) volatility clustering, which is a price-level phenomenon, and (ii) strategy regime switching, which typically refers to a micro-level change within a single agent.  
A Phase Change reflects a macro-level transition in the ecosystem equilibrium, driven by the evolutionary forces of the FinEvo SDE (selection, innovation, perturbation).

\paragraph{Trading indicators observed by agents.}
To ensure realism, agents condition their actions on standard microstructure and technical signals.
\emph{Liquidity and flow.} Bid–ask spread, order-book imbalance (OBI), market depth (buy/sell), and tick-rule trade flow imbalance (TFI). 
\emph{Momentum and oscillators.} ROC, RSI, MACD (line, signal, histogram), Aroon (up/down), KDJ, and Williams-\%R. 
\emph{Volatility and OHLC-based.} ATR, Bollinger Bands (upper/middle/lower), and ADX/DMI (ADX, $+$DI, $-$DI).  
These features ground agent behavior in market observables rather than oracle information.

\medskip
Overall, this micro–meso–macro design provides complementary views: profitability and diversity of individuals, relational structures within populations, and systemic variance decomposition at the ecology level. Together they enable both fine-grained and aggregate analysis of evolving market games.

\subsubsection{Artificial Shocks}
\label{app:shock}
We inject controlled exogenous shocks by simultaneously perturbing (i) the news stream
processed by event-driven/LLM agents and (ii) the latent fundamental-value process observed by informational traders, while (iii) tightening microstructure conditions to mimic liquidity stress.
Unless otherwise stated, shocks are scheduled during \textbf{12:00--13:30 (local time)} and we report both positive and negative cases.

\textbf{News channel.}
Let $\lambda_{\text{base}}$ be the baseline Poisson rate for news arrivals and $z_t$ the scalar sentiment payload.
During the shock window $[t_s,t_e]$ we amplify both \emph{frequency} and \emph{magnitude}:
\[
\lambda_{\text{shock}}=\kappa\,\lambda_{\text{base}},\qquad 
z_t \sim \mathcal{N}(\mu_s,\sigma_{\text{news}}^2),\ \ s\in\{+1,-1\},
\]
where $s=+1$ denotes a positive shock ($\mu_{+}>0$) and $s=-1$ a negative shock ($\mu_{-}<0$).
All agents that subscribe to the news bus (event-driven, LLM) ingest these messages in real time; the higher-rate, large-magnitude signals create an immediate directional impulse.
\textbf{Fundamental channel.}
Informational traders observe an oracle-provided \emph{fundamental anchor} \(P_t^\star\),
which we use to maintain a stable link between transaction prices and real-world valuation.
Shocks are injected \emph{on top of} this anchor rather than replacing it.
Let \(P^{\star}_{\text{base}}(t)\) denote the baseline fundamental curve (before shocks).
At the news-shock start \(t_s\) we overlay a deterministic bump aligned with the shock,
so the effective anchor becomes
\[
\tilde P_t^\star \;=\; P^{\star}_{\text{base}}(t)\,\Big(1 + s\,A_f \, e^{-(t-t_s)/\tau_f}\,\mathbf{1}\{t\ge t_s\}\Big),
\]
where \(s\in\{+1,-1\}\) indicates positive/negative shocks, \(A_f\) controls impact size
(\emph{percentage} of the baseline anchor), and \(\tau_f\) sets the decay horizon.
Informational traders then act on mispricing relative to the \emph{effective} anchor
as in our implementation: with
\(\delta_t = (P_t - \tilde P_t^\star)/\tilde P_t^\star\),
they buy when \(\delta_t < -\theta\) and sell when \(\delta_t > \theta\),
thereby transmitting fundamental shocks to order flow while preserving the anchor linkage.:contentReference[oaicite:1]{index=1}

\subsubsection{Data Details}
\label{app:data}
To ground our intraday simulations in realistic conditions, we construct a multi-source financial dataset that integrates heterogeneous information streams across assets, firms, and macroeconomic environments. 
We deliberately cover two contrasting market regimes---the global \emph{bear market} of January--October 2022 and the \emph{bull market} of October 2024--July 2025---to ensure robustness across structurally distinct dynamics. 
All signals are aligned at sub-daily resolution, enabling high-frequency event-driven trading and ecological adaptation.

\paragraph{Market events and news.}
High-frequency event streams are obtained from \textbf{GDELT}, \textbf{Reuters Markets}, and the \textbf{Bloomberg Terminal}, each providing thousands of timestamped market-relevant headlines per trading day at minute-level granularity. 
These data form the primary source of exogenous shocks that drive intraday volatility.


\paragraph{Macroeconomic and policy signals.}
Macroeconomic indicators from \textbf{FRED} (e.g., GDP, CPI, unemployment, Treasury yields) are modeled as scheduled events and injected at their official release timestamps. 
\textbf{Federal Reserve policy documents} (FOMC minutes, statements, and press conferences) are treated as discrete intraday shocks aligned to release time, providing textual signals about monetary stance and forward guidance.

\paragraph{Firm-level and analyst data.}
Corporate earnings reports, financial statements, and \textbf{analyst forecasts} (EPS, target prices, sector outlooks) are included at their official disclosure times. 
Since these variables evolve more slowly, they are interpolated into daily anchors that inform fundamental-based agents.

\paragraph{Integration and routing.}
Each data type is dispatched to specialized analytical agents according to its temporal resolution and structural format: 
high-frequency event streams to intraday news agents, scheduled macro releases to shock-sensitive agents, and low-frequency fundamentals to balance-sheet interpreters. 
This routing ensures that heterogeneous signals are jointly exploited for both event-driven trading and long-horizon adaptation.

\begin{table}[ht]
\centering
\small
\begin{tabular}{l|l|l}
\toprule
\textbf{Source} & \textbf{Content} & \textbf{Resolution} \\
\midrule
GDELT / Reuters / Bloomberg & Market news, corporate events & Minute-level intraday \\
FRED / Macro releases       & GDP, CPI, unemployment, yields & Intraday at release \\
Firm reports / Analyst forecasts & Earnings, targets & Quarterly (daily anchors) \\
Fed / FOMC                  & Policy statements, press conf. & Intraday at release \\
\bottomrule
\end{tabular}
\caption{Data sources and temporal resolution aligned to the trading clock for intraday simulation.}
\end{table}
\subsection{Motivation and Rationale of Simulation Scenarios}
\label{app:scenario-rationale}
We provide here the motivations and empirical grounding of the two scenarios used in Section~3.
\paragraph{Scenario 1: Artificial Shocks (Controlled Mechanism Validation).}
This scenario is designed as a controlled stress test of the FinEvo evolutionary dynamics. By 
injecting exogenous positive and negative shocks into an otherwise clean synthetic market, we can 
isolate how disturbances propagate through the selection, innovation, and perturbation components 
of the evolutionary SDE. This controlled setup allows for transparent tracing of shock 
amplification, ecological transitions, and re-stabilization patterns, which would be difficult to 
interpret if tested only under noisy real-world conditions. Such controlled perturbation tests are 
standard in the study of heterogeneous-agent models and systemic risk.

\paragraph{Scenario 2: Real-World News (Empirical Relevance and Robustness).}
After validating the mechanisms in a controlled environment, we evaluate FinEvo under 
news-driven real-market conditions using high-frequency GDELT/Reuters releases. This tests the 
framework’s ability to remain stable and informative when exposed to unstructured, 
economically meaningful signals that influence short-horizon returns, order flow, and trading 
behavior. This scenario also assesses whether heterogeneous agents---in particular LLM-based 
traders---can extract useful signals from real news and survive evolutionary selection.

\paragraph{Bull vs.\ Bear Market Regimes.}
Within the Real-World News scenario, we further embed two major macro environments: a bear 
phase (January--October 2022) and a bull phase (October 2024--July 2025). These distinct regimes 
allow us to examine how evolutionary equilibria change across volatility and sentiment 
conditions. As shown in Figure~6, the stable ecological configurations differ substantially across 
market regimes, illustrating that FinEvo does not collapse to a single fixed structure, but 
responds to macroeconomic context in a realistic manner.

\paragraph{Summary.}
Together, the controlled Artificial Shocks scenario and the empirically grounded Real-World News 
scenario form a coherent validation path: from mechanism-level stress testing to real-market 
robustness analysis. This design demonstrates that FinEvo is capable of capturing both 
theoretical evolutionary dynamics and realistic market adaptations, providing a comprehensive 
evaluation of the framework.

\subsection{Economic Motivation for News-Driven Synthetic Markets}
\label{app:news-economic-motivation}
This section provides additional clarification on the economic foundations of our 
news-driven experiments and the role of LLM-based agents in FinEvo.

\paragraph{No Structural Bias Toward LLM Dominance.}
FinEvo does not assume, nor structurally enforce, the dominance of LLM-based agents, and their 
performance is not hard-coded through the sentiment--return mapping. As shown in 
Figure~\ref{fig:winrate_bull_bear}, LLM agents are not uniformly superior: in bullish regimes, 
the Informer agent outperforms FinCon and Trading, while in bearish regimes DQN achieves higher 
returns than Trading and CoT. This indicates that relative performance is determined by the 
evolving ecology and market conditions, rather than by any built-in preference for a particular 
model class.

\paragraph{Economic Basis for the Sentiment--Return Mapping.}
The mapping from news sentiment to value/return shocks is grounded in established results in 
market microstructure and behavioral finance, rather than chosen heuristically. In the spirit of 
Kyle's model of information-based trading~\citep{kyle1985continuous}, information shocks affect order flow 
and prices through informed and liquidity traders. Behavioral models show that sentiment and 
investor belief distortions can drive short-horizon predictability and crash risk 
(e.g.,~\citealp{barberis1998model}). Empirically, major news announcements are known to increase 
volatility, jump intensity, and variance dynamics~\citep{engle1993measuring,andersen2003modeling}. In FinEvo, 
news shocks enter by shifting agents' perceived value processes $V_k(t)$, which is consistent with 
these mechanisms: sentiment does not directly dictate prices, but perturbs perceived fundamentals 
and thereby indirectly affects order flow and returns.

\paragraph{Why Use Synthetic Endogenous Price Dynamics.}
We intentionally use synthetic, endogenously generated price series in the news experiments, 
rather than replaying a fixed historical tape. Our goal is to study closed-loop evolutionary 
feedback between heterogeneous strategies and the market, which real historical prices cannot 
capture because the true price path was generated under an unknown and unmodeled ecology. In 
FinEvo, a news shock acts as an exogenous information input, while the resulting price dynamics 
are fully endogenous, jointly determined by all interacting agents (LLM, RL, DL, and rule-based). 
This design enables us to analyze how different strategy populations respond to the same news 
process and how evolutionary selection reshapes the ecology over time.

\subsection{Purpose of the Simulation Section}
\label{app:simulation-purpose}

The goal of Section~3 is to demonstrate that FinEvo provides a general dynamical framework
for studying market-level evolutionary phenomena induced by interacting heterogeneous agents.

The simulations serve three purposes: (1) validating that the proposed evolutionary SDE coupled
with a continuous double auction generates endogenous ecological dynamics (selection,
dominance cycles, phase transitions); (2) illustrating system-wide responses to perturbations
such as sentiment shocks or liquidity disturbances; and (3) enabling controlled experiments
for market design and policy analysis. 

\section{Scientific Investigation vs. Trading-Model Construction}
\label{app:scientific}

In this appendix, we clarify what we mean by ``scientific investigation of markets'' in the context of FinEvo. 
While trading-model construction is one possible use of simulation, FinEvo is designed to support a broader research agenda focused on the emergent and systemic properties of market ecosystems. 
These questions cannot be addressed by optimizing a single agent’s PnL in a replayed historical environment.

\paragraph{Systemic Risk and Stability.}
FinEvo enables the study of how market-wide stability changes when the strategy composition evolves. 
For example, the emergence of a new ``species'' (e.g., LLM-based agents) may alter ecological stability, volatility persistence, or crowding-induced transitions such as flash-crash--like dynamics. 
These phenomena arise from endogenous selection, innovation, and perturbation within the evolutionary SDE and cannot be replicated by training a single agent.

\paragraph{Policy and Regulation Analysis.}
Regulators focus on the behavior of the overall market rather than on individual trading profits. 
FinEvo provides a controlled virtual environment to evaluate how policies---such as transaction taxes, liquidity constraints, or rule changes---affect innovation rates, species survival, and ecological diversity. 
This positions FinEvo as a \emph{policy laboratory}.

\paragraph{Market Design.}
FinEvo also enables the analysis of how auction formats, matching rules, or latency structures influence ecosystem-level coexistence among heterogeneous agents (rule-based, RL, LLM). 
These are structural questions about how the market functions as an adaptive game, rather than how to construct a profitable strategy.

\paragraph{Summary.}
Thus, the primary research use of FinEvo is to study the adaptive, emergent, and systemic dynamics of market ecosystems---a long-standing objective in market-microstructure and financial-economics research.

\section{Future Work: Enhancing Empirical Realism}
\label{app:future}

This appendix outlines several natural extensions to increase the empirical realism of FinEvo. 
Our current contribution focuses on establishing the evolutionary formalism---the FinEvo SDE and its 
selection–innovation–perturbation dynamics---together with a microstructure-aware CDA engine. 
The framework was deliberately designed to support the realism enhancements described below.

\paragraph{L2-Level Microstructure Calibration.}
A key direction is incorporating empirical limit-order-book statistics (e.g., LOBSTER) to calibrate 
depth profiles, spread dynamics, order-arrival processes, and queueing effects. 
FinEvo’s modular separation between the CDA mechanism and evolutionary dynamics makes this integration straightforward.

\paragraph{Hybrid Replay + Endogenous Evolution.}
We plan to introduce a hybrid regime in which external liquidity is replayed from historical LOB data, 
while strategy populations evolve endogenously under the FinEvo SDE. 
This provides both empirical fidelity and evolutionary tractability.

\paragraph{Richer Economic Feedback Mechanisms.}
Additional realism can be achieved by introducing inventory constraints, market-impact feedback, 
transaction taxes, throttling rules, and adaptive market-design modules. 
These components allow FinEvo to be used not only as an evolutionary model but also as a platform 
for evaluating policy interventions and systemic risk.

\paragraph{High-Fidelity Interactive Environment for Agent Research.}
A more realistic market ecology benefits research on RL agents, LLM agents, and hybrid systems. 
An enhanced FinEvo environment provides non-stationary competitive pressure, realistic order-book 
interactions, and emergent collective behaviors that cannot be obtained from isolated backtests.

\paragraph{Scientific Applications.}
Improved realism enables deeper investigation of systemic fragility, shock propagation, 
strategy concentration, and regulatory effects on long-term ecological equilibria. 
This aligns FinEvo with broader market-microstructure and financial-economics research agendas.

\paragraph{Open-Ended Strategy Evolution.}
An additional extension is to allow the strategies themselves to evolve over time, rather than keeping the strategy set fixed. This could be achieved through genetic algorithms, neuroevolution, or policy hybridization mechanisms that mutate or recombine existing strategies to create new ones. While such open-ended evolution substantially enriches the expressive power of the ecosystem, it also leads to a rapidly expanding strategy space that complicates theoretical analysis. In this work, we focus on establishing the ecological game formalism and its population-level dynamics, but the FinEvo framework is fully compatible with incorporating strategy-level evolution in future research.

\paragraph{Network-Based Propagation and Local Diffusion.}
While FinEvo models population-level diffusion through its innovation and perturbation terms---consistent with the 
mean-field formulation of evolutionary game theory )---an important future direction is 
to incorporate micro-level propagation mechanisms based on local interactions. 
Such dynamics include diffusion of strategies through adjacent nodes in an agent network, local imitation on a graph, 
and contagion driven by neighborhood structures. 
Introducing network-based propagation would allow FinEvo to capture spatially heterogeneous diffusion patterns, 
clustered adoption, and localized cascades, complementing the current mean-field evolutionary dynamics. 
The modular design of the FinEvo SDE makes it straightforward to extend the evolutionary mechanism with graph-based 
transition operators or network-weighted innovation kernels, enabling richer models of behavioral contagion and 
microstructural propagation in future work.

\section{Limitations}
\label{app:limitation}
While FinEvo provides a unified ecological game formalism with endogenous market
dynamics, several modeling assumptions introduce limitations that also serve as
directions for future improvement.

\paragraph{Scope of the Evolutionary Mechanism.}
The evolutionary dynamics are formulated at the population level following the
mean-field tradition of evolutionary game theory. This provides analytical
clarity but abstracts away from network-based local propagation. As discussed
in Appendix~\ref{app:future}, the framework can be naturally extended with graph-based
transition operators to model diffusion through adjacent nodes.

\paragraph{External Knowledge of LLM Agents.}
LLM-based agents may have encountered textual descriptions of certain historical events during pretraining. Since FinEvo generates endogenous market
dynamics and the LLMs do not access future prices or internal microstructure, this overlap is un-
likely to materially distort the comparative ecological outcomes we report.

\section{Use of Large Language Models}
Large language models (LLMs), such as ChatGPT, were employed solely for minor language polishing. 
All conceptual development, technical contributions, experiments, and analyses were conducted exclusively by the authors.

\end{document}